\newtheorem{theor}{Theorem}[subsection]
\newtheorem{lem}[theor]{Lemma}
\newtheorem{prop}[theor]{Proposition}
\newtheorem{cor}[theor]{Corollary}
\theoremstyle{definition}
\newtheorem{defin}[theor]{Definition}
\theoremstyle{remark}
\newtheorem*{rem}{Remark}
\newtheorem*{rems}{Remarks}
\newtheorem*{ex}{Example}
\newtheorem*{exs}{Examples}
\newcommand{\iso}{\stackrel{\cong}{\rightarrow}}
\def\R{{\mathbb{R}}}
\def\length{{\rm{length}}}
\def\A{{\mathcal{A}}}
\def\B{{\mathcal{B}}}
\def\deg{{\rm{deg}}}
\begin{document}

\title{Weak morphisms of higher dimensional automata}

\author{Thomas Kahl}

\address{Centro de Matem\'atica,
Universidade do Minho, Campus de Gualtar, \\
4710-057 Braga,
Portugal
}

\email{kahl@math.uminho.pt
}

\thanks{This research has been supported by FEDER funds through ``Programa Operacional Factores de Competitividade - COMPETE'' and by FCT -  \emph{Fundação para a Ciência e a Tecnologia} through projects Est-C/MAT/UI0013/2011 and PTDC/MAT/0938317/2008.}

\subjclass[2010]{68Q45, 68Q85, 55U10}

\keywords{Higher dimensional automata, precubical set, geometric realisation, trace language, preorder relation, abstraction}


\begin{abstract}
We introduce weak morphisms of higher dimensional automata and use them to define preorder relations for HDAs, among which homeomorphic abstraction and trace equivalent abstraction. It is shown that homeomorphic abstraction is essentially always stronger than trace equivalent abstraction. We also define the trace language of an HDA and show that, for a large class of HDAs, it is invariant under trace equivalent abstraction.
\end{abstract}

\maketitle

\section*{Introduction}

One of the most expressive models of concurrency is the one of higher dimensional automata \cite{vanGlabbeek}. A higher dimensional automaton (HDA) over a monoid $M$  is a precubical set with initial and final states and with 1-cubes labelled by elements of $M$ such that opposite edges of 2-cubes have the same label. Intuitively, an HDA can be seen as an automaton with an independence relation represented by cubes. If two actions $a$ and $b$ are enabled in a state $q$ and are independent in the sense that they may be executed in any order or even simultaneously without any observable difference, then the HDA  contains a 2-cube linking the two execution sequences $ab$ and $ba$ beginning in $q$. Similarly, the independence of $n$ actions is represented by $n$-cubes. It has been shown in \cite{vanGlabbeek} that many classical models of concurrency can be translated into the one of HDAs.  HDA semantics for process algebras are given in \cite{GaucherProcess} and \cite{GoubaultJensen}. 

In this paper, we introduce three preorder relations for HDAs. Whenever, as in figure \ref{weakmor}, an HDA $\B$ is a subdivision of an HDA $\A$, then $\A$ is related to $\B$ in each of these preorders. 
\begin{figure}[t]
\subfigure[$\A$]
{ 
\begin{tikzpicture}[initial text={},on grid]

\path[draw, fill=lightgray] (0,0)--(1,0)--(1,1)--(0,1)--cycle;

 \node[state,minimum size=0pt,inner sep =2pt,initial,fill=white] (q_0)   {}; 
    
   \node[state,minimum size=0pt,inner sep =2pt,fill=white] (q_2) [right=of q_0,xshift=0cm] {};
   
   \node[state,minimum size=0pt,inner sep =2pt,fill=white] [above=of q_0, yshift=0cm] (q_3)   {};

   \node[state,minimum size=0pt,inner sep =2pt,accepting,fill=white] (q_5) [right=of q_3,xshift=0cm] {}; 
   
    \path[->] 
    (q_0) edge[above] node {$ab$} (q_2)
    (q_3) edge[above]  node {$ab$} (q_5)
    (q_0) edge[left]  node {$c$} (q_3)
    (q_2) edge[right]  node {$c$} (q_5);

\end{tikzpicture}
}
\hspace{2cm}
\subfigure[$\B$]
{
\begin{tikzpicture}[initial text={},on grid]

\path[draw, fill=lightgray] (0,0)--(2,0)--(2,1)--(0,1)--cycle;

 \node[state,minimum size=0pt,inner sep =2pt,initial,fill=white] (q_0)   {}; 
    
   \node[state,minimum size=0pt,inner sep =2pt,fill=white] (q_1) [right=of q_0,xshift=0cm] {};

\node[state,minimum size=0pt,inner sep =2pt,fill=white] (q_2) [right=of q_1,xshift=0cm] {};   
   
   \node[state,minimum size=0pt,inner sep =2pt,fill=white] [above=of q_0, yshift=0cm] (q_3)   {};
   
   \node[state,minimum size=0pt,inner sep =2pt,fill=white] (q_4) [right=of q_3,xshift=0cm] {};
   
   \node[state,minimum size=0pt,inner sep =2pt,accepting,fill=white] (q_5) [right=of q_4,xshift=0cm] {};

    \path[->] 
    (q_0) edge[above]  node {$a$} (q_1)
    (q_1) edge[above]  node {$b$} (q_2)
    (q_3) edge[above]  node {$a$} (q_4)
    (q_4) edge[above]  node {$b$} (q_5)
    (q_0) edge[left]  node {$c$} (q_3)
    (q_2) edge[right]  node {$c$} (q_5)
    (q_1) edge[left]  node {$c$} (q_4);

\end{tikzpicture}
}
\caption{Two HDAs $\A$ and $\B$ over the free monoid on $\{a,b,c\}$ such that $\B$ is a subdivision of $\A$, $\A \to \B$, $\A \stackrel{\sim_t}{\to} \B$ and $\A \stackrel{\approx}{\to} \B$} \label{weakmor}
\end{figure} 
The definitions of the preorder relations are based on the concept of weak morphism, which is developed in section \ref{weakmorphisms}. Roughly speaking, a weak morphism between two HDAs is a continuous map between their geometric realisations that sends subdivided cubes to subdivided cubes and that preserves labels of paths. A morphism of HDAs, or, more precisely, its geometric realisation, is a weak morphism but not vice versa. For example, in figure \ref{weakmor}, there exists a weak morphism from $\A$ to $\B$, but there does not exist any morphism between the two HDAs. If there exists a weak morphism from an HDA $\A$ to an HDA $\B$, we write $\A \to \B$. The relation $\to$ is the first preorder relation for HDAs we consider in this paper. It has the basic property that $\A \to \B$ implies that the language (the behaviour) of $\A$ is contained in the language of $\B$. One may consider $\to$ as a kind of simulation preorder: If there exists a weak morphism from $\A$ to $\B$, then for every 1-cube in $\A$ with label $\alpha$ from a vertex $v$ to a vertex $w$ there exists a path in $\B$ with label $\alpha$ from the image of $v$ to the image of $w$.

The higher dimensional structure of an HDA induces an independence relation on the monoid of labels and allows us to define the trace language of an HDA in section \ref{HDA}. The fundamental concept in this context is dihomotopy (short for directed homotopy) of paths \cite{FajstrupGR, Goubault}. Besides the trace language of an HDA, we also consider the trace category of an HDA and trace equivalences between HDAs. The trace category of an HDA is a variant of the fundamental bipartite graph of a d-space \cite{BubenikExtremal}. Its objects are certain important states of the HDA, including the initial and the final ones, and the morphisms are the dihomotopy classes of paths between these states. A trace equivalence is essentially defined as a weak morphism that induces an isomorphism of trace categories. If there exists a trace equivalence from an HDA $\A$ to an HDA $\B$, we say that $\A$ is a trace equivalent abstraction of $\B$ and write $\A \stackrel{\sim_t}{\to} \B$. Trace equivalent abstraction is our second preorder relation for HDAs. We show that for a large class of HDAs, $\A \stackrel{\sim_t}{\to} \B$ implies that there exists a bijection between the trace languages of $\A$ and $\B$. 

The third preorder relation is called homeomorphic abstraction and is the subject of section \ref{homeoabs}. We say that an HDA $\A$ is a homeomorphic abstraction of an HDA $\B$ and write $\A \stackrel{\approx}{\to} \B$ if there exists a weak morphism from $\A$ to $\B$ that is a homeomorphism and a bijection on initial and on final states. Homeomorphic abstraction may be seen as a labelled version of T-homotopy equivalence in the sense of \cite{GaucherGoubault}. We show that under a mild condition, $\A \stackrel{\approx}{\to} \B$ implies $\A \stackrel{\sim_t}{\to} \B$.

\section{Precubical sets and HDAs}

This section contains some basic and well-known material on precubical sets and higher dimensional automata. 

\subsection{Precubical sets}

A \emph{precubical set} is a graded set $P = (P_n)_{n \geq 0}$ with  \emph{boundary operators} $d^k_i: P_n \to P_{n-1}$ $(n>0,\;k= 0,1,\; i = 1, \dots, n)$ satisfying the relations $d^k_i\circ d^l_{j}= d^l_{j-1}\circ d^k_i$ $(k,l = 0,1,\; i<j)$ \cite{FahrenbergThesis, Fajstrup, FajstrupGR, GaucherGoubault, Goubault}. The least $n\geq 0$ such that $P_i = \emptyset$ for all $i>n$ is called the \emph{dimension} of $P$. If no such $n$ exists, then the dimension of $P$ is $\infty$. If $x\in P_n$, we say that $x$ is of \emph{degree} $n$ and write $\deg(x) = n$. The elements of degree $n$ are called the \emph{$n$-cubes} of $P$. The elements of degree $0$ are also called the \emph{vertices} or the \emph{nodes} of $P$. A morphism of precubical sets is a morphism of graded sets that is compatible with the boundary operators. 

The category of precubical sets can be seen as the presheaf category of functors $\Box^{\mbox{\tiny op}} \to \bf Set$ where $\Box$ is the small subcategory of the category of topological spaces whose objects are the standard $n$-cubes $[0,1]^n$ $(n \geq 0)$ and whose non-identity morphisms are composites of the maps $\delta^k_i\colon [0,1]^n\to [0,1]^{n+1}$ ($k \in \{0,1\}$, $n \geq 0$, $i \in  \{1, \dots, n+1\}$) given by $\delta_i^k(u_1,\dots, u_n)= (u_1,\dots, u_{i-1},k,u_i \dots, u_n)$. Here, we use the convention that given a topological space $X$, $X^0$ denotes the one-point space $\{()\}$.

\subsection{Precubical subsets}

A \emph{precubical subset} of a precubical set $P$ is a graded subset of $P$ that is stable under the boundary  operators. It is clear that a precubical subset is itself a precubical set. Note that unions and intersections of precubical subsets are precubical subsets and that the image of a morphism  $f\colon P \to Q$ of precubical sets is a precubical subset of $Q$. 

\subsection{Intervals}

Let $k$ and $l$ be integers such that $k \leq l$. The \emph{precubical interval}  $\llbracket k,l \rrbracket$ is the at most $1$-dimensional precubical set defined by $\llbracket k,l \rrbracket_0 = \{k,\dots , l\}$, $\llbracket k,l \rrbracket_1 = \{[k,k+1], \dots , [l-1,l]\}$, $d_1^0[j-1,j] = j-1$ and $d_1^1[j-1,j] = j$.

\subsection{Tensor product} 

Given two graded sets $P$ and $Q$, the \emph{tensor product} $P\otimes Q$ is the graded set defined by $(P\otimes Q)_n = \coprod \limits_{p+q = n} P_p\times Q_q$. If $P$ and $Q$ are precubical sets, then $P\otimes Q$ is a precubical set with respect to the boundary operators given by
$$d_i^k(x,y) = \left\{ \begin{array}{ll} (d_i^kx,y), & 1\leq i \leq \deg(x),\\
(x,d_{i-\deg(x)}^ky), & \deg(x)+1 \leq i \leq \deg(x) + \deg(y)
\end{array}\right.$$ 
(cf. \cite{FahrenbergThesis}). The tensor product turns the categories of graded and precubical sets into monoidal categories.  

The $n$-fold tensor product of a graded or precubical set $P$ is denoted by $P^{\otimes n}$. Here, we use the convention $P^{\otimes 0} = \llbracket 0,0\rrbracket = \{0\}$. The \emph{precubical $n$-cube} is the precubical set $\llbracket 0,1\rrbracket^ {\otimes n}$. The only element of degree $n$ in $\llbracket 0,1\rrbracket^ {\otimes n}$ will be denoted by $\iota_n$. We thus have $\iota_0 = 0$ and $\iota_n = (\underbrace{ [0,1] ,\dots , [ 0,1]}_{n\; { \rm{ times}}})$ for $n>0$.

\subsection{The morphism corresponding to an element} 

Let $x$ be an element of degree $n$ of a precubical set $P$. Then there exists a unique morphism of precubical sets $x_{\sharp}\colon \llbracket 0,1\rrbracket ^{\otimes n}\to P$ such that $x_{\sharp}(\iota_n) = x$. Indeed, 
by the Yoneda lemma, there exist unique morphisms of precubical sets $f\colon \Box(-,[0,1]^n) \to P$ and $g\colon \Box(-,[0,1]^n) \to \llbracket 0,1\rrbracket ^{\otimes n}$ such that $f(id_{[0,1]^n}) = x$ and $g(id_{[0,1]^n}) = \iota_n$. The map $g$ is an isomorphism, and $x_{\sharp} = f\circ g^{-1}$. 
\subsection{Paths} 

A \emph{path of length $k$} in a precubical set $P$ is a morphism of precubical sets $\omega \colon \llbracket 0,k \rrbracket \to P$. The set of paths in $P$ is denoted by $P^{\mathbb I}$. If $\omega \in P^{\mathbb I}$ is a path of length $k$, we write $\length(\omega) = k$. The \emph{concatenation} of two paths $\omega \colon \llbracket 0,k \rrbracket \to P$ and $\nu \colon \llbracket 0,l \rrbracket \to P$ with $\omega (k) = \nu (0)$ is the path $\omega \cdot \nu\colon \llbracket 0,{k+l} \rrbracket \to P$ defined by 
$$\omega \cdot \nu (j) = \left\{\begin{array}{ll} \omega (j), & 0\leq j \leq k,\\ \nu(j-k), & k \leq j \leq k+l \end{array}\right.$$ and $$\omega \cdot \nu ([j-1,j]) = \left\{\begin{array}{ll} \omega ([j-1,j]) & 0< j \leq k,\\ \nu([j-k-1,j-k]) & k < j \leq k+l. \end{array}\right. $$ Clearly, concatenation is associative. Note that for any path $\omega \in P^{\mathbb I}$ of length $k \geq 1$ there exists a unique sequence $(x_1, \dots , x_k)$ of elements of $P_1$ such that $d_1^0x_{j+1} = d_1^1x_j$ for all $1\leq j < k$ and $\omega = x_{1\sharp} \cdots x_{k\sharp}$.

\subsection{Geometric realisation}
 \label{geomreal}  

The \emph{geometric realisation} of a precubical set $P$ is the quotient space $|P|=(\coprod _{n \geq 0} P_n \times [0,1]^n)/\sim$ where the sets $P_n$ are given the discrete topology and the equivalence relation is given by
$$(d^k_ix,u) \sim (x,\delta_i^k(u)), \quad  x \in P_{n+1},\; u\in [0,1]^n,\; i \in  \{1, \dots, n+1\},\; k \in \{0,1\}$$ (see \cite{FahrenbergThesis}, \cite{Fajstrup}, \cite{FajstrupGR},  \cite{GaucherGoubault}, \cite{Goubault}). The geometric realisation of a morphism of precubical sets $f\colon P \to Q$ is the continuous map $|f|\colon |P| \to |Q|$ given by $|f|([x,u])= [f(x),u]$. We remark that the geometric realisation is a functor from the category of precubical sets to the category of topological spaces.

\begin{exs}
(i) The geometric realisation of the precubical $n$-cube can be identified with the standard $n$-cube by means of the homeomorphism $[0,1]^n \to |\llbracket 0,1\rrbracket^ {\otimes n}|$, $u \mapsto [\iota_n,u]$.

(ii) The geometric realisation of the precubical interval $\llbracket k,l \rrbracket$ can be identified with the closed interval $[k,l]$ by means of the homeomorphism $|\llbracket k,l \rrbracket| \to [k,l]$ given by $[j,()] \mapsto j$ and $[[j-1,j],t] \mapsto j-1+t$. Using this correspondence, the geometric realisation of a precubical path $\llbracket 0, k\rrbracket \to P$ can be seen as a path $[0,k] \to |P|$, and under this identification we have that $|\omega \cdot \nu| = |\omega|\cdot |\nu|$. 
\end{exs}

We note that for every element $a \in |P|$ there exist a unique integer $n\geq 0$, a unique element $x \in P_n$ and a unique element $u \in ]0,1[^ n$ such that $a = [x,u]$.

The geometric realisation of a precubical set $P$ is a CW-complex \cite{GaucherGoubault}. The $n$-skeleton of $|P|$ is the geometric realisation of the precubical subset $P_{\leq n}$ of $P$ defined by $(P_{\leq n})_m = P_m$ $(m\leq n)$. The closed $n$-cells of $|P|$ are the spaces $|x_{\sharp}(\llbracket 0,1\rrbracket^{\otimes n})|$ where $x \in P_n$. The characteristic map of the cell $|x_{\sharp}(\llbracket 0,1\rrbracket^{\otimes n})|$ is the map $[0,1]^n \stackrel{\approx}{\to} |\llbracket 0,1\rrbracket ^{\otimes n}| \stackrel{|x_{\sharp}|}{\to} |P|, u \mapsto [x,u]$. The geometric realisation of a precubical subset $Q$ of $P$ is a subcomplex of $|P|$.

The geometric realisation is a comonoidal functor with respect to the natural continuous map $\psi_{P,Q}\colon |P\otimes Q| \to |P| \times |Q|$ given by \begin{eqnarray*}\lefteqn{\psi_{P,Q} ([(x,y),(u_1,\dots ,u_{\deg(x)+\deg(y)})])}\\ &=& ([x,(u_1,\dots , u_{\deg(x)})],[y,(u_{\deg(x)+1}, \dots u_{\deg(x)+\deg(y)}]).\end{eqnarray*} If $P$ and $Q$ are finite, then $\psi_{P,Q}$ is a homeomorphism and permits us to identify $|P\otimes Q|$ with $|P|\times |Q|$. We may thus identify the geometric realisation of a precubical set of the form $\llbracket k_1,l_1\rrbracket \otimes \cdots \otimes \llbracket k_n,l_n\rrbracket$ $(k_i<l_i)$ with the product $[k_1,l_1]\times \cdots \times [k_n,l_n]$ by means of the correspondence $$\left[([i_1,i_1+1], \dots,[i_n,i_n+1]),(u_1,\dots,u_n)\right] \mapsto (i_1+u_1,\dots ,i_n+u_n).$$

\subsection{Higher dimensional automata} \label{HDAdef}

Let $M$ be a monoid. A \emph{higher dimensional automaton over} $M$ (abbreviated $M$-HDA or simply HDA) is a tuple $\A = (P,I,F, \lambda)$ where $P$ is a precubical set, $I \subseteq P_0$ is a set of \emph{initial states}, $F \subseteq P_0$ is a set of \emph{final states}, and $\lambda \colon P_1 \to M$ is a map, called the \emph{labelling function}, such that $\lambda (d_i^0x) = \lambda (d_i^1x)$ for all $x \in P_2$ and $i \in \{1,2\}$. A \emph{morphism} from an $M$-HDA $\A = (P,I, F,\lambda)$ to an $M$-HDA $\B = (P',I', F',\lambda')$ is a morphism of precubical sets  $f\colon P \to P'$  such that $f(I) \subseteq I'$, $f(F) \subseteq F'$ and  $\lambda'(f(x)) = \lambda(x)$ for all $x \in P_1$.

This definition of higher dimensional automata is essentially the same as the one in \cite{vanGlabbeek}. Besides the fact that we consider a monoid and not just a set of labels, the only difference is that in \cite{vanGlabbeek} an HDA is supposed to have exactly one initial state. Note that 1-dimensional $M$-HDAs and morphisms of 1-dimensional $M$-HDAs are the same as automata over $M$ and automata morphisms as defined in \cite{Sakarovitch}.

\subsection{The language accepted by an HDA} \label{LA}
 
Let $\A = (P,I,F, \lambda)$ be an $M$-HDA. The \emph{extended labelling function} of $\A$ is the map $\overline{\lambda} \colon P^{\mathbb I} \to M$ defined as follows: If $\omega = x_{1\sharp} \cdots x_{k\sharp}$ for a sequence $(x_1, \dots , x_k)$ of elements of $P_1$ such that $d_1^0x_{j+1} = d_1^1x_j$ $(1\leq j < k)$, then we set $\overline{\lambda} (\omega) = \lambda (x_1) \cdots \lambda (x_k)$; if $\omega$ is a path of length $0$, then we set  $\overline{\lambda} (\omega ) = 1$. The \emph{language accepted by} $\A$ is the set $$L(\A) = \{\overline{\lambda} (\omega) : \omega \in P^{\mathbb I},\; \omega (0) \in I,\; \omega(\length(\omega)) \in F\}.$$ Note that $L(\A)$ is the the behaviour in the sense of \cite{Sakarovitch} of the 1-skeleton of $\A$, i.e. the $M$-HDA $(P_{\leq 1},I,F, \lambda)$. A fundamental property of the language accepted by an HDA is that one has $L(\A) \subseteq L(\B)$ if there exists a morphism of $M$-HDAs from $\A$ to $\B$ \cite[prpty. II.3.1]{Sakarovitch}.

Recall, for instance from \cite{Diekert} or \cite{Sakarovitch}, that a subset $L$ of a monoid $M$ is called 
\begin{itemize}
\item 
\emph{rational} if it belongs to the smallest subset of the power set $\mathfrak{P}(M)$ that contains all finite subsets of $M$ and is closed under the operations union, product and star; \item \emph{recognisable} if there exist a right action of $M$ on a finite set $S$, an element $s_0\in S$ and a subset $T \subseteq S$ such that $L = \{m\in M: s_0\cdot m \in T\}$. 
\end{itemize} A subset of a monoid $M$ is rational if and only if it is the languague accepted by an $M$-HDA $\A = (P,I,F, \lambda)$ such that $P_0$ and $P_1$ are finite \cite[thm. II.1.1]{Sakarovitch}. By Kleene's theorem, a subset of a finitely generated free monoid is rational if and only if it is recognisable. Consequently, if $M$ is a finitely generated free monoid and $\A = (P,I,F, \lambda)$ is an $M$-HDA such that $P_0$ and $P_1$ are finite, then $L(\A)$ is both a rational and a recognisable subset of $M$.

\section{Weak morphisms} \label{weakmorphisms}

In this section, we introduce weak morphisms of precubical sets and HDAs and use them to define the preorder relation $\to$. We establish the basic properties of weak morphisms and show, in particular, that a weak morphism induces a map of path sets. We also define subdivisions, which provide an important example of weak morphisms. We begin by studying dihomeomorphisms of hyperrectangles, which are central to the concept of weak morphism. 

\subsection{Dihomeomorphisms of hyperrectangles} Let  $$\phi\colon [a_1,b_1]\times \cdots \times [a_n,b_n] \to [c_1,d_1] \times \cdots \times [c_n,d_n] \quad (a_i < b_i,\, c_i < d_i)$$ be a homeomorphism such that $\phi$ and $\phi^{-1}$ preserve the natural partial order of $\R^n$. Such a homeomorphism is called a \emph{dihomeomorphism} (see e.g. \cite{FajstrupGR}).

\begin{prop} \label{minmaxcoord}
Consider an element $x\in [a_1,b_1]\times \cdots \times [a_n,b_n]$. Then $x$ and $\phi(x)$ have the same number of minimal coordinates and the same number of maximal coordinates.
\end{prop}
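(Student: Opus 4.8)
The plan is to characterise the number of minimal coordinates of a point purely in order-theoretic terms, so that it is automatically preserved by $\phi$. Write $\mathbf{a} = (a_1,\dots,a_n)$ and $\mathbf{b}=(b_1,\dots,b_n)$ for the bottom and top of the domain hyperrectangle, and $\mathbf{c},\mathbf{d}$ for those of the codomain. Since both $\phi$ and $\phi^{-1}$ preserve the natural partial order, $\phi$ is an isomorphism of partially ordered sets; in particular, $y\leq x$ if and only if $\phi(y)\leq \phi(x)$.

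For a fixed $x$, I would consider the down-set $D_x = \{y : y\leq x\}$ inside the domain. Because the domain is a product of intervals with bottom $\mathbf{a}$, one has $D_x = [a_1,x_1]\times\cdots\times[a_n,x_n]$. Each factor $[a_i,x_i]$ is a single point exactly when $x_i=a_i$, that is, when the $i$-th coordinate of $x$ is minimal, and is a nondegenerate interval otherwise. Hence $D_x$ is homeomorphic to a cube $[0,1]^{k}$ whose dimension $k$ equals $n$ minus the number of minimal coordinates of $x$.

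Next I would observe that $\phi$ maps $D_x$ homeomorphically onto $D_{\phi(x)} = \{z : z\leq \phi(x)\}$: since $\phi$ is a bijection and an order isomorphism, $\phi(D_x)=D_{\phi(x)}$, and the restriction of the homeomorphism $\phi$ to $D_x$ is again a homeomorphism onto its image. By the computation above, $D_{\phi(x)}=[c_1,\phi(x)_1]\times\cdots\times[c_n,\phi(x)_n]$ is a cube of dimension $n$ minus the number of minimal coordinates of $\phi(x)$. Invoking the invariance of dimension, the two cubes have equal dimension, so $x$ and $\phi(x)$ have the same number of minimal coordinates. The claim about maximal coordinates then follows by the symmetric argument applied to the up-sets $\{y : y\geq x\}=[x_1,b_1]\times\cdots\times[x_n,b_n]$, whose dimension counts the non-maximal coordinates.

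The only nonformal ingredient is the invariance of dimension, used to deduce that a homeomorphism between cubes forces equality of their dimensions; everything else is a direct consequence of $\phi$ being an order isomorphism between products of intervals. The point I expect to require the most care is verifying that $\phi(D_x)$ is exactly the full down-set of $\phi(x)$ in the codomain, and not merely contained in it — this is precisely where the hypothesis that $\phi^{-1}$ also preserves the order is essential.
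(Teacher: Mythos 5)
Your proof is correct and is essentially identical to the paper's: the paper also identifies the down-set $\{u : u\leq x\} = [a_1,x_1]\times\cdots\times[a_n,x_n]$, observes that $\phi$ restricts to a homeomorphism onto the down-set of $\phi(x)$, and concludes by invariance of dimension, with the maximal-coordinate case handled symmetrically. No gaps; your extra care about $\phi(D_x)$ being exactly (not just contained in) $D_{\phi(x)}$ is precisely the role of the hypothesis that $\phi^{-1}$ preserves the order.
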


\begin{proof}
Consider the sets $$A = \{u\in [a_1,b_1]\times \cdots \times [a_n,b_n]\colon u \leq x\} = [a_1,x_1] \times \cdots \times [a_n,x_n]$$
and $$B = \{v\in [c_1,d_1] \times \cdots \times [c_n,d_n]\colon v \leq \phi(x)\} = [c_1,\phi_1(x)] \times \cdots \times [c_n,\phi_n(x)].$$
Then $\phi$ restricts to a homeomorphism from $A$ to $B$. It follows that $A$ and $B$ are spaces of the same dimension. Hence $x$ and $\phi(x)$ have the same number of minimal coordinates. In the same way one shows that $x$ and $\phi(x)$ have the same number of maximal coordinates.
\end{proof}

\begin{prop} \label{orderhomeo}
Consider an index $k \in \{1, \dots, n\}$. Then there exists an index $l\in \{1, \dots, n\}$ such that \begin{eqnarray*}\lefteqn{\phi ([a_1,b_1]\times \cdots \times [a_{k-1},b_{k-1}]\times \{a_k\} \times [a_{k+1},b_{k+1}]\times \cdots \times [a_n,b_n])}\\ &=& [c_1,d_1]\times \cdots \times [c_{l-1},d_{l-1}]\times \{c_l\} \times [c_{l+1},d_{l+1}]\times \cdots \times [c_n,d_n]\end{eqnarray*} and \begin{eqnarray*}\lefteqn{\phi ([a_1,b_1]\times \cdots \times [a_{k-1},b_{k-1}]\times \{b_k\} \times [a_{k+1},b_{k+1}]\times \cdots \times [a_n,b_n])}\\ &=& [c_1,d_1]\times \cdots \times [c_{l-1},d_{l-1}]\times \{d_l\} \times [c_{l+1},d_{l+1}]\times \cdots \times [c_n,d_n].\end{eqnarray*}
\end{prop}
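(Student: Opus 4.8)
The plan is to characterise the codimension-one faces of a hyperrectangle purely order-theoretically, in terms of the number of minimal and maximal coordinates, and then to exploit that $\phi$ preserves these numbers by Proposition \ref{minmaxcoord}. Write $X = [a_1,b_1]\times\cdots\times[a_n,b_n]$ and $Y = [c_1,d_1]\times\cdots\times[c_n,d_n]$. For $k\in\{1,\dots,n\}$ let $F_k^0$ (resp. $F_k^1$) be the face of $X$ obtained by fixing the $k$-th coordinate at $a_k$ (resp. $b_k$), and let $G_l^0, G_l^1$ be the analogous faces of $Y$; these are exactly the faces appearing in the statement. Call a coordinate of a point \emph{minimal} if it equals the corresponding $a_i$ (or $c_i$) and \emph{maximal} if it equals $b_i$ (or $d_i$). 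In this language Proposition \ref{minmaxcoord} says precisely that $\phi$ preserves both the number of minimal and the number of maximal coordinates of a point.

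First I would handle the lower faces. Let $S$ be the set of points of $X$ having exactly one minimal coordinate, and for each $k$ put $U_k = \{x\in X : x_k = a_k,\ x_j > a_j \text{ for } j\neq k\}$; then $S$ is the disjoint union of $U_1,\dots,U_n$. Each $U_k$ is open in $S$, being the intersection of $S$ with the open set $\{x_j > a_j \text{ for all } j\neq k\}$, and is homeomorphic to $\prod_{j\neq k}(a_j,b_j]$, hence connected. As the $U_k$ form a finite partition of $S$ into open connected sets, they are its connected components. Since $\phi$ preserves the number of minimal coordinates, it restricts to a homeomorphism of $S$ onto the corresponding subset $\tilde S\subseteq Y$, and so permutes components: there is a bijection $\tau$ of $\{1,\dots,n\}$ with $\phi(U_k)=\tilde U_{\tau(k)}$. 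Taking closures and using that a homeomorphism commutes with closure gives $\phi(F_k^0)=\phi(\overline{U_k})=\overline{\tilde U_{\tau(k)}}=G_{\tau(k)}^0$, because $\overline{U_k}=F_k^0$ and $\overline{\tilde U_l}=G_l^0$. The argument symmetric in maximal coordinates yields a bijection $\tau'$ with $\phi(F_k^1)=G_{\tau'(k)}^1$.

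It then remains to prove $\tau=\tau'$, which is the one genuinely nontrivial point and the step I expect to be the main obstacle: a priori the lower face in direction $k$ and the upper face in direction $k$ could be sent to faces in different directions. The key observation is that opposite faces never meet. Indeed $F_k^0\cap F_k^1=\emptyset$ since $a_k<b_k$, whereas in $Y$ one has $G_l^0\cap G_{l'}^1=\emptyset$ exactly when $l=l'$ (for $l\neq l'$ the two faces obviously share points). As $\phi$ is a bijection and $F_k^0\cap F_k^1=\emptyset$, the images $G_{\tau(k)}^0$ and $G_{\tau'(k)}^1$ are disjoint, which forces $\tau(k)=\tau'(k)$. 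Setting $l=\tau(k)$ then establishes both displayed equalities at once.

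The component computation and the commutation of $\phi$ with closures are routine; the only delicate coordination is matching the lower- and upper-face permutations, which the disjointness argument above resolves. The degenerate cases $n\le 1$ are covered automatically by the same reasoning.
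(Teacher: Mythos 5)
Your proof is correct, and it takes a genuinely different route from the paper's. The paper pins down the image of each face order-theoretically: by Proposition \ref{minmaxcoord}, the corner $(b_1,\dots,b_{k-1},a_k,b_{k+1},\dots,b_n)$ must map to a point of the form $(d_1,\dots,d_{l-1},c_l,d_{l+1},\dots,d_n)$, and since the $\{a_k\}$-face is precisely the set of points below that corner and both $\phi$ and $\phi^{-1}$ preserve the partial order, $\phi$ carries this down-set onto the down-set of the image point, which is the $\{c_l\}$-face; the $\{b_k\}$-face is handled dually, yielding an a priori different index $p$. You instead recover the faces topologically, as closures of the connected components of the locus of points with exactly one minimal (resp.\ maximal) coordinate --- a locus that $\phi$ preserves by Proposition \ref{minmaxcoord} --- and use that homeomorphisms permute components and commute with closure. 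The sharpest divergence is the index-matching step: the paper proves $l=p$ by an interpolation argument, constructing a point $v$ with $v_l=c_l$ and $v_p=d_p$ sandwiched between the images of the two face centres, which would then have to be the image of a point $x_\theta$ on the segment joining those centres, contradicting Proposition \ref{minmaxcoord} since $v$ has both a minimal and a maximal coordinate while no $x_\theta$ does; you reach the same conclusion from the cleaner observation that opposite faces are disjoint, non-opposite lower/upper faces always meet, and injectivity of $\phi$ preserves disjointness. Your route buys a simpler and more robust matching argument, and it isolates exactly what is needed: any homeomorphism preserving the counts of minimal and maximal coordinates satisfies the conclusion, with order preservation entering only through Proposition \ref{minmaxcoord}. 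The paper's route, by contrast, stays entirely within the order-theoretic framework of dihomeomorphisms, exploiting directly the down-set structure it already has in hand.
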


\begin{proof}
Consider the point $(b_1, \dots ,b_{k-1},a_k,b_{k+1}, \dots,b_n) \in [a_1,b_1]\times \cdots \times [a_n,b_n]$. By \ref{minmaxcoord}, there exists an index $l\in \{1, \dots, n\}$ such that $$\phi(b_1, \dots ,b_{k-1},a_k,b_{k+1}, \dots,b_n) = (d_1, \dots ,d_{l-1},c_l,d_{l+1}, \dots,d_n).$$ It follows that \begin{eqnarray*}
\lefteqn{\phi ([a_1,b_1]\times \cdots \times [a_{k-1},b_{k-1}]\times \{a_k\} \times [a_{k+1},b_{k+1}]\times \cdots \times [a_n,b_n])}\\ 
&=& \phi (\{x\in [a_1,b_1]\times \cdots \times [a_n,b_n]\colon x \leq (b_1, \dots ,b_{k-1},a_k,b_{k+1}, \dots,b_n)\})\\
&=& \{y\in [c_1,d_1]\times \cdots \times [c_n,d_n]\colon y \leq (d_1, \dots ,d_{l-1},c_l,d_{l+1}, \dots, d_n)\}\\
&=& [c_1,d_1]\times \cdots \times [c_{l-1},d_{l-1}]\times \{c_l\} \times [c_{l+1},d_{l+1}]\times \cdots \times [c_n,d_n]
\end{eqnarray*}
In the same way, one shows that there exists an index $p\in \{1, \dots, n\}$ such that \begin{eqnarray*}\lefteqn{\phi ([a_1,b_1]\times \cdots \times [a_{k-1},b_{k-1}]\times \{b_k\} \times [a_{k+1},b_{k+1}]\times \cdots \times [a_n,b_n])}\\ &=& [c_1,d_1]\times \cdots \times [c_{p-1},d_{p-1}]\times \{d_p\} \times [c_{p+1},d_{p+1}]\times \cdots \times [c_n,d_n].\end{eqnarray*}
It remains to show that $l=p$. Suppose that this is not the case. Consider the points $x_{\theta} = (\frac{a_1+b_1}{2}, \dots , \frac{a_{k-1}+b_{k-1}}{2},a_k+\theta (b_k-a_k),\frac{a_{k+1}+b_{k+1}}{2}, \dots, \frac{a_n+b_n}{2})$, $\theta \in [0,1]$. Write $\phi(x_0) = (s_1, \dots , s_{l-1},c_l,s_{l+1}, \dots , s_n)$ and $\phi(x_1) = (t_1, \dots , t_{p-1},d_p,t_{p+1}, \dots , t_n)$. Since $x_0 \leq x_1$, we have $s_j \leq t_j$ for all $j \in \{1, \dots,n\} \setminus \{l,p\}$. Consider the element $v$ of  $[c_1,d_1] \times \cdots \times [c_n,d_n]$ given by $$v_j = \left\{\begin{array}{ll}c_l, & j=l,\\ d_p, &j=p,\\ s_j,& j\in \{1,\dots n\}\setminus\{l,p\}.\end{array}\right.$$ Then $\phi(x_0) \leq v \leq \phi(x_1)$ and hence $x_0 \leq \phi^{-1}(v) \leq x_1$. It follows that there exists $\theta \in [0,1]$ such that $\phi^{-1}(v) = x_{\theta}$, i.e. $\phi(x_{\theta}) = v$. By \ref{minmaxcoord}, $x_{\theta}$ has at least one minimal and one maximal coordinate. This is impossible, and therefore $l=p$.
\end{proof}
 
\subsection{Weak morphisms of precubical sets} \label{defweakmor}
A \emph{weak morphism} from a precubical set $P$ to a precubical set $P'$ is a continuous map $f\colon |P| \to |P'|$ such that the following two conditions hold:
\begin{enumerate}
\item for every vertex $v\in P_0$ there exists a (necessarily unique) vertex $f_0(v)\in P'_0$ such that $f([v,()]) = [f_0(v),()]$;
\item for all integers $n, k_1, \dots, k_n\geq 1$ and every morphism of precubical sets $\chi \colon \llbracket 0,{k_1} \rrbracket\otimes \cdots \otimes \llbracket 0,{k_n} \rrbracket \to P$ there exist integers $k'_1, \dots, k'_n\geq 1$, a morphism of precubical sets $\chi' \colon \llbracket 0,{k'_1} \rrbracket\otimes \cdots \otimes \llbracket 0,{k'_n} \rrbracket \to P'$ and a dihomeo\-morphism 
\begin{eqnarray*}\lefteqn{\phi\colon  |\llbracket 0,{k_1} \rrbracket\otimes \cdots \otimes \llbracket 0,{k_n} \rrbracket| = [0,k_1] \times \cdots \times [0,k_n]}\\ &\to& |\llbracket 0,{k'_1} \rrbracket\otimes \cdots \otimes \llbracket 0,{k'_n} \rrbracket|= [0,k'_1] \times \cdots \times [0,k'_n]\end{eqnarray*} such that $f\circ |\chi| = |\chi'|\circ \phi$.
\end{enumerate}

\begin{exs}
(i) The geometric realisation of a morphism of precubical sets is a weak morphism.  

(ii) The map $|\llbracket 0,1 \rrbracket^ {\otimes 2}| = [0,1]^2 \to |\llbracket 0,1 \rrbracket^ {\otimes 2}| = [0,1]^2$, $(s,t) \mapsto (t,s)$ is a weak morphism that is not the geometric realization of a morphism of precubical sets  $\llbracket 0,1 \rrbracket^ {\otimes 2} \to \llbracket 0,1 \rrbracket^ {\otimes 2}$.

(iii) The map $|\llbracket 0,1 \rrbracket| = [0,1] \to |\llbracket 0,2 \rrbracket| = [0,2]$, $t \mapsto 2t$ is a weak morphism. This weak morphism is the homeomorphism of a subdivision in the sense of \ref{defsubdiv}.
\end{exs}

It is important to note that weak morphisms are stable under composition. This fact will enable us in sections \ref{defHDAmor}, \ref{simt} and \ref{homeoref} to use weak morphisms to define preorder relations for higher dimensional automata.

\begin{prop} \label{composition}
Let $f\colon |P| \to |Q|$ and $g\colon |Q| \to |R|$ be weak morphisms of precubical sets. Then $g\circ f\colon |P| \to |R|$ is a weak morphism of precubical sets, and $(g\circ f)_0 = g_0\circ f_0$. \hfill $\square$
\end{prop}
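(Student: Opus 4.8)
The plan is to verify the two defining conditions of \ref{defweakmor} for the map $g\circ f$ by simply chaining together the corresponding conditions already known for $f$ and for $g$. The argument is purely formal; the only point that will require an explicit (though trivial) remark is that a composite of dihomeomorphisms is again a dihomeomorphism, and everything else is a substitution.

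First I would dispose of condition (1), which also yields the asserted formula for the vertex map. Given $v\in P_0$, condition (1) for $f$ gives $f([v,()]) = [f_0(v),()]$ with $f_0(v)\in Q_0$, and then condition (1) for $g$ applied to the vertex $f_0(v)$ gives $g([f_0(v),()]) = [g_0(f_0(v)),()]$. Composing, $(g\circ f)([v,()]) = [g_0(f_0(v)),()]$, so condition (1) holds for $g\circ f$ with $(g\circ f)_0(v) = g_0(f_0(v))$; this simultaneously establishes $(g\circ f)_0 = g_0\circ f_0$.

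Next I would treat condition (2). Fixing integers $n,k_1,\dots,k_n\geq 1$ and a morphism $\chi\colon \llbracket 0,k_1\rrbracket\otimes\cdots\otimes\llbracket 0,k_n\rrbracket\to P$, I would first apply condition (2) for $f$ to obtain integers $k'_1,\dots,k'_n\geq 1$, a morphism $\chi'\colon \llbracket 0,k'_1\rrbracket\otimes\cdots\otimes\llbracket 0,k'_n\rrbracket\to Q$ and a dihomeomorphism $\phi$ with $f\circ|\chi| = |\chi'|\circ\phi$. Since $\chi'$ is again an $n$-fold tensor product, the integer $n$ is preserved, and I may apply condition (2) for $g$ to $\chi'$, obtaining integers $k''_1,\dots,k''_n\geq 1$, a morphism $\chi''\colon \llbracket 0,k''_1\rrbracket\otimes\cdots\otimes\llbracket 0,k''_n\rrbracket\to R$ and a dihomeomorphism $\psi$ with $g\circ|\chi'| = |\chi''|\circ\psi$. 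Then
$$(g\circ f)\circ|\chi| = g\circ(|\chi'|\circ\phi) = (g\circ|\chi'|)\circ\phi = |\chi''|\circ(\psi\circ\phi),$$
so $\chi''$ together with $\psi\circ\phi$ witnesses condition (2) for $g\circ f$.

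The only thing left to justify — and the nearest this proof comes to an obstacle, though it is immediate — is that $\psi\circ\phi$ is a dihomeomorphism. This follows directly from the definition: it is a homeomorphism between hyperrectangles of the same dimension $n$, it preserves the natural partial order of $\R^n$ as a composite of order-preserving maps, and its inverse $\phi^{-1}\circ\psi^{-1}$ is order-preserving for the same reason. This completes the verification that $g\circ f$ is a weak morphism.
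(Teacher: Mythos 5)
Your proof is correct: the paper states this proposition without proof (it is marked with a $\square$ as immediate), and your direct verification --- chaining condition (1) to get $(g\circ f)_0 = g_0\circ f_0$, feeding the output $\chi'$ of condition (2) for $f$ into condition (2) for $g$, and noting that a composite of dihomeomorphisms is a dihomeomorphism --- is exactly the intended routine argument. Nothing is missing; in particular you correctly identified the one point worth making explicit, namely that $\psi\circ\phi$ and its inverse $\phi^{-1}\circ\psi^{-1}$ preserve the partial order of $\R^n$.
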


\subsection{The maps $\chi'$ and $\phi$} Our purpose in this subsection is to show that the morphism of precubical sets $\chi'$ and the dihomeomorphism $\phi$ in condition (2) of the definition of weak morphisms in section \ref{defweakmor} are unique and that $\phi$ is itself a weak morphism. We need four lemmas.

\begin{lem} \label{reallem}
Let $f\colon P \to R$ and $g\colon Q \to R$ be  morphisms of precubical sets and $\alpha \colon |P| \to |Q|$ be a continuous map such that $|g|\circ \alpha  = |f|$. Then there exists a morphism of precubical sets $h \colon P\to Q$ such that $g\circ h = f$ and $|h| = \alpha$.
\end{lem}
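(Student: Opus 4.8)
The plan is to build $h$ one cube at a time, using the fact (recorded just before the CW-complex discussion) that every point of a geometric realisation has a unique representation $[z,u]$ with $u$ in the open cube. Fix $x \in P_n$ and consider the continuous map $F = \alpha \circ |x_\sharp|\colon [0,1]^n \to |Q|$. Since $f \circ x_\sharp = f(x)_\sharp$ by the uniqueness of the morphism corresponding to an element, functoriality of the geometric realisation gives $|g| \circ F = |f| \circ |x_\sharp| = |f(x)_\sharp|$, that is $|g|(F(u)) = [f(x),u]$ for all $u$. First I would pin down $F$ on the open cube: writing $F(u) = [y_u, w_u]$ in canonical form for $u \in \,]0,1[^n$, the identity $[g(y_u),w_u] = |g|(F(u)) = [f(x),u]$ together with uniqueness of the canonical form forces $\deg(y_u) = n$, $w_u = u$ and $g(y_u) = f(x)$. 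Thus $F(u) = [y_u,u]$ with $y_u \in Q_n$ lying over $f(x)$.

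The key step is to show that $y_u$ does not depend on $u$. Here I would corestrict $F$ to the subspace $S = |Q_{\leq n}| \setminus |Q_{\leq n-1}|$, the union of the open $n$-cells, which by the previous paragraph contains the image of $]0,1[^n$. Each open cell $\{[y,u] : u \in \,]0,1[^n\}$ ($y \in Q_n$) is open in $|Q_{\leq n}|$, its preimage under the quotient map defining the $n$-skeleton being open; these open cells partition $S$, and since $|Q_{\leq n}|$ carries the subspace topology from $|Q|$, each is open in $S$ and hence clopen in $S$ (its complement is the union of the remaining open cells), while $F\colon\,]0,1[^n \to S$ is continuous. As $]0,1[^n$ is connected, its image lies in a single open cell; call the corresponding cube $h(x) \in Q_n$. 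Then $F(u) = [h(x),u]$ for all $u \in \,]0,1[^n$, and since $u \mapsto [h(x),u]$ is exactly $|h(x)_\sharp|$, the continuous maps $F$ and $|h(x)_\sharp|$ agree on the dense subset $]0,1[^n$ of $[0,1]^n$; as $|Q|$ is Hausdorff they agree everywhere. In particular $g(h(x)) = f(x)$, and for $n = 0$ the argument degenerates to a single point, recovering $h$ on vertices.

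It remains to check that the degree-preserving map $h$ is a morphism of precubical sets and realises $\alpha$. For boundary compatibility I would evaluate the relation $\alpha([x,u]) = [h(x),u]$ at a boundary point $\delta_i^k(u')$ with $u' \in \,]0,1[^{n-1}$: using $[x,\delta_i^k u'] = [d_i^k x, u']$ and $[h(x),\delta_i^k u'] = [d_i^k h(x), u']$ one obtains $[h(d_i^k x),u'] = [d_i^k h(x),u']$, and uniqueness of the canonical form yields $h(d_i^k x) = d_i^k h(x)$. Finally, $|h|([x,u]) = [h(x),u] = \alpha([x,u])$ for every $[x,u] \in |P|$, so $|h| = \alpha$, while $g \circ h = f$ holds cube-wise and hence as morphisms.

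I expect the only real obstacle to be the constancy of $y_u$, i.e.\ justifying that the open $n$-cells are clopen in $S$ so that connectedness of the open cube applies; everything else is bookkeeping with the unique-representation property and the naturality $|g|\circ\alpha = |f|$.
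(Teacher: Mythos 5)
Your proof is correct and takes essentially the same route as the paper's: both define $h(x)$ from the canonical form of $\alpha([x,u])$ (forcing $\deg = n$, $g(h(x))=f(x)$ via uniqueness of canonical representatives), establish that the cube $y_u$ is independent of $u$ by a connectedness argument in the subspace of open $n$-cells, and obtain $h(d_i^kx)=d_i^kh(x)$ by extending the identity $\alpha([x,u])=[h(x),u]$ continuously to the boundary of the cube and again invoking uniqueness of canonical forms. The only immaterial differences are that the paper uses path-connectedness along line segments and a one-parameter limit where you use connectedness of $]0,1[^n$ together with a density/Hausdorff argument (and you spell out the clopen-ness of the open cells, which the paper merely asserts).
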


\begin{proof}
Let $x \in P$ be an element of degree $r$ and $u \in ]0,1[^r$. Consider the uniquely determined elements $y \in Q$ and $v \in ]0,1[^{\deg(y)}$ such that $\alpha([x,u]) = [y,v]$. We have $[g(y),v] = |g|([y,v]) = |g|\circ \alpha ([x,u])  = |f|([x,u]) = [f(x),u]$ and therefore $\deg(y) = r$, $g(y) = f(x)$ and $v=u$. Consider $u' \in ]0,1[^r$. For each $t \in [0,1]$, there exists an element $y_t \in Q_r$ such that $\alpha([x,(1-t)u+tu']) = [y_t,(1-t)u+tu']$. Since every continuous path in the subspace of open $r$-cells of $|Q|$ must stay in one of the cells, we have $y_t = y$ for all $t\in [0,1]$. Set $h(x) =y$. We have shown that $g(h(x)) = f(x)$ and that $\alpha ([x,u]) = [h(x),u]$ for all $u \in ]0,1[^{r}$. It remains to show that $h$ is a morphism of precubical sets. Consider an element $x \in P$ of degree $r>0$, $i \in \{1, \dots , r\}$ and $k \in \{0,1\}$. Consider the path $[0,1] \to |P|$, $t \mapsto [x,u_t]$ where the element $u_t \in [0,1]^r$ is defined by $$(u_t)_j = \left\{\begin{array}{ll} \vspace{0.2cm} \tfrac{1}{2}, & j\not=i,\\
\tfrac{1-t}{2} +tk,& j=i. \end{array}\right.$$
For $t\in [0,1[$, $u_t \in ]0,1[^r$ and therefore $\alpha ([x,u_t]) = [h(x),u_t]$. Hence also $\alpha ([x,u_1])  = [h(x),u_1]$. Thus,  $[h(d_i^kx),(\frac{1}{2}, \dots ,\frac{1}{2})] = \alpha ([d_i^kx,(\frac{1}{2}, \dots ,\frac{1}{2})]) = \alpha ([x,\delta_i^k(\frac{1}{2}, \dots ,\frac{1}{2})]) = \alpha ([x,u_1]) = [h(x),u_1] = [h(x),\delta_i^k(\frac{1}{2}, \dots ,\frac{1}{2})] = [d_i^kh(x),(\frac{1}{2}, \dots ,\frac{1}{2})]$. It follows that $h(d_i^kx) = d_i^kh(x)$ and hence that $h$ is a morphism of precubical sets. 
\end{proof}

\begin{lem} \label{incl}
Consider integers $n, k_1, \dots ,k_n \geq 1$ and a morphism of precubical sets $f\colon \llbracket 0,{k_1} \rrbracket\otimes \cdots \otimes \llbracket 0,{k_n} \rrbracket \to \llbracket 0,{l_1} \rrbracket\otimes \cdots \otimes \llbracket 0,{l_n} \rrbracket$ such that $f(0,\dots, 0) = (0, \dots ,0)$. Then $k_i \leq l_i$ for all $i \in \{1, \dots, n\}$ and $f$ is the inclusion.
\end{lem}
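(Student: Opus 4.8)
The plan is to exploit that $f$, being a morphism of precubical sets, preserves degrees and commutes with the boundary operators, and to combine this with the combinatorial description of the two grids. Recall that an element of $\llbracket 0,k_1 \rrbracket\otimes \cdots \otimes \llbracket 0,k_n \rrbracket$ is a tuple $x = (x_1, \dots, x_n)$ in which each entry $x_s$ is either a vertex of $\llbracket 0,k_s \rrbracket$ or an edge $[j-1,j]$, the degree of $x$ being the number of edge entries; by the definition of the tensor product, $d_s^{\kappa}$ acts on the $s$-th edge entry counted from the left. For a vertex $w=(w_1,\dots,w_n)$ and a direction $i$ with $w_i<k_i$ I write $E_i(w)$ for the $1$-cube agreeing with $w$ except that its $i$-th entry is $[w_i,w_i+1]$; every $1$-cube of the grid is of this form, and $d_1^0E_i(w)=w$, $d_1^1E_i(w)=w+\varepsilon_i$, where $\varepsilon_i$ denotes the $i$-th unit vector. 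First I would record the elementary observation that any element $y$ of degree $r$ of a grid is determined by its bottom vertex $(d_1^0)^r y$ and its top vertex $(d_1^1)^r y$, since these collapse all edge entries of $y$ to their lower, respectively upper, endpoints.

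The heart of the argument is to show, by induction on $|w| = w_1+\cdots+w_n$, that $f(w)=w$ and that, for every direction $i$ with $w_i<k_i$, one has $f(E_i(w))=E_i(w)$. In the base case $w=(0,\dots,0)$, since $f(0,\dots,0)=(0,\dots,0)$ and $f$ commutes with $d_1^0$, each $f(E_i(0))$ is a $1$-cube of the target starting at the origin, hence of the form $E_{\sigma(i)}(0)$ for some direction $\sigma(i)\in\{1,\dots,n\}$. Applying $f$ to the $2$-cube spanning two directions $i<j$ at the origin and reading off the appropriate faces, while using that the boundary operators of the target respect the left-to-right order of edge entries, one finds $\sigma(i)<\sigma(j)$; thus $\sigma$ is a strictly increasing self-map of $\{1,\dots,n\}$ and therefore the identity. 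For the inductive step I would extract two facts from the same $2$-cube computation: the local direction assignment $\tau_w$ defined by $f(E_i(w))=E_{\tau_w(i)}(w)$ is strictly increasing on the set of admissible directions at $w$; and it is transported along edges, namely $\tau_{w+\varepsilon_a}(i)=\tau_w(i)$ whenever $i\ne a$ and the relevant $2$-cube exists. Given $w$ with $|w|\ge 1$, choosing $a$ with $w_a\ge 1$ and applying the transport identity to $w-\varepsilon_a$ (which is covered by the induction hypothesis) yields $\tau_w(i)=i$ for all admissible $i\ne a$; monotonicity of $\tau_w$, or, when all directions are admissible at $w$, the identity-of-increasing-self-maps principle again, then forces $\tau_w(a)=a$ as well. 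Taking the $d_1^1$-face of the edge statement at $w-\varepsilon_a$ gives $f(w)=w$.

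Once this induction is complete, the inequalities $k_i\le l_i$ follow at once: for the vertex $w$ with $w_i=k_i-1$ and the other coordinates arbitrary, the edge $f(E_i(w))=E_i(w)$ is a genuine $1$-cube of the target, which forces $w_i+1=k_i\le l_i$. Finally, to prove that $f$ is the inclusion on cubes of arbitrary degree, I would invoke the determination observation of the first paragraph: for an element $x$ of degree $r$ its bottom and top vertices $(d_1^0)^r x$ and $(d_1^1)^r x$ are vertices of the domain, hence fixed by $f$; since $f$ preserves degree and commutes with the $d_1^{\kappa}$, the element $f(x)$ has degree $r$ and the same bottom and top vertices as $x$, whence $f(x)=x$.

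I expect the main obstacle to be the inductive step showing that the local direction assignment is globally the identity, that is, ruling out a twisting of coordinate directions away from the origin. Strict monotonicity coming from a single $2$-cube is not by itself sufficient, because at a vertex lying on a coordinate axis not all directions are admissible; the transport identity $\tau_{w+\varepsilon_a}(i)=\tau_w(i)$ is what propagates the identity behaviour outward from the origin. Some care is needed to check that the auxiliary $2$-cubes used in these computations genuinely exist in the domain, which is precisely where the hypotheses $k_i\ge 1$ and the positivity of the chosen coordinate $w_a$ enter. Everything else reduces to bookkeeping with the boundary operators of the tensor product.
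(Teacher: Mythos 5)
Your strategy is genuinely different from the paper's and is mostly sound: you induct over the $1$-skeleton, controlling images of edges through the direction assignment $\tau_w$ and $2$-cube computations, and then recover all higher cubes from the bottom/top-vertex determination principle. (The paper instead inducts lexicographically over the top-dimensional $n$-cubes $c_{i_1,\dots,i_n}=([i_1,i_1+1],\dots,[i_n,i_n+1])$, pinning down each one by the single face $d_r^0c_{i_1,\dots,i_n}=d_r^1c_{i_1,\dots,i_r-1,\dots,i_n}$ it shares with an already-treated $n$-cube; the inequalities $k_i\le l_i$ and the identification of $f$ on all lower-dimensional elements then come for free, since every element is an iterated face of some $n$-cube. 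That argument is substantially shorter because it never has to rule out a ``twisting'' of directions: the whole $n$-cube is rigid at once.) Your determination principle, your base case, your transport identity, and your final two paragraphs are all correct.

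There is, however, a genuine gap in your inductive step. After transporting from $w-\varepsilon_a$ for a \emph{single} chosen $a$ with $w_a\ge 1$, you know $\tau_w(i)=i$ only for directions $i\ne a$ admissible at $w$, and you assert that monotonicity (or the increasing-self-map principle) then forces $\tau_w(a)=a$. This fails when the admissible directions at $w$ do not squeeze $a$. Concrete instance: domain $\llbracket 0,2\rrbracket\otimes\llbracket 0,1\rrbracket$, target $\llbracket 0,2\rrbracket\otimes\llbracket 0,2\rrbracket$, $w=(1,1)$, choice $a=1$. The only admissible direction at $w$ is $1$ itself, since direction $2$ is maxed out ($w_2=k_2=1$); the transport step is then vacuous, monotonicity on a one-element set says nothing, and ``all directions admissible at $w$'' fails, so your argument does not exclude $f\bigl(([1,2],1)\bigr)=(1,[1,2])$, an edge that does exist in the target. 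The repair stays entirely inside your framework: transport via \emph{every} index $b$ with $w_b\ge 1$, not just one. If $w$ has two or more positive coordinates, these transports already cover every admissible direction (any admissible $i$ differs from at least one positive $b$, and the required $2$-cube at $w-\varepsilon_b$ exists). If $w=m\varepsilon_a$ has a unique positive coordinate, then every direction $i\ne a$ is admissible because $w_i=0<k_i$, so your squeezing argument applies (with the trivial separate case $n=1$). With this amendment your proof is complete, though noticeably longer than the paper's.
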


\begin{proof}
Note first that necessarily $l_1, \dots l_n \geq 1$. Consider the set $$K= \{0,\dots,k_1-1\}\times \cdots \times \{0,\dots,k_n-1\}.$$ For $(i_1,\dots,i_n) \in K$ set $c_{i_1,\dots,i_n} = ([i_1,i_1+1], \dots, [i_n,i_n+1]).$ The $c_{i_1,\dots,i_n}$ are the $n$-cubes of $\llbracket 0,{k_1} \rrbracket\otimes \cdots \otimes \llbracket 0,{k_n} \rrbracket$. In the same way, we use the notation $c'_{i_1,\dots,i_n}$ for the $n$-cubes of $\llbracket 0,{l_1} \rrbracket\otimes \cdots \otimes \llbracket 0,{l_n} \rrbracket$. Consider the total order on $K$ defined by $$(j_1,\dots,j_n) < (i_1,\dots,i_n)\Leftrightarrow \exists r \in \{1, \dots, n\}\colon j_1=i_1, \dots, j_{r-1}=i_{r-1},j_r<i_r.$$  We show by induction that for each $(i_1,\dots,i_n) \in K$, $i_1< l_1, \dots , i_n< l_n$ and $f(c_{i_1,\dots,i_n}) = c'_{i_1, \dots,i_n}$. The only $n$-cube of $\llbracket 0,{l_1} \rrbracket\otimes \cdots \otimes \llbracket 0,{l_n} \rrbracket$ having $(0,\dots, 0)$ as a vertex is $c'_{0,\dots,0}$. Therefore $f(c_{0,\dots,0}) = c'_{0,\dots,0}$. Suppose that $(i_1,\dots,i_n) > (0,\dots, 0)$ and that the assertion holds all $(j_1,\dots,j_n) < (i_1,\dots,i_n)$. Let $r$ be an index such that $i_r>0$. By the inductive hypothesis, $f(c_{i_1, \dots,i_{r-1},i_r-1,i_{r+1}, \dots,i_n}) = c'_{i_1, \dots,i_{r-1},i_r-1,i_{r+1}, \dots,i_n}$. It follows that \begin{eqnarray*}\lefteqn{d_r^0f(c_{i_1, \dots, i_n})}\\&=& f(d_r^0c_{i_1, \dots, i_n})\\ &=& f(d_r^1c_{i_1, \dots,i_{r-1},i_r-1,i_{r+1}, \dots,i_n})\\ &=& d_r^1f(c_{i_1, \dots,i_{r-1},i_r-1,i_{r+1}, \dots,i_n})\\ &=& d_r^1c'_{i_1, \dots,i_{r-1},i_r-1,i_{r+1}, \dots,i_n}\\ &=& ([i_1,i_1+1], \dots,[i_{r-1},i_{r-1}+1],i_r, [i_{r+1},i_{r+1}+1],\dots,[i_n,i_n+1]) \end{eqnarray*}
This can only happen if $i_1< l_1, \dots , i_n< l_n$ and $f(c_{i_1,\dots,i_n}) = c'_{i_1, \dots,i_n}$. The result follows.
\end{proof}

\begin{lem}\label{uniqueness}
Consider integers $n, k_1, \dots, k_n,l_1, \dots, l_n\geq 1$, morphisms of precubical sets $\xi \colon \llbracket 0,{k_1} \rrbracket\otimes \cdots \otimes \llbracket 0,{k_n} \rrbracket \to P$ and $\zeta \colon \llbracket 0,{l_1} \rrbracket\otimes \cdots \otimes \llbracket 0,{l_n} \rrbracket \to P$ and a homeomorphism $\alpha \colon |\llbracket 0,{k_1} \rrbracket\otimes \cdots \otimes \llbracket 0,{k_n} \rrbracket| \to |\llbracket 0,{l_1} \rrbracket\otimes \cdots \otimes \llbracket 0,{l_n} \rrbracket|$ such that $|\zeta| \circ \alpha  = |\xi|$. Then $k_i = l_i$ for all $i \in \{1, \dots,n\}$, $\alpha = id$ and $\xi = \zeta$.   
\end{lem}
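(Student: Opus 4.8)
The plan is to transfer the problem from the purely topological map $\alpha$ to a genuine morphism of precubical sets and then exploit the combinatorial rigidity supplied by Lemmas \ref{reallem} and \ref{incl}. Write $K = \llbracket 0,{k_1} \rrbracket\otimes \cdots \otimes \llbracket 0,{k_n} \rrbracket$ and $L = \llbracket 0,{l_1} \rrbracket\otimes \cdots \otimes \llbracket 0,{l_n} \rrbracket$. First I would apply Lemma \ref{reallem} to the morphisms $\xi\colon K \to P$ and $\zeta\colon L \to P$ and the continuous map $\alpha$; the hypothesis $|\zeta|\circ \alpha = |\xi|$ is exactly the compatibility condition required, so this produces a morphism of precubical sets $h\colon K \to L$ with $\zeta\circ h = \xi$ and $|h| = \alpha$.

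The next step is to observe that, because $\alpha = |h|$ is a homeomorphism, $h$ is in fact an isomorphism of precubical sets. Indeed, $|h|([x,u]) = [h(x),u]$, so $|h|$ carries the open $m$-cell indexed by $x$ into the open $m$-cell indexed by $h(x)$; using the unique representation of points of a geometric realisation, bijectivity of $|h|$ forces $h$ to be a bijection in each degree, and since $h$ commutes with the boundary operators it is an isomorphism. (Alternatively one applies Lemma \ref{reallem} a second time to $\alpha^{-1}$ and uses faithfulness of $|\cdot|$ to conclude that the two resulting morphisms are mutually inverse.)

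Now comes the key point: $h$ must fix the bottom vertex. Here I would characterise $(0, \dots, 0)$ intrinsically as the unique vertex $v$ of $K$ that is not of the form $d_1^1 c$ for any $1$-cube $c$ — every other vertex has some positive coordinate and hence is the $d_1^1$-face of the $1$-cube obtained by decreasing that coordinate, whereas $(0, \dots, 0)$ cannot be such a face. Since $h$ and $h^{-1}$ are morphisms preserving degree and boundary operators, they send $d_1^1$-faces to $d_1^1$-faces, so $h$ matches the unique non-$d_1^1$-face vertex of $K$ with that of $L$; thus $h(0, \dots, 0) = (0, \dots, 0)$. This is the step I expect to be the main obstacle, since a mere homeomorphism of hyperrectangles need not respect corners; it is precisely the constraint $|\zeta|\circ \alpha = |\xi|$, funnelled through Lemma \ref{reallem}, that upgrades $\alpha$ to a corner-preserving isomorphism.

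Finally, with $h(0, \dots, 0) = (0, \dots, 0)$ established, Lemma \ref{incl} applies to give $k_i \leq l_i$ for all $i$ together with the fact that $h$ is the inclusion. Applying Lemma \ref{incl} to the isomorphism $h^{-1}$, which fixes $(0,\dots,0)$ as well (being the inverse of a vertex bijection that fixes it), yields $l_i \leq k_i$. Hence $k_i = l_i$ for every $i$, so $K = L$ and the inclusion $h$ is the identity; consequently $\alpha = |h| = id$ and $\xi = \zeta\circ h = \zeta$, which is the assertion.
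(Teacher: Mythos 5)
Your proposal is correct and follows essentially the same route as the paper: lift $\alpha$ to a morphism $h$ via Lemma \ref{reallem}, pin down the bottom corner using the fact that $(0,\dots,0)$ is the unique vertex that is not a $d_1^1$-face of any $1$-cube, and then apply Lemma \ref{incl} in both directions to force $k_i = l_i$ and $h = id$. The only cosmetic difference is that you establish that $h$ is an isomorphism directly from the unique-representation property of the realisation, whereas the paper lifts $\alpha^{-1}$ to a second morphism $h'$ and derives the corner-fixing by contradiction from $|h'\circ h| = id$ --- the alternative you yourself mention parenthetically.
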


\begin{proof}
By \ref{reallem}, there exist morphisms of precubical sets $h \colon \llbracket 0,{k_1} \rrbracket\otimes \cdots \otimes \llbracket 0,{k_n} \rrbracket \to \llbracket 0,{l_1} \rrbracket\otimes \cdots \otimes \llbracket 0,{l_n} \rrbracket$ and $h' \colon \llbracket 0,{l_1} \rrbracket\otimes \cdots \otimes \llbracket 0,{l_n} \rrbracket \to \llbracket 0,{k_1} \rrbracket\otimes \cdots \otimes \llbracket 0,{k_n} \rrbracket$ such that $|h| = \alpha$, $|h'| = \alpha^{-1}$, $\zeta \circ h = \xi$ and $\xi \circ h'= \zeta$. We show that $h (0, \dots ,0) = (0, \dots, 0)$. Suppose that this is not case. Then $h (0, \dots ,0) = d_1^1x$ for some $x \in \left(\llbracket 0,{l_1} \rrbracket\otimes \cdots \otimes \llbracket 0,{l_n} \rrbracket\right)_1$. Therefore $h'\circ h (0,\dots,0) = h '(d_1^ 1x) = d_1^ 1h'(x)$. On the other hand, $[h' \circ h (0,\dots,0),()] = |h' \circ h|([(0,\dots,0),()]) = \alpha^ {-1}\circ \alpha ([(0,\dots,0),()]) = [(0,\dots,0),()]$ and hence $h' \circ \nolinebreak h (0,\dots,0) = (0,\dots, 0)$. Thus, $(0,\dots,0) = d_1^ 1h'(x)$. This is impossible. It follows that $h (0, \dots ,0) = (0, \dots, 0)$. Similarly, $h' (0, \dots ,0) = (0, \dots, 0)$. By \ref{incl}, $k_i = l_i$ for all $i \in \{1, \dots,n\}$ and $h = h' = id$. This implies that $\alpha = id$ and $\xi = \zeta$.
\end{proof}

\begin{lem} \label{gweakmor}
Consider a weak morphism of precubical sets $f\colon |P| \to |P'|$, two morphisms of precubical sets $\chi \colon Q \to P$ and $\chi' \colon Q'\to P'$ and a continuous map $g\colon  |Q| \to |Q'|$ such that $f\circ |\chi| = |\chi'|\circ g$. Then $g$ is a weak morphism of precubical sets.
\end{lem}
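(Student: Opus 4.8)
The plan is to verify directly the two defining conditions of a weak morphism for $g$, in each case importing the corresponding data for $f$ and transporting it across $\chi$ and $\chi'$ by means of the hypothesis $f\circ|\chi| = |\chi'|\circ g$.

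For condition (1) I would fix a vertex $v\in Q_0$ and write $g([v,()])$ in its canonical form $[y,u]$, with $y\in Q'$ of degree $\deg(y)$ and $u\in\,]0,1[^{\deg(y)}$. Applying $|\chi'|$ gives $|\chi'|(g([v,()])) = [\chi'(y),u]$, while on the other side $f(|\chi|([v,()])) = f([\chi(v),()]) = [f_0(\chi(v)),()]$, since $\chi(v)$ is a vertex and $f$ is a weak morphism. Because morphisms of precubical sets preserve degree, $[\chi'(y),u]$ is already in canonical form; comparing it with the canonical form $[f_0(\chi(v)),()]$ of the same point (using the uniqueness of canonical forms noted after the definition of geometric realisation) forces $\deg(y)=0$ and $u=()$. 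Hence $g([v,()]) = [y,()]$ with $y\in Q'_0$, and setting $g_0(v)=y$ establishes (1).

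For condition (2) I would start from a morphism of precubical sets $\psi\colon\llbracket 0,k_1\rrbracket\otimes\cdots\otimes\llbracket 0,k_n\rrbracket\to Q$ and apply condition (2) for the weak morphism $f$ to the composite $\chi\circ\psi\colon\llbracket 0,k_1\rrbracket\otimes\cdots\otimes\llbracket 0,k_n\rrbracket\to P$. This furnishes integers $k'_1,\dots,k'_n\geq 1$, a morphism $\eta\colon\llbracket 0,k'_1\rrbracket\otimes\cdots\otimes\llbracket 0,k'_n\rrbracket\to P'$ and a dihomeomorphism $\phi\colon[0,k_1]\times\cdots\times[0,k_n]\to[0,k'_1]\times\cdots\times[0,k'_n]$ with $f\circ|\chi|\circ|\psi| = |\eta|\circ\phi$. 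Substituting $f\circ|\chi|=|\chi'|\circ g$ and composing on the right with $\phi^{-1}$ gives $|\chi'|\circ\bigl(g\circ|\psi|\circ\phi^{-1}\bigr) = |\eta|$. Now I set $\beta = g\circ|\psi|\circ\phi^{-1}$, a continuous map $|\llbracket 0,k'_1\rrbracket\otimes\cdots\otimes\llbracket 0,k'_n\rrbracket|\to|Q'|$ satisfying $|\chi'|\circ\beta=|\eta|$, and invoke Lemma \ref{reallem} with the morphisms $\eta$ and $\chi'$ into $P'$ and the map $\beta$ in place of $\alpha$. This yields a morphism of precubical sets $\psi'\colon\llbracket 0,k'_1\rrbracket\otimes\cdots\otimes\llbracket 0,k'_n\rrbracket\to Q'$ with $\chi'\circ\psi'=\eta$ and $|\psi'|=\beta$. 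Then $|\psi'|\circ\phi = g\circ|\psi|\circ\phi^{-1}\circ\phi = g\circ|\psi|$, which is exactly condition (2), realised with the same integers $k'_i$ and the same dihomeomorphism $\phi$ produced for $f$.

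The only nontrivial point, and the step I expect to be the crux, is the middle of the argument for (2): recognising that one should precompose with $\phi^{-1}$ so that the source of the resulting map $\beta$ is the geometric realisation of a tensor product of intervals and the identity $|\chi'|\circ\beta=|\eta|$ matches the hypotheses of Lemma \ref{reallem}. Once this is set up, Lemma \ref{reallem} does the essential work of promoting the continuous map $\beta$ to the geometric realisation of a genuine morphism of precubical sets, and everything else reduces to bookkeeping with canonical forms and the functoriality of geometric realisation.
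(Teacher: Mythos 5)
Your proof is correct and follows essentially the same route as the paper's own argument: canonical forms and degree preservation for condition (1), and for condition (2) the precomposition with $\phi^{-1}$ to produce a map $\beta$ satisfying $|\chi'|\circ\beta=|\eta|$, which Lemma \ref{reallem} then promotes to the realisation of a morphism of precubical sets. The paper's proof is the same construction, with $\alpha = g\circ|\xi|\circ\phi^{-1}$ in place of your $\beta$.
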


\begin{proof}
Consider a vertex $v \in Q_0$. Let $x \in Q'$ and $u \in ]0,1[^ {\deg(x)}$ be the uniquely determined elements such that $g([v,()]) = [x,u]$. We have $[\chi'(x),u] = |\chi'|([x,u]) = |\chi'|\circ g ([v,()]) = f\circ |\chi|([v,()]) = f([\chi(v),()]) = [f_0(\chi(v)),()]$ and hence $\deg(x) = 0$ and $u=()$. 

Let $n, k_1,\dots, k_n \geq 1$ be integers and $\xi \colon \llbracket 0,{k_1} \rrbracket\otimes \cdots \otimes \llbracket 0,{k_n} \rrbracket \to Q$ be a morphism of precubical sets. Since $f$ is a weak morphism, there exist integers $k'_1,\dots, k'_n \geq 1$, a morphism of precubical sets $\theta \colon \llbracket 0,{k'_1} \rrbracket \otimes \cdots \otimes \llbracket 0,{k'_n} \rrbracket \to P'$ and a dihomeomorphism 
$\phi\colon  |\llbracket 0,{k_1} \rrbracket\otimes \cdots \otimes \llbracket 0,{k_n} \rrbracket|\to |\llbracket 0,{k'_1} \rrbracket\otimes \cdots \otimes \llbracket 0,{k'_n} \rrbracket|$ such that $f\circ |\chi\circ \xi| = |\theta|\circ \phi$. Consider the continuous map $\alpha = g \circ |\xi|\circ \phi^ {-1}\colon |\llbracket 0,{k'_1} \rrbracket\otimes \cdots \otimes \llbracket 0,{k'_n} \rrbracket| \to |Q'$. We have $\alpha \circ \phi = g\circ |\xi|$ and $|\chi'|\circ \alpha = |\chi'|\circ  g \circ |\xi|\circ \phi^ {-1} = f \circ |\chi| \circ |\xi|\circ \phi^ {-1} = |\theta|\circ \phi \circ \phi^{-1} = |\theta|$. By lemma \ref{reallem}, there exists a morphism of precubical sets $\xi'\colon \llbracket 0,{k'_1} \rrbracket\otimes \cdots \otimes \llbracket 0,{k'_n} \rrbracket \to Q'$ such that $\chi' \circ \xi' = \theta$ and $|\xi'| = \alpha$. The result follows.
\end{proof}

We are finally ready to establish the uniqueness of the maps $\chi'$ and $\phi$ in the definition of weak morphisms and the fact that $\phi$ is itself a weak morphism:

\begin{prop} \label{unicity}
Let $f\colon |P| \to |P'|$ be a weak morphism of precubical sets, $n, k_1, \dots, k_n\geq 1$ be integers and $\chi \colon \llbracket 0,{k_1} \rrbracket\otimes \cdots \otimes \llbracket 0,{k_n} \rrbracket \to P$ be  a morphism of precubical sets. Then there exist unique integers $k'_1, \dots, k'_n\geq 1$, a unique morphism of precubical sets $\chi' \colon \llbracket 0,{k'_1} \rrbracket\otimes \cdots \otimes \llbracket 0,{k'_n} \rrbracket \to P'$ and a unique dihomeo\-morphism $\phi\colon  |\llbracket 0,{k_1} \rrbracket\otimes \cdots \otimes \llbracket 0,{k_n} \rrbracket| \to |\llbracket 0,{k'_1} \rrbracket\otimes \cdots \otimes \llbracket 0,{k'_n} \rrbracket|$ such that $f\circ |\chi| = |\chi'|\circ \phi$. Moreover, $\phi$ is a weak morphism.
\end{prop}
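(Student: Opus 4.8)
The plan is to prove existence and uniqueness separately, deriving uniqueness from Lemma \ref{uniqueness} and the fact that $\phi$ is a weak morphism from Lemma \ref{gweakmor}. Existence of integers $k'_1,\dots,k'_n$, a morphism $\chi'$ and a dihomeomorphism $\phi$ with $f\circ|\chi| = |\chi'|\circ\phi$ is exactly condition (2) in the definition of weak morphisms (section \ref{defweakmor}), so nothing new is required there. The substance of the proposition is therefore the uniqueness assertion and the final claim that $\phi$ is itself a weak morphism.

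For uniqueness, I would suppose we have two witnessing data: integers $k'_1,\dots,k'_n$, $k''_1,\dots,k''_n\geq 1$, morphisms $\chi'\colon \llbracket 0,k'_1\rrbracket\otimes\cdots\otimes\llbracket 0,k'_n\rrbracket\to P'$ and $\chi''\colon \llbracket 0,k''_1\rrbracket\otimes\cdots\otimes\llbracket 0,k''_n\rrbracket\to P'$, and dihomeomorphisms $\phi,\psi$ such that $f\circ|\chi| = |\chi'|\circ\phi = |\chi''|\circ\psi$. The idea is to compare the two factorizations directly. Setting $\alpha = \psi\circ\phi^{-1}$, which is a homeomorphism between the realizations of two tensor products of intervals, we obtain
\begin{eqnarray*}
|\chi''|\circ\alpha &=& |\chi''|\circ\psi\circ\phi^{-1}\\
&=& (f\circ|\chi|)\circ\phi^{-1}\\
&=& |\chi'|\circ\phi\circ\phi^{-1}\\
&=& |\chi'|.
\end{eqnarray*}
This is precisely the hypothesis of Lemma \ref{uniqueness} (with $\xi = \chi'$, $\zeta = \chi''$), which yields $k'_i = k''_i$ for all $i$, $\alpha = id$ and $\chi' = \chi''$. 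From $\alpha = \psi\circ\phi^{-1} = id$ we conclude $\phi = \psi$, establishing uniqueness of all three pieces of data.

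For the final claim, I would apply Lemma \ref{gweakmor} with the substitutions $Q = \llbracket 0,k_1\rrbracket\otimes\cdots\otimes\llbracket 0,k_n\rrbracket$, $Q' = \llbracket 0,k'_1\rrbracket\otimes\cdots\otimes\llbracket 0,k'_n\rrbracket$, the morphism $\chi$ of the proposition, the morphism $\chi'$ just produced, and the continuous map $g = \phi$. The defining relation $f\circ|\chi| = |\chi'|\circ\phi$ is exactly the hypothesis $f\circ|\chi| = |\chi'|\circ g$ of that lemma, so Lemma \ref{gweakmor} immediately gives that $\phi$ is a weak morphism.

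I do not anticipate a genuine obstacle here, since the real work has already been carried out in the preceding four lemmas; the only point demanding slight care is checking that $\alpha = \psi\circ\phi^{-1}$ genuinely satisfies $|\chi''|\circ\alpha = |\chi'|$ so that Lemma \ref{uniqueness} applies verbatim, and that the conclusion $\alpha = id$ is correctly transported back to the equality $\phi = \psi$. Both are straightforward once the factorizations are written out as above.
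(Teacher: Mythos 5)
Your proposal is correct and follows essentially the same route as the paper's own proof: existence from condition (2) of the definition of weak morphisms, uniqueness by forming $\alpha = \psi\circ\phi^{-1}$, verifying $|\chi''|\circ\alpha = |\chi'|$ and invoking Lemma \ref{uniqueness}, and the weak-morphism property of $\phi$ from Lemma \ref{gweakmor}. Both the substitution into Lemma \ref{uniqueness} and the application of Lemma \ref{gweakmor} are set up exactly as in the paper, so there is nothing to add.
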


\begin{proof}
The existence of $k'_1, \dots, k'_n$, $\chi' $ and $\phi$ is guaranteed by the definition of weak morphisms. The fact that $\phi$ is a weak morphism follows from \ref{gweakmor}. We have to show the uniqueness of $k'_1, \dots, k'_n$, $\chi' $ and $\phi$. Suppose that the integers $l_1, \dots, l_n \geq 1$, the morphism of precubical sets $\zeta  \colon \llbracket 0,{l_1} \rrbracket \otimes \cdots \otimes \llbracket 0,{l_n} \rrbracket \to Q$ and the dihomeo\-morphism $\psi \colon |\llbracket 0,{k_1} \rrbracket\otimes \cdots \otimes \llbracket 0,{k_n} \rrbracket|  \to |\llbracket 0,{l_1} \rrbracket \otimes \cdots \otimes \llbracket 0,{l_n} \rrbracket|$ satisfy $f\circ |\chi| = |\zeta |\circ \psi$. Consider the homeomorphism $\alpha = \psi \circ \phi^{-1} \colon |\llbracket 0,{k'_1} \rrbracket \otimes \cdots \otimes \llbracket 0,{k'_n} \rrbracket| \to |\llbracket 0,{l_1} \rrbracket \otimes \cdots \otimes \llbracket 0,{l_n} \rrbracket|$. Then $|\zeta| \circ \alpha = |\zeta|\circ \psi \circ \phi^{-1} = f\circ |\chi | \circ \phi^{-1} = |\chi'| \circ \phi \circ \phi^ {-1} = |\chi'|$. By lemma \ref{uniqueness}, it follows that $k'_i = l_i$ for all $i \in \{1, \dots,n\}$, $\phi = \psi$ and $\chi' = \zeta$.  
\end{proof}

\subsection{Subdivisions} \label{defsubdiv}
A \emph{subdivision} of a precubical set $P$ is a precubical set $Q$ together with a homeomorphism $f\colon |P| \to |Q|$ satisfying the following two conditions:
\begin{enumerate}
\item for every vertex $v\in P_0$ there exists a vertex $w\in Q_0$ such that $f([v,()]) = [w,()]$;
\item for every $n \geq 1$ and every element $x\in P_n$ there exist integers $k_1 \dots, k_n \geq 1$, a morphism of precubical sets $\xi \colon \llbracket 0,{k_1} \rrbracket \otimes \cdots \otimes \llbracket 0,{k_n} \rrbracket \to Q$ and dihomeomorphisms $\phi_i \colon |\llbracket 0,1 \rrbracket| = [0,1] \to |\llbracket 0,{k_i} \rrbracket| =  [0,k_i]$ $(i \in \{1, \dots, n\})$  such that $f\circ |x_{\sharp}| = |\xi| \circ (\phi_1 \times \cdots \times \phi_n)$.
\end{enumerate}

\begin{ex}
For all $n, k_1, \dots, k_n \geq 1$, the precubical set $\llbracket 0,{k_1} \rrbracket \otimes \cdots \otimes \llbracket 0,{k_n} \rrbracket$ together with the homeomorphism $$\begin{array}{rcl}f\colon |\llbracket 0,1 \rrbracket^ {\otimes n}| = [0,1]^ n &\to& |\llbracket 0,{k_1} \rrbracket \otimes \cdots \otimes \llbracket 0,{k_n} \rrbracket| =  [0,k_1] \times \cdots \times [0,k_n],\\ (t_1, \dots, t_n) &\mapsto& (k_1t_1, \dots, k_nt_n)\end{array}$$ is a subdivision of the precubical $n$-cube $\llbracket 0,1 \rrbracket^ {\otimes n}$. It is clear that condition (1) of the definition of subdivisions is satisfied. For condition  (2) consider an integer $m \geq 1$ and an element $x = (x_1, \dots, x_n) \in (\llbracket 0,1 \rrbracket^{\otimes n})_m$. Since $\deg(x) = m$, there exist indices $1 \leq i_1 < \ldots < i_m \leq n$ such that $x_{i_1} = \ldots = x_{i_m} = [0,1]$ and $x_i \in \{0,1\}$ for $i \notin \{i_1, \dots,i_m\}$. Consider the dihomeomorphisms $\phi_j\colon [0,1] \to [0,k_{i_j}]$, $t\mapsto k_{i_j}t$ $(j\in \{1, \dots, m\})$ and the morphism of precubical sets $\xi \colon \llbracket 0,{k_{i_1}} \rrbracket \otimes \cdots \otimes \llbracket 0,{k_{i_m}} \rrbracket \to \llbracket 0,{k_1} \rrbracket \otimes \cdots \otimes \llbracket 0,{k_n} \rrbracket$ defined by $\xi (a_1,\dots a_m) = (b_1, \dots, b_n)$ with $b_{i_j} = a_j$ for $j\in \{1, \dots, m\}$ and $b_i = k_ix_i$ for  $i \notin \{i_1, \dots,i_m\}$. Then $f\circ |x_{\sharp}| = |\xi|\circ (\phi_1 \times \cdots \times \phi_m)$.

\end{ex}

\begin{prop} \label{subweak}
Let $P$ be a precubical set and $(Q,f)$ be a subdivision of $P$. Then $f$ is a weak morphism.
\end{prop}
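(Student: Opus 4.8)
The plan is to verify the two defining conditions of a weak morphism (section~\ref{defweakmor}) directly from the two conditions defining a subdivision (section~\ref{defsubdiv}). Condition~(1) is immediate: given a vertex $v\in P_0$, condition~(1) of the subdivision supplies a vertex $w\in Q_0$ with $f([v,()])=[w,()]$, and we set $f_0(v)=w$. All the work lies in condition~(2).

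So fix integers $n,k_1,\dots,k_n\ge 1$ and a morphism $\chi\colon\llbracket 0,k_1\rrbracket\otimes\cdots\otimes\llbracket 0,k_n\rrbracket\to P$. For each multi-index $I=(i_1,\dots,i_n)$ with $0\le i_j\le k_j-1$ let $c_I=([i_1,i_1+1],\dots,[i_n,i_n+1])$ be the corresponding top-dimensional cube of the domain and put $x_I=\chi(c_I)\in P_n$. By the uniqueness of the morphism associated with an element, $\chi\circ (c_I)_\sharp=(x_I)_\sharp$, so on the closed cell of $c_I$ the map $f\circ|\chi|$ agrees, after the translation identifying that cell with $|\llbracket 0,1\rrbracket^{\otimes n}|$, with $f\circ|(x_I)_\sharp|$. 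Applying condition~(2) of the subdivision to $x_I$ now yields integers $m^I_1,\dots,m^I_n\ge1$, a morphism $\xi^I\colon\llbracket 0,m^I_1\rrbracket\otimes\cdots\otimes\llbracket 0,m^I_n\rrbracket\to Q$ and dihomeomorphisms $\phi^I_j\colon[0,1]\to[0,m^I_j]$ with $f\circ|(x_I)_\sharp|=|\xi^I|\circ(\phi^I_1\times\cdots\times\phi^I_n)$.

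The crux of the argument, and the step I expect to be the main obstacle, is a coherence statement: \emph{for each $j$ the pair $(m^I_j,\phi^I_j)$ depends only on the coordinate $i_j$, and the maps $\xi^I$ agree on shared faces.} I would establish this one adjacency at a time. Suppose $I$ and $I'$ differ only in a coordinate $j'\ne j$, with $i'_{j'}=i_{j'}+1$. Then $c_I$ and $c_{I'}$ share the face $d^1_{j'}c_I=d^0_{j'}c_{I'}$, so $z:=d^1_{j'}x_I=d^0_{j'}x_{I'}\in P_{n-1}$. Because the reparametrisation $\phi^I_1\times\cdots\times\phi^I_n$ is a product, restricting the subdivision identity for $x_I$ to the face $\{t_{j'}=1\}$ of $[0,1]^n$ exhibits a subdivision datum of $z$ with sizes $(m^I_j)_{j\ne j'}$ and reparametrisations $(\phi^I_j)_{j\ne j'}$; restricting the subdivision identity for $x_{I'}$ to the face $\{t_{j'}=0\}$ exhibits one for the same $z$ with data $(m^{I'}_j,\phi^{I'}_j)_{j\ne j'}$. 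Comparing the two and applying Lemma~\ref{uniqueness} in $Q$ forces $m^I_j=m^{I'}_j$ and $\phi^I_j=\phi^{I'}_j$ for all $j\ne j'$, together with agreement of the corresponding face-restrictions of $\xi^I$ and $\xi^{I'}$. As the index set is connected under these adjacencies, $(m^I_j,\phi^I_j)$ depends only on $i_j$; write $r^{(j)}_{i}$ and $\psi^{(j)}_{i}$ for the common values.

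With the coherence in hand the assembly is routine. Put $k'_j=\sum_{i=0}^{k_j-1}r^{(j)}_i$ and glue the $\psi^{(j)}_i$, with the obvious integer offsets, into an increasing homeomorphism $\Phi_j\colon[0,k_j]\to[0,k'_j]$; then $\phi=\Phi_1\times\cdots\times\Phi_n$ is a dihomeomorphism onto $[0,k'_1]\times\cdots\times[0,k'_n]=|\llbracket 0,k'_1\rrbracket\otimes\cdots\otimes\llbracket 0,k'_n\rrbracket|$. The blocks $\llbracket a^{(1)}_{i_1},a^{(1)}_{i_1}+r^{(1)}_{i_1}\rrbracket\otimes\cdots\otimes\llbracket a^{(n)}_{i_n},a^{(n)}_{i_n}+r^{(n)}_{i_n}\rrbracket$, where $a^{(j)}_i=\sum_{i'<i}r^{(j)}_{i'}$, tile the target grid, and on the block indexed by $I$ I define $\chi'$ to be the evident translate of $\xi^I$; the face agreements from the previous paragraph guarantee that $\chi'\colon\llbracket 0,k'_1\rrbracket\otimes\cdots\otimes\llbracket 0,k'_n\rrbracket\to Q$ is a well-defined morphism of precubical sets. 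Finally I would verify $f\circ|\chi|=|\chi'|\circ\phi$ one closed top-cell at a time: over the cell of $c_I$ the identity reduces, after the translation, to $f\circ|(x_I)_\sharp|=|\xi^I|\circ(\phi^I_1\times\cdots\times\phi^I_n)$, which holds by construction, and since these cells cover $|\llbracket 0,k_1\rrbracket\otimes\cdots\otimes\llbracket 0,k_n\rrbracket|$ the identity holds throughout. This establishes condition~(2) and hence that $f$ is a weak morphism.
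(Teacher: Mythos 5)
Your argument is correct, and it is organized genuinely differently from the paper's. The paper proves the proposition by induction on $\sum_i k_i$: at each step it splits the grid $\llbracket 0,k_1\rrbracket\otimes\cdots\otimes\llbracket 0,k_n\rrbracket$ into just \emph{two} pieces along one coordinate (a unit slab $A$ and the remainder $B$), applies the inductive hypothesis to each, and glues the two resulting factorizations, using Lemma \ref{uniqueness} on the common face to see that the data match. You instead decompose the grid into \emph{all} of its top-dimensional unit cells at once, apply condition (2) of the definition of subdivisions to each image cube $x_I$, and prove a global coherence statement for adjacent cells --- again via Lemma \ref{uniqueness} applied to the shared face --- before assembling everything in a single gluing step. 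So the key lemma is identical, but the bookkeeping is distributed differently: the paper's induction absorbs the combinatorics (blocks, offsets, propagation of agreement) into the inductive hypothesis at the cost of never exhibiting the global structure, whereas your version makes explicit both the product form of the final dihomeomorphism $\phi=\Phi_1\times\cdots\times\Phi_n$ and the fact that the subdivision data $(r^{(j)}_i,\psi^{(j)}_i)$ of the $j$-th coordinate depend only on $i_j$, which arguably makes it more transparent why the construction succeeds.

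Two points need tightening. First, your coherence step invokes Lemma \ref{uniqueness} on the shared face of two adjacent cells, which is an $(n-1)$-cube; when $n=1$ this face is a vertex, and the lemma (stated only for cubes of dimension at least $1$) does not apply. In that case the required agreement $\xi^I(m^I_1)=\xi^{I'}(0)$ follows directly by evaluating the two subdivision identities at the shared vertex; note that the paper must treat $n=1$ separately in its gluing step for exactly the same reason. Second, well-definedness of $\chi'$ requires the translated $\xi^I$ to agree on \emph{all} pairwise intersections of blocks, not only on the codimension-one intersections of adjacent blocks. This does follow --- the intersection of any two blocks is contained in each of the codimension-one intersections along a chain of single-coordinate changes joining them, so adjacent agreement propagates --- but it deserves a sentence rather than being absorbed into ``routine''.
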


\begin{proof}
We only have to show that condition  (2) of the  definition of weak morphisms (see \ref{defweakmor})  holds. Let $n\geq 1$ be an integer. Consider integers $k_1, \dots, k_n \geq 1$ and a morphism of precubical sets $\chi \colon \llbracket 0,{k_1} \rrbracket\otimes \cdots \otimes \llbracket 0,{k_n} \rrbracket \to P$. We show that there exist integers $p_1, \dots, p_n \geq 1$, dihomeomorphisms $\phi_j\colon [0,k_j] \to [0,p_j]$ $(j \in \{1,\dots, n\})$ and  a morphism of precubical sets $\chi' \colon \llbracket 0,{p_1} \rrbracket \otimes \cdots \otimes \llbracket 0,{p_n} \rrbracket\to Q$ such that $f\circ |\chi| = |\chi'|\circ (\phi_1\times \cdots \times \phi_n)$. If all $k_i = 1$, then this holds by condition (2) of the definition of subdivisions. Suppose inductively that $\sum_{i=1}^nk_i > n$ and that the claim holds for all morphisms of precubical sets $\llbracket 0,{l_1} \rrbracket\otimes \cdots \otimes \llbracket 0,{l_n} \rrbracket \to P$ such that $\sum_{i=1}^nl_i < \sum_{i=1}^nk_i$. Let $s \in \{1, \dots, n\}$ be an index such that $k_s > 1$. The precubical set $\llbracket 0,{k_1} \rrbracket\otimes \cdots \otimes \llbracket 0,{k_n} \rrbracket$ is the union of the precubical subsets 
$$A = \llbracket 0,{k_1} \rrbracket\otimes \cdots \otimes \llbracket 0,{k_{s-1}} \rrbracket\otimes \llbracket 0,1 \rrbracket \otimes \llbracket 0,{k_{s+1}} \rrbracket\otimes \cdots \otimes \llbracket 0,{k_n} \rrbracket$$ and 
$$B = \llbracket 0,{k_1} \rrbracket\otimes \cdots \otimes \llbracket 0,{k_{s-1}} \rrbracket\otimes \llbracket 1,{k_s} \rrbracket  \otimes \llbracket 0,{k_{s+1}} \rrbracket\otimes \cdots \otimes \llbracket 0,{k_n}. \rrbracket$$
Denote the restrictions of $\chi$ to $A$ and $B$ by $\alpha$ and $\beta$ respectively. By the inductive hypothesis, there exist integers $q_1, \dots, q_n \geq 1$, dihomeomorphisms $\sigma_s\colon [0,1] \to [0,q_s]$ and $\sigma_j\colon [0,k_j] \to [0,q_j]$ $(j\in \{1,\dots,n\}\setminus\{s\})$  and a morphism of precubical sets $\alpha' \colon \llbracket 0,{q_1} \rrbracket\otimes \cdots \otimes \llbracket 0,{q_n} \rrbracket \to Q$ such that $f\circ |\alpha| = |\alpha'|\circ (\sigma_1 \times \cdots \times \sigma_n)$. 

Consider the isomorphism of precubical sets 
$$\eta\colon \llbracket 0,{k_1} \rrbracket\otimes \cdots \otimes \llbracket 0,{k_{s-1}} \rrbracket\otimes  \llbracket 0,{k_s-1} \rrbracket \otimes \llbracket 0,{k_{s+1}} \rrbracket\otimes \cdots \otimes \llbracket 0,{k_n} \rrbracket \to B$$
given by $$(y_1, \dots, y_{s-1},i,y_{s+1}, \dots, y_n) \mapsto (y_1, \dots, y_{s-1},i+1,y_{s+1}, \dots, y_n)$$ for $i\in \{0, \dots, k_s-1\}$ and by $$(y_1, \dots, y_{s-1},[i-1,i],y_{s+1}, \dots, y_n) \mapsto (y_1, \dots, y_{s-1},[i,i+1],y_{s+1}, \dots, y_n)$$ for $i\in \{1, \dots, k_s-1\}$.  By the inductive hypothesis, there exist integers $r_1, \dots, r_n \geq 1$, dihomeomorphisms $\psi_j\colon [0,k_j] \to [0,r_j]$ $(j\in \{1,\dots,m\}\setminus\{s\})$ and $\psi_s\colon [0,k_s-1] \to [0,r_s]$ and a morphism of precubical sets $\gamma \colon \llbracket 0,{r_1} \rrbracket\otimes \cdots \otimes  \llbracket 0,{r_n} \rrbracket \to Q$ such that $f\circ |\beta \circ \eta| = |\gamma|\circ (\psi_1 \times \cdots \times \psi_n)$. 

Consider the precubical set $$B' = \llbracket 0,{r_1} \rrbracket\otimes \cdots \otimes \llbracket 0,{r_{s-1}} \rrbracket\otimes  \llbracket q_s,{q_s+r_s} \rrbracket  \otimes \llbracket 0,{r_{s+1}} \rrbracket\otimes \cdots \otimes \llbracket 0,{r_n} \rrbracket$$ and the isomorphism of precubical sets $\mu\colon \llbracket 0,{r_1} \rrbracket\otimes \cdots \otimes \llbracket 0,{r_n} \rrbracket \to B'$ given by $$(y_1, \dots, y_{s-1},i,y_{s+1}, \dots, y_n) \mapsto (y_1, \dots, y_{s-1},q_s+i,y_{s+1}, \dots, y_n)$$ for $i\in \{0, \dots, r_s\}$ and by $$(y_1, \dots, y_{s-1},[i-1,i],y_{s+1}, \dots, y_n) \mapsto (y_1, \dots, y_{s-1},[q_s+i-1,q_s+i],y_{s+1}, \dots, y_n)$$ for $i\in \{1, \dots, r_s\}$. Consider the morphism of precubical sets $\beta' \colon B' \to Q$ defined by $\beta ' = \gamma \circ \mu^ {-1}$ and the dihomeomorphism $\theta_s\colon [1, k_s] \to [q_s,q_s+r_s]$ given by $\theta_s(t) = q_s+\psi_s(t-1)$. For $j \in \{1, \dots ,n\}\setminus \{s\}$ set $\theta_j = \psi_j$. Then $f\circ |\beta | = |\beta'|\circ (\theta_1 \times \cdots \times \theta_n)$.

Set $A' = \llbracket 0,{q_1} \rrbracket\otimes \cdots \otimes \llbracket 0,{q_n} \rrbracket$. We show that $q_j = r_j$ and $\sigma _j = \theta_j$ for all $j \in \{1, \dots, n\} \setminus \{s\}$ and that 
$\alpha'$ and $\beta'$ coincide on $A'\cap B'$. If $n= 1$, we only have to show that $\alpha'$ and $\beta'$ coincide on $A'\cap B' = \{q_s\}$. We have $|\alpha'|(q_s) = |\alpha'|\circ \sigma_s(1) = f\circ |\alpha| (1) = f \circ |\chi| (1) = f\circ |\beta|(1) = |\beta '| \circ \theta_s(1) = |\beta'|(q_s)$. Thus, $\alpha'(q_s) = \beta'(q_s)$. Suppose now that $n > 1$. Consider the composite morphisms of precubical sets
$$\xi\colon \llbracket 0,{k_1} \rrbracket\otimes \cdots \otimes \llbracket 0,{k_{s-1}} \rrbracket \otimes \llbracket 0,{k_{s+1}} \rrbracket\otimes \cdots \otimes \llbracket 0,{k_n} \rrbracket \iso A\cap B \hookrightarrow A$$
and \begin{eqnarray*}\lefteqn{\xi'\colon \llbracket 0,{q_1} \rrbracket\otimes \cdots \otimes \llbracket 0,{q_{s-1}} \rrbracket \otimes \llbracket 0,{q_{s+1}} \rrbracket\otimes \cdots \otimes \llbracket 0,{q_n} \rrbracket }\\ &\iso & \llbracket 0,{q_1} \rrbracket\otimes \cdots \otimes \llbracket 0,{q_{s-1}} \rrbracket \otimes \{q_s\}\otimes \llbracket 0,{q_{s+1}} \rrbracket\otimes \cdots \otimes \llbracket 0,{q_n} \rrbracket\\ &\hookrightarrow& A'.\end{eqnarray*} We have $ (\sigma_1 \times \cdots \times \sigma_n)\circ |\xi|=|\xi'|\circ (\sigma_1 \times \cdots \times \sigma_{s-1}\times \sigma_{s+1}\times \cdots \times  \sigma_n) $ and hence $f\circ |\alpha \circ \xi| = |\alpha' \circ \xi'|\circ (\sigma_1 \times \cdots \times \sigma_{s-1}\times \sigma_{s+1}\times \cdots \times \sigma_n)$. Consider the composite morphisms of precubical sets
$$\zeta\colon \llbracket 0,{k_1} \rrbracket\otimes \cdots \otimes \llbracket 0,{k_{s-1}} \rrbracket \otimes \llbracket 0,{k_{s+1}} \rrbracket\otimes \cdots \otimes \llbracket 0,{k_n} \rrbracket \iso A\cap B \hookrightarrow B$$
and \begin{eqnarray*}\lefteqn{\zeta'\colon \llbracket 0,{r_1} \rrbracket\otimes \cdots \otimes \llbracket 0,{r_{s-1}} \rrbracket \otimes \llbracket 0,{r_{s+1}} \rrbracket\otimes \cdots \otimes \llbracket 0,{r_n} \rrbracket }\\ &\iso & \llbracket 0,{r_1} \rrbracket\otimes \cdots \otimes \llbracket 0,{r_{s-1}} \rrbracket \otimes \{q_s\}\otimes \llbracket 0,{r_{s+1}} \rrbracket\otimes \cdots \otimes \llbracket 0,{r_n} \rrbracket\\ &\hookrightarrow& B'.\end{eqnarray*} We have $(\theta_1 \times \cdots \times \theta_n)\circ |\zeta| = |\zeta'|\circ (\theta_1 \times \cdots \times \theta_{s-1}\times \theta_{s+1}\times \cdots \times  \theta_n)$ and hence $f\circ |\beta \circ \zeta| = |\beta' \circ \zeta'|\circ (\theta_1 \times \cdots \times \theta_{s-1}\times \theta_{s+1}\times \cdots \times \theta_n)$. Since $\alpha \circ \xi = \beta \circ \zeta$, we have $|\beta'\circ \zeta'| \circ (\theta_1 \times \cdots \times \theta_{s-1}\times \theta_{s+1}\times \cdots \times \theta_n)\circ  (\sigma_1 \times \cdots \times \sigma_{s-1}\times \sigma_{s+1}\times \cdots \times  \sigma_n)^ {-1} = |\alpha' \circ \xi'|$.  It follows by \ref{uniqueness} that for all $j \in \{1, \dots, n\} \setminus \{s\}$, $q_j = r_j$ and $\sigma _j = \theta_j$. Moreover, $\alpha' \circ \xi' = \beta' \circ \zeta'$. Since $\alpha' \circ \xi' = \beta' \circ \zeta'$, $\alpha'$ and $\beta'$ coincide on $A'\cap B'$. 

Set $p_s=q_s+r_s$ and $p_j = q_j=r_j$ $(j \in \{1, \dots, n\} \setminus \{s\})$. Then we have $\llbracket 0,{p_1} \rrbracket\otimes \cdots \otimes \llbracket 0,{p_n} \rrbracket = A' \cup B'$. Let $\chi'\colon \llbracket 0,{p_1} \rrbracket\otimes \cdots \otimes \llbracket 0,{p_n} \rrbracket \to Q$ be the unique morphism of precubical sets such that $\chi'|_{A'} = \alpha'$ and $\chi'|_{B'} = \beta'$. Consider the dihomeomorphism $\phi_s\colon [0,k_s] \to [0,p_s]$ defined by
$$\phi_s(t) = \left\{\begin{array}{ll} \sigma_s (t),& 0 \leq t \leq 1,\\ \theta_s(t), & 1\leq t \leq k_s. \end{array}\right.$$
For $j \in \{1, \dots, n\} \setminus \{s\}$ set $\phi_j = \sigma_j = \theta_j$. We have $f\circ |\chi| = |\chi'|\circ (\phi_1 \times \cdots \times \phi_n)$. This terminates the induction and the proof.
\end{proof}

\subsection{Weak morphisms and paths} Let $f \colon |P| \to |P'|$ be a weak morphism of precubical sets and $\omega \colon \llbracket 0,k \rrbracket \to P$ $(k \geq 0)$ be a path. If $k > 0$, we denote by $f^{\mathbb I}(\omega)$ the unique path $\omega ' \colon \llbracket 0,{k'} \rrbracket \to P'$ for which there exists a dihomeomorphism $\phi \colon |\llbracket 0,k \rrbracket| \to |\llbracket 0,{k'} \rrbracket|$ such that $f\circ |\omega | = |\omega' |\circ \phi$. If $k= 0$, $f^{\mathbb I}(\omega)$ is defined to be the path in $P'$ of length $0$ given by $f^{\mathbb I}(\omega)(0) = f_0(\omega(0))$. Note that if $g\colon P \to P'$ is a morphism of precubical sets, then $|g|^{\mathbb I}(\omega) = g\circ \omega$. The next three propositions contain the basic properties of the maps $f^{\mathbb I}\colon P^{\mathbb I} \to P'^{\mathbb I}$. 

\begin{prop} \label{fomega} Let $f \colon |P| \to |P'|$ be a weak morphism of precubical sets and $\omega \colon \llbracket 0,k \rrbracket \to P$ be a path. Then  $f^{\mathbb I}(\omega )$ is a path from $f_0(\omega(0))$ to $f_0(\omega(k))$.
\end{prop}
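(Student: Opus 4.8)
The plan is to read off the endpoints of $f^{\mathbb I}(\omega)$ directly from its defining relation $f\circ |\omega| = |\omega'|\circ \phi$, by evaluating this equation at the two endpoints of the interval. The case $k=0$ is immediate: here $\omega(0) = \omega(k)$, and $f^{\mathbb I}(\omega)$ is by definition the length-$0$ path sitting at $f_0(\omega(0)) = f_0(\omega(k))$, so there is nothing to prove. I therefore assume $k>0$ and write $\omega' = f^{\mathbb I}(\omega)\colon \llbracket 0,{k'}\rrbracket \to P'$, with $\phi\colon [0,k] \to [0,k']$ the dihomeomorphism furnished by the definition of $f^{\mathbb I}$.

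The key observation is that $\phi$ must send the endpoints of $[0,k]$ to the endpoints of $[0,k']$. Indeed, being a dihomeomorphism between intervals, both $\phi$ and $\phi^{-1}$ preserve the natural order of $\R$; as $\phi$ is a continuous bijection of $[0,k]$ onto $[0,k']$, this forces $\phi$ to be strictly increasing, whence $\phi(0) = 0$ and $\phi(k) = k'$. (One could equally invoke \ref{orderhomeo} in the case $n=1$.)

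It then remains to evaluate and compare. Under the standard identification $|\llbracket 0,k\rrbracket| = [0,k]$, the vertex $0$ corresponds to the point $0$ and the vertex $k$ to the point $k$, so that $|\omega|(0) = [\omega(0),()]$ and $|\omega|(k) = [\omega(k),()]$. Applying $f$ and using condition (1) of the definition of weak morphisms gives $f(|\omega|(0)) = f([\omega(0),()]) = [f_0(\omega(0)),()]$ and, likewise, $f(|\omega|(k)) = [f_0(\omega(k)),()]$. On the other hand, $f\circ|\omega| = |\omega'|\circ\phi$ together with the endpoint behaviour of $\phi$ yields $f(|\omega|(0)) = |\omega'|(\phi(0)) = |\omega'|(0) = [\omega'(0),()]$ and $f(|\omega|(k)) = |\omega'|(\phi(k)) = |\omega'|(k') = [\omega'(k'),()]$. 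Comparing the two computations produces $\omega'(0) = f_0(\omega(0))$ and $\omega'(k') = f_0(\omega(k))$, which is precisely the assertion. I do not anticipate any genuine obstacle here; the only point requiring a moment's care is the endpoint behaviour of $\phi$, and that is settled entirely by its order-preservation.
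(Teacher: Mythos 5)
Your proof is correct and follows essentially the same route as the paper's: evaluate the defining relation $f\circ|\omega| = |f^{\mathbb I}(\omega)|\circ\phi$ at the endpoints of $[0,k]$, use that the dihomeomorphism $\phi$ fixes endpoints, and apply condition (1) of the definition of weak morphisms to identify the images. Your brief justification that a dihomeomorphism of intervals is strictly increasing merely spells out what the paper asserts in one line, so there is no substantive difference.
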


\begin{proof}
This is clear for $k=0$. Suppose that $k > 0$ and set $k' = \length(f^{\mathbb I}(\omega))$. Let $\phi \colon |\llbracket 0,k \rrbracket| = [0,k] \to |\llbracket 0,{k'} \rrbracket| = [0,k']$ be the dihomeomorphism such that $f\circ |\omega | = |f^{\mathbb I}(\omega) |\circ \phi$. Since $\phi$ is a dihomeomorphism, $\phi(0) = 0$ and $\phi(k) = k'$. We therefore have $[f^{\mathbb I}(\omega)(0),()] = |f^{\mathbb I}(\omega)|(0) = |f^{\mathbb I}(\omega)|\circ \phi(0) = f\circ |\omega|(0) = f([\omega (0),()]) = [f_0(\omega (0)),()]$ and $[f^{\mathbb I}(\omega)(k'),()] = |f^{\mathbb I}(\omega)|(k') = |f^{\mathbb I}(\omega)|\circ \phi(k) = f\circ |\omega|(k) = f([\omega (k),()]) = [f_0(\omega (k)),()]$. It follows that $f^{\mathbb I}(\omega)(0) = f_0(\omega(0))$ and $f^{\mathbb I}(\omega)(k') = f_0(\omega(k))$. 
\end{proof}

\begin{prop} \label{fJfunctor}
Let $f\colon |P| \to |Q|$ and $g\colon |Q| \to |R|$ be weak morphisms of precubical sets. Then $(g\circ f)^{\mathbb I} = g^{\mathbb I} \circ f^{\mathbb I}$.
\end{prop}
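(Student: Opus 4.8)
The plan is to reduce everything to the uniqueness statement of Proposition \ref{unicity} (applied with a single tensor factor, $n=1$), which is precisely what makes $f^{\mathbb I}$, $g^{\mathbb I}$ and $(g\circ f)^{\mathbb I}$ well-defined. First I would dispose of the degenerate case $k = 0$: here $f^{\mathbb I}(\omega)$ and $g^{\mathbb I}(f^{\mathbb I}(\omega))$ are the length-$0$ paths with values $f_0(\omega(0))$ and $g_0(f_0(\omega(0)))$, while $(g\circ f)^{\mathbb I}(\omega)$ is the length-$0$ path with value $(g\circ f)_0(\omega(0))$; these agree because $(g\circ f)_0 = g_0\circ f_0$ by Proposition \ref{composition}.

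For the main case $k > 0$, write $\omega' = f^{\mathbb I}(\omega)$, a path of some length $k'$, and let $\phi\colon |\llbracket 0,k\rrbracket| \to |\llbracket 0,{k'}\rrbracket|$ be the associated dihomeomorphism, so that $f\circ|\omega| = |\omega'|\circ\phi$. A first small point to record is that $k' \geq 1$: a dihomeomorphism, being in particular a homeomorphism, cannot map the interval $[0,k]$ (with $k>0$) onto the one-point space $|\llbracket 0,0\rrbracket|$. Consequently $g^{\mathbb I}(\omega')$ is governed by the same positive-length clause of the definition, so there are a path $\omega''$ of length $k'' \geq 1$ and a dihomeomorphism $\psi\colon |\llbracket 0,{k'}\rrbracket| \to |\llbracket 0,{k''}\rrbracket|$ with $g\circ|\omega'| = |\omega''|\circ\psi$.

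Next I would form the composite $\psi\circ\phi\colon |\llbracket 0,k\rrbracket| \to |\llbracket 0,{k''}\rrbracket|$. Since dihomeomorphisms are closed under composition (both the map and its inverse remain order preserving), $\psi\circ\phi$ is again a dihomeomorphism, and the chain of equalities $(g\circ f)\circ|\omega| = g\circ(f\circ|\omega|) = g\circ(|\omega'|\circ\phi) = (g\circ|\omega'|)\circ\phi = (|\omega''|\circ\psi)\circ\phi = |\omega''|\circ(\psi\circ\phi)$ shows that the pair $(\omega'', \psi\circ\phi)$ satisfies the defining equation for $(g\circ f)^{\mathbb I}(\omega)$. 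Because a path $\llbracket 0,k\rrbracket \to P$ is precisely a morphism out of a one-factor tensor product, Proposition \ref{unicity} guarantees that the path and the dihomeomorphism occurring in this equation are unique; hence $(g\circ f)^{\mathbb I}(\omega) = \omega'' = g^{\mathbb I}(f^{\mathbb I}(\omega))$, as required.

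There is essentially no hard step here: the whole argument is a compatibility check resting on Propositions \ref{composition} and \ref{unicity}. The one place that deserves attention is verifying that the intermediate path $f^{\mathbb I}(\omega)$ has positive length whenever $\omega$ does, so that the positive-length clause of the definition of $g^{\mathbb I}$ applies to it and the two cases of the definition do not become entangled.
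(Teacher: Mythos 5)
Your proof is correct and takes essentially the same route as the paper's: settle the length-zero case via $(g\circ f)_0 = g_0\circ f_0$ from Proposition \ref{composition}, then for positive length compose the two dihomeomorphisms $\psi\circ\phi$ and conclude by the uniqueness clause in the definition of $(g\circ f)^{\mathbb I}$ (i.e.\ Proposition \ref{unicity}). Your extra verification that $f^{\mathbb I}(\omega)$ has positive length is a reasonable point of care that the paper leaves implicit, since condition (2) of the definition of weak morphisms already requires the target integers to be at least $1$.
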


\begin{proof}
Let $\omega \colon \llbracket 0,k \rrbracket \to P$ be a path. Suppose first that $k=0$. Then $f^{\mathbb I}(\omega)$, $g^{\mathbb I}\circ f^{\mathbb I}(\omega)$ and $(g\circ f)^{\mathbb I}(\omega)$ are paths of length $0$. We have $$g^{\mathbb I}\circ f^{\mathbb I}(\omega) (0) = g_0\circ f_0 (\omega (0)) = (g\circ f)_0(\omega (0)) = (g\circ f)^{\mathbb I}(\omega)(0)$$ and therefore $(g\circ f)^{\mathbb I}(\omega)  = g^{\mathbb I} \circ f^{\mathbb I}(\omega)$. Suppose that $k > 0$ and set $k' = \length(f^{\mathbb I}(\omega))$ and $k'' = \length(g^{\mathbb I}(f^{\mathbb I}(\omega)))$. Let $\phi \colon |\llbracket 0,k \rrbracket| \to |\llbracket 0,{k'} \rrbracket|$ and $\psi \colon |\llbracket 0,{k'} \rrbracket| \to |\llbracket 0,{k''} \rrbracket|$ be dihomeomorphisms such that $f\circ |\omega | = |f^{\mathbb I}(\omega) |\circ \phi$ and $g\circ |f^{\mathbb I}(\omega)| = |g^{\mathbb I}(f^{\mathbb I}(\omega)) |\circ \psi$. Then $g \circ f\circ |\omega | = |g^{\mathbb I}(f^{\mathbb I}(\omega)) |\circ \psi \circ \phi$. Since $\psi \circ \phi$ is a dihomeomorphism, this implies that $(g\circ f)^{\mathbb I}(\omega) = g^{\mathbb I}(f^{\mathbb I}(\omega))$.
\end{proof}

\begin{prop} \label{fJconcat}
Let $f\colon |P| \to |Q|$ be a weak morphism of precubical sets and $\omega\colon \llbracket 0,k \rrbracket \to P$ and $\nu \colon \llbracket 0,l \rrbracket \to P$ be paths in $P$ such that $\omega (k) = \nu (0)$. Then $f^{\mathbb I}(\omega \cdot \nu) = f^{\mathbb I}(\omega ) \cdot f^{\mathbb I}(\nu)$.
\end{prop}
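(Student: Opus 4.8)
The plan is to exhibit an explicit dihomeomorphism witnessing that the path $f^{\mathbb I}(\omega) \cdot f^{\mathbb I}(\nu)$ satisfies the defining property of $f^{\mathbb I}(\omega\cdot\nu)$, and then to conclude by the uniqueness established in Proposition \ref{unicity}. First I would dispose of the degenerate cases. If $k=0$, then $\omega\cdot\nu = \nu$ and $f^{\mathbb I}(\omega)$ is the length-$0$ path at $f_0(\omega(0))$; since $\omega(0)=\omega(k)=\nu(0)$, Proposition \ref{fomega} gives $f^{\mathbb I}(\nu)(0)=f_0(\nu(0))=f^{\mathbb I}(\omega)(0)$, whence $f^{\mathbb I}(\omega)\cdot f^{\mathbb I}(\nu)=f^{\mathbb I}(\nu)=f^{\mathbb I}(\omega\cdot\nu)$. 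The case $l=0$ is symmetric, so I may assume $k,l>0$.

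In this case I would set $k'=\length(f^{\mathbb I}(\omega))$ and $l'=\length(f^{\mathbb I}(\nu))$, and let $\phi\colon [0,k]\to[0,k']$ and $\psi\colon [0,l]\to[0,l']$ be the dihomeomorphisms furnished by the definition of $f^{\mathbb I}$, so that $f\circ|\omega|=|f^{\mathbb I}(\omega)|\circ\phi$ and $f\circ|\nu|=|f^{\mathbb I}(\nu)|\circ\psi$. Before anything else I would check that the right-hand concatenation is defined: by Proposition \ref{fomega}, $f^{\mathbb I}(\omega)(k')=f_0(\omega(k))$ and $f^{\mathbb I}(\nu)(0)=f_0(\nu(0))$, and these agree because $\omega(k)=\nu(0)$. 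The candidate witness is the map $\chi\colon[0,k+l]\to[0,k'+l']$ given by $\chi(t)=\phi(t)$ for $t\in[0,k]$ and $\chi(t)=k'+\psi(t-k)$ for $t\in[k,k+l]$. Since dihomeomorphisms of intervals send endpoints to endpoints, $\phi(k)=k'=k'+\psi(0)$, so the two pieces agree at $t=k$ and $\chi$ is a well-defined, strictly increasing, continuous bijection, hence a dihomeomorphism.

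It then remains to verify $f\circ|\omega\cdot\nu|=|f^{\mathbb I}(\omega)\cdot f^{\mathbb I}(\nu)|\circ\chi$, which I would do piecewise using the identities $|\omega\cdot\nu|=|\omega|\cdot|\nu|$ and $|f^{\mathbb I}(\omega)\cdot f^{\mathbb I}(\nu)|=|f^{\mathbb I}(\omega)|\cdot|f^{\mathbb I}(\nu)|$ from the Examples of Section \ref{geomreal}. On $[0,k]$ both sides reduce to $f\circ|\omega|=|f^{\mathbb I}(\omega)|\circ\phi$, while on $[k,k+l]$, using $\chi(t)-k'=\psi(t-k)$, both sides reduce to $f\circ|\nu|=|f^{\mathbb I}(\nu)|\circ\psi$ evaluated at $t-k$. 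Once this equality holds, the uniqueness clause of Proposition \ref{unicity} (applied with $n=1$ to the morphism $\omega\cdot\nu\colon \llbracket 0,k+l \rrbracket \to P$) forces $f^{\mathbb I}(\omega\cdot\nu)=f^{\mathbb I}(\omega)\cdot f^{\mathbb I}(\nu)$. I do not expect a serious obstacle here; the only point requiring care is the bookkeeping at the splice point $t=k$ (equivalently $s=k'$), where continuity of $\chi$ together with the preservation of endpoints by dihomeomorphisms must be invoked to guarantee both that the two piecewise definitions match and that $\chi$ is genuinely an order-isomorphism of the intervals.
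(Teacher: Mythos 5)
Your proposal is correct and follows essentially the same route as the paper's own proof: the paper constructs exactly the same piecewise dihomeomorphism (calling it $\alpha$ rather than $\chi$), performs the same piecewise verification of $f\circ|\omega\cdot\nu| = |f^{\mathbb I}(\omega)\cdot f^{\mathbb I}(\nu)|\circ\alpha$ using $|\omega\cdot\nu|=|\omega|\cdot|\nu|$, and concludes by the uniqueness in the definition of $f^{\mathbb I}$. The only difference is cosmetic: the paper dismisses the degenerate cases with ``we may suppose $k,l>0$'' where you spell them out, and it leaves the appeal to Proposition \ref{unicity} implicit where you invoke it explicitly.
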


\begin{proof}
We may suppose that $k,l >0$. Suppose that $f^{\mathbb I}(\omega)$ and $f^{\mathbb I}(\nu)$ are of length $k'$ and $l'$ respectively. Note that by \ref{fomega}, $f^{\mathbb I}(\omega) (k') = f_0(\omega(k)) = f_0(\nu(0)) = f^{\mathbb I}(\nu)(0)$. Let $\phi \colon |\llbracket 0,k \rrbracket| \to |\llbracket 0,{k'} \rrbracket|$ and $\psi \colon |\llbracket 0,l \rrbracket| \to |\llbracket 0,{l'} \rrbracket|$ be dihomeomorphisms such that $|f^{\mathbb I}(\omega)| \circ \phi = f \circ |\omega |$ and $|f^{\mathbb I}(\nu)| \circ \psi = f \circ |\nu|$. Consider the dihomeomorphism $\alpha \colon |\llbracket 0,{k+l} \rrbracket| = [0,k+l] \to |\llbracket 0,{k'+l'} \rrbracket|= [0,k'+l']$ defined by $$\alpha (t) = \left\{ \begin{array}{ll} \phi (t), & 0 \leq t \leq k,\\ k'+\psi (t-k), & k \leq t \leq k+l. \end{array} \right.$$ For $0 \leq t \leq k$, $|f^{\mathbb I}(\omega)\cdot f^{\mathbb I}(\nu)|\circ \alpha (t) = |f^{\mathbb I}(\omega)\cdot f^{\mathbb I}(\nu)| (\phi (t)) = |f^{\mathbb I}(\omega)|(\phi(t)) =$ \linebreak  $f\circ |\omega| (t) = f \circ |\omega \cdot \nu |(t)$. For $k \leq t \leq k+l$, $|f^{\mathbb I}(\omega)\cdot f^{\mathbb I}(\nu)|\circ \alpha (t) = $\linebreak $|f^{\mathbb I}(\omega)\cdot f^{\mathbb I}(\nu)| (k'+\psi (t-k)) = |f^{\mathbb I}(\nu)| (\psi (t-k)) = f \circ |\nu| (t-k) = f \circ |\omega \cdot \nu| (t)$. We therefore have $|f^{\mathbb I}(\omega)\cdot f^{\mathbb I}(\nu)|\circ \alpha = f \circ |\omega \cdot \nu|$ and hence $f^{\mathbb I}(\omega \cdot \nu) = f^{\mathbb I}(\omega ) \cdot f^{\mathbb I}(\nu)$.
\end{proof}

\subsection{Weak morphisms of HDAs }
 \label{defHDAmor}
A \emph{weak morphism} from an $M$-HDA  $\A = (P,I, F,\lambda)$ to an $M$-HDA $\B = (P',I', F',\lambda')$  is a weak morphism $f\colon |P| \to |P'|$  such that $f_0(I) \subseteq I'$, $f_0(F) \subseteq F'$ and  $\overline{\lambda'}\circ f^{\mathbb I} = \overline{\lambda}$. We say that an $M$-HDA $\B = (P',I', F',\lambda')$ is a \emph{subdivision} of an $M$-HDA  $\A = (P,I, F,\lambda)$ if there exists a weak morphism $f$ from $\A$ to $\B$ such that $f_0(I) = I'$, $f_0(F) = F'$ and $(P',f)$ is a subdivision of $P$.

We remark that a morphism from an $M$-HDA $\A = (P,I, F,\lambda)$ to an $M$-HDA $\B = (P',I', F',\lambda')$ is a weak morphism. More preceisely, if $g\colon P \to P'$ is a morphism of precubical sets such that $g(I) \subseteq I'$, $g(F) \subseteq F'$ and $\lambda' \circ g = \lambda$, then $|g|$ is a weak morphism of $M$-HDAs.

\begin{defin} 
Let $\A$ and $\B$ be two $M$-HDAs. We write $A \to B$ if there exists a weak morphism from $\A$ to $\B$. We write $\A \leftrightarrow \B$ if $\A \to \B$ and $\B \to \A$.
\end{defin}

It follows from \ref{composition} and \ref{fJfunctor} that $\to$ and $\leftrightarrow$ are, respectively, a preorder and an equivalence relation on the class of $M$-HDAs.  

\begin{ex}
Consider the $1$-dimensional precubical set $P$ given by $P_0 = \{u,v,w\}$, $P_1 = \{x,y,z\}$, $d_1^0x = d_1^0z = u$, $d_1^1x = d_1^0y = v$ and $d_1^1y = d_1^1z = w$. Consider the set of labels $\Sigma = \{a,b\}$ and the $\Sigma^*$-HDAs $\A = (P,\{u\},\{w\},\lambda)$ and $\B = (P\setminus \{z\},\{u\},\{w\},\lambda|_{\{x,y\}})$ where the labelling function $\lambda$ is given by $\lambda (x) = a$, $\lambda (y) = b$ and $\lambda (z) = ab$. Then $\A \leftrightarrow \B$. Note that there is no morphism of $\Sigma^*$-HDAs from $\A$ to $\B$.
\end{ex}

\begin{prop} \label{lalb}
$\A \to \B \Rightarrow L(\A) \subseteq L(\B)$ and $\A \leftrightarrow \B \Rightarrow L(\A) = L(\B)$. 
\end{prop}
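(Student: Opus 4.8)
The plan is to take an arbitrary element of $L(\A)$, realise it as the label of an accepting path $\omega$ in $P$, and push $\omega$ forward along the weak morphism by means of the induced map $f^{\mathbb I}$ to obtain an accepting path in $P'$ carrying the same label. First I would unpack the hypothesis $\A \to \B$: it provides a weak morphism of $M$-HDAs $f\colon |P| \to |P'|$, which by the definition in \ref{defHDAmor} satisfies $f_0(I) \subseteq I'$, $f_0(F) \subseteq F'$ and $\overline{\lambda'}\circ f^{\mathbb I} = \overline{\lambda}$.

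Given $m \in L(\A)$, I would write $m = \overline{\lambda}(\omega)$ for a path $\omega \colon \llbracket 0,k \rrbracket \to P$ with $\omega(0) \in I$ and $\omega(k) \in F$. The key step is to consider the path $f^{\mathbb I}(\omega) \in P'^{\mathbb I}$. By Proposition \ref{fomega}, $f^{\mathbb I}(\omega)$ runs from $f_0(\omega(0))$ to $f_0(\omega(k))$. Since $\omega(0) \in I$ and $\omega(k) \in F$, the endpoint conditions $f_0(I) \subseteq I'$ and $f_0(F) \subseteq F'$ yield $f^{\mathbb I}(\omega)(0) \in I'$ and $f^{\mathbb I}(\omega)(\length(f^{\mathbb I}(\omega))) \in F'$; that is, $f^{\mathbb I}(\omega)$ is an accepting path in $\B$.

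It then remains to compute its label. Using the defining identity $\overline{\lambda'}\circ f^{\mathbb I} = \overline{\lambda}$ of a weak morphism of HDAs, I obtain $\overline{\lambda'}(f^{\mathbb I}(\omega)) = \overline{\lambda}(\omega) = m$, so $m \in L(\B)$. As $m$ was arbitrary, this establishes $L(\A) \subseteq L(\B)$. For the equivalence statement I would simply invoke the first implication in both directions: $\A \leftrightarrow \B$ means $\A \to \B$ and $\B \to \A$, whence $L(\A) \subseteq L(\B)$ and $L(\B) \subseteq L(\A)$, so $L(\A) = L(\B)$.

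I do not anticipate any serious obstacle here: the statement is a direct consequence of the machinery already developed, and the only point demanding care is the correct matching of the endpoints of $f^{\mathbb I}(\omega)$ supplied by Proposition \ref{fomega} with the initial- and final-state conditions built into the definition of a weak morphism of HDAs.
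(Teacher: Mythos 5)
Your proposal is correct and follows essentially the same argument as the paper: push an accepting path forward along $f^{\mathbb I}$, use Proposition \ref{fomega} together with $f_0(I)\subseteq I'$ and $f_0(F)\subseteq F'$ to see it is accepting in $\B$, and use $\overline{\lambda'}\circ f^{\mathbb I}=\overline{\lambda}$ to match labels. The second implication follows from the first applied in both directions, exactly as in the paper.
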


\begin{proof}
Let $\A = (P,I, F,\lambda)$ and $\B = (P',I', F',\lambda')$ be $M$-HDAs such that $\A \to \nolinebreak \B$. Let $f\colon |P| \to |P'|$ be a weak morphism from $\A$ to $\B$. Consider a path $\omega \colon \llbracket 0,k\rrbracket \to P$ such that $\omega (0) \in I$ and $\omega (k) \in F$. Then $f^{\mathbb I}(\omega )(0) = f_0(\omega(0)) \in I'$ and $f^{\mathbb I}(\omega )(\length(f^{\mathbb I}(\omega) )) = f_0(\omega (k)) \in F'$. We have $\overline{\lambda} (\omega) = \overline{\lambda'}(f^{\mathbb I}(\omega)) \in L(\B)$. This shows that $L(\A) \subseteq L(\B)$. The result follows.
\end{proof}

\section{Traces} \label{HDA}

We define the trace language and the trace category of an HDA and study how these objects behave under weak morphisms. The fundamental concept for both constructions is dihomotopy of paths. We also introduce trace equivalences of HDAs and the preorder relation of trace equivalent abstraction. The trace category is invariant under trace equivalent abstraction by definition, and we show that this is often also the case for the trace language.

\subsection{Dihomotopy} Two paths $\omega$ and $\nu $ in a precubical set $P$ are said to be \emph{elementarily dihomotopic} (or \emph{contiguous} or \emph{adjacent}) if there exist paths $\alpha, \beta \in P^ {\mathbb I}$ and an element $z \in P_2$ such that $d_1^0d_1^0z = \alpha (\length(\alpha))$, $d_1^1d_1^1z = \beta (0)$ and $\{\omega ,\nu \} = \{\alpha \cdot (d_1^0z)_{\sharp} \cdot (d_2^ 1z)_{\sharp} \cdot \beta, \alpha \cdot (d_2^0z)_{\sharp} \cdot (d_1^ 1z)_{\sharp} \cdot \beta \}$ (see   \cite{FajstrupGR, Goubault}). The \emph{dihomotopy} relation, denoted by $\sim$, is the equivalence relation generated by elementary dihomotopy.

\begin{rems}
(i) For a notion of homotopy for more general paths (essentially the cube paths we consider in section \ref{homeoabs}) the reader is referred to \cite{vanGlabbeek}.

(ii) Dihomotopic paths have the same end points and the  same length.

(iii) Dihomotopy is a congruence. More precisely, if $\alpha, \beta, \omega$ and $\nu$ are paths in a precubical set such that $\beta (0) = \alpha (\length(\alpha))$, $\nu (0)= \omega(\length(\omega))$, $\alpha \sim \omega$ and $\beta \sim \nu$, then $\alpha \cdot \beta \sim \omega \cdot \nu$.

(iv) If $\alpha $ and $\omega $ are (elementarily) dihomotopic paths in a precubical set $P$ and $f\colon P \to Q$ is a morphism of precubical sets, then $f\circ \alpha$ and $f \circ \omega$ are (elementarily) dihomotopic paths in $Q$.
\end{rems}

\subsection{The trace language of an HDA} \label{indep}  
Let $\A = (P,I, F,\lambda)$ be an $M$-HDA. We say that $a,b \in M$ are \emph{independent in $\A$} if 
\begin{enumerate}
\item $a,b \in \lambda(P_1)$;
\item $a\not= b$;
\item 
for all paths $\omega \in P^{\mathbb I}$ and elements $u,v \in M$ with $\overline{\lambda} (\omega ) \in \{uabv, ubav\}$ there exists a path $\nu \in P^{\mathbb I}$ such that $\omega$ and $\nu$ are dihomotopic and $\{\overline{\lambda} (\omega ),\overline{\lambda}(\nu)\}  = \{uabv, ubav\}$.
\end{enumerate}
If two elements of $M$ are not independent in $\A$, we say that they are \emph{dependent in $\A$}. We denote by $\equiv_{\A}$ the smallest congruence relation in $M$ such that $ab$ and $ba$ are congruent for all independent elements $a,b\in M$. The quotient monoid $M/\equiv_{\A}$ is called the \emph{trace monoid} of $\A$, and the 
canonical projection $M \to M/\equiv_{\A}$ is denoted by $tr_{\A}$. The \emph{trace language} of $\A$ is the set $TL(\A) = tr_{\A}(L(\A)) \subseteq M/\equiv_{\A}$.

\begin{rems}  
(i) We have $x \equiv_{\A} y$ if and only if there exist elements $x_1, \dots, x_n \in M$ $(n \geq 1)$ such that $x= x_1$, $y = x_n$ and for each $i \in \{1, \dots , n-1\}$, $x_i = u_ia_ib_iv_i$ and $x_{i+1} = u_ib_ia_iv_i$ for certain elements $a_i,b_i,u_i,v_i \in M$ with $a_i$ and $b_i$ independent.

(ii) If $M = \Sigma ^*$ for some finite set $\Sigma$ and $\lambda (P_1) \subseteq \Sigma$, then the set $$D = \{(a,b) \in \Sigma \times \Sigma : a\; \mbox{and}\;  b\; \mbox{are dependent in}\; \A\}$$ is a dependency in $\Sigma$ and $M/\equiv_{\A}$ is the classical trace monoid $M(\Sigma,D)$ (see e.g. \cite{Diekert, Mazurkiewicz}). Moreover, in this situation, condition (3) of the definition of independence can be replaced by the following condition: 
\begin{itemize}
\item[(3')] for any path $\omega =x_{1\sharp}x_{2\sharp} \colon \llbracket 0,2 \rrbracket \to P$ with $\{\lambda(x_1),\lambda(x_2)\} = \{a,b\}$ there exists an element $z\in P_2$ such that either $d_1^0z = x_1$ and $d_2^1z = x_2$ or $d_2^0z = x_1$ and $d_1^1z = x_2$.
\end{itemize}
\end{rems}

\begin{lem} \label{labellem}
Let $\A = (P,I, F,\lambda)$ be an $M$-HDA, $\omega \in P^{\mathbb I}$ and $m \in M$ such that $\overline{\lambda} (\omega) \equiv_{\A} m$. Then there exists a path $\nu \in P^{\mathbb I}$ such that $\omega \sim \nu$ and $\overline{\lambda} (\nu) = m$.
\end{lem}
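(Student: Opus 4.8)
The plan is to reduce the statement to a single elementary swap and then iterate. By Remark (i) following the definition of the trace monoid, the relation $\overline{\lambda}(\omega) \equiv_{\A} m$ means there is a finite sequence $x_1, \dots, x_n \in M$ with $x_1 = \overline{\lambda}(\omega)$, $x_n = m$, and for each $i \in \{1, \dots, n-1\}$ one has $x_i = u_i a_i b_i v_i$ and $x_{i+1} = u_i b_i a_i v_i$ for some $u_i, v_i \in M$ and independent $a_i, b_i \in M$. I would argue by induction on $n$, the length of such a witnessing sequence.

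The base case $n=1$ is immediate: then $m = \overline{\lambda}(\omega)$ and $\nu = \omega$ works. The heart of the argument is the single-swap step, which I would isolate first. Suppose $\overline{\lambda}(\omega) = uabv$ with $a,b$ independent in $\A$. Applying condition (3) of the definition of independence (section \ref{indep}) directly to the path $\omega$ and the elements $u,v$, and noting that $\overline{\lambda}(\omega) = uabv \in \{uabv, ubav\}$, I obtain a path $\omega'$ dihomotopic to $\omega$ with $\{\overline{\lambda}(\omega), \overline{\lambda}(\omega')\} = \{uabv, ubav\}$. If $uabv = ubav$ I keep $\omega' = \omega$; otherwise $\overline{\lambda}(\omega') = ubav$. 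In either case there is a path dihomotopic to $\omega$ whose label is exactly the word $ubav$ obtained by transposing the chosen independent letters.

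For the inductive step, given a witnessing sequence of length $n+1$, I apply this single-swap step to $\omega$ and the first transposition $x_1 = u_1 a_1 b_1 v_1 \rightsquigarrow x_2 = u_1 b_1 a_1 v_1$ to produce a path $\omega'$ with $\omega \sim \omega'$ and $\overline{\lambda}(\omega') = x_2$. The truncated sequence $x_2, \dots, x_{n+1}$ then witnesses $\overline{\lambda}(\omega') \equiv_{\A} m$ via a sequence of length $n$, so the inductive hypothesis (applied to $\omega'$) yields a path $\nu \in P^{\mathbb I}$ with $\omega' \sim \nu$ and $\overline{\lambda}(\nu) = m$. Since $\sim$ is an equivalence relation, and in particular transitive, $\omega \sim \omega' \sim \nu$ gives $\omega \sim \nu$, completing the induction.

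I do not expect a serious obstacle here; the only point requiring care is the bookkeeping ensuring that condition (3) applies verbatim at each elementary swap. This is exactly licensed because $\equiv_{\A}$ is generated precisely by transpositions of independent pairs $ab \leftrightarrow ba$, which is the literal hypothesis of condition (3). The two things to state explicitly, so as to avoid any appearance of circularity, are that condition (3) is invoked with the \emph{current} path at each stage (not only with the original $\omega$), and that it is transitivity of the dihomotopy relation that glues the successive stages into a single dihomotopy $\omega \sim \nu$.
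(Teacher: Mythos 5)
Your proposal is correct and follows exactly the argument the paper gives: the paper's proof invokes Remark (i) to obtain the witnessing sequence of elementary transpositions and then states that ``a simple induction using condition (3)'' finishes the job, which is precisely the induction you have written out in full. Your explicit treatment of the single-swap step (including the degenerate case $uabv = ubav$) and of transitivity of $\sim$ just supplies the details the paper leaves to the reader.
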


\begin{proof}
Since $\overline{\lambda} (\omega) \equiv_{\A} m$, there exist $x_1, \dots, x_n \in M$ such that $\overline{\lambda} (\omega) = x_1$, $m = x_n$ and for all $i \in \{1, \dots , n-1\}$, $x_i = u_ia_ib_iv_i$ and $x_{i+1} = u_ib_ia_iv_i$ for certain elements $a_i,b_i,u_i,v_i \in M$ with $a_i$ and $b_i$  independent. A simple induction using condition (3) of the definition of independence establishes the result.
\end{proof}  

\begin{lem} \label{LATLA}
Let $\A = (P,I, F,\lambda)$ be an $M$-HDA. Then $L(\A)  = tr_{\A}^{-1}(TL(\A))$.
\end{lem}

\begin{proof}
It is enough to show that $tr_{\A}^{-1}(TL(\A)) \subseteq L(\A)$. Consider an element $m \in tr_{\A}^{-1}(TL(\A))$. Then there exists a path $\omega \in P^{\mathbb I}$ beginning in $I$ and ending in $F$ such that $tr_{\A}(\overline{\lambda}(\omega)) = tr_{\A}(m)$, i.e. $\overline{\lambda} (\omega) \equiv_{\A} m$. By the preceding lemma, there exists a path $\nu \in P^{\mathbb I}$ with the same end points as $\omega$ and  $\overline{\lambda} (\nu) = m$. Thus, $m \in L(\A)$.
\end{proof}  

\begin{theor}
If $\A = (P,I,F, \lambda)$ is an $M$-HDA such that $P_0$ and $P_1$ are finite, then $TL(\A)$ is a rational subset of the monoid $M/\equiv_{\A}$. If, moreover, $M$ is a finitely generated free monoid, then $TL(\A)$ is a recognisable subset of $M/\equiv_{\A}$.
\end{theor}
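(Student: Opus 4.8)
The plan is to reduce both assertions to the corresponding facts about the ordinary language $L(\A)$, which are already available from subsection \ref{LA}: when $P_0$ and $P_1$ are finite, $L(\A)$ is a rational subset of $M$, and when in addition $M$ is finitely generated free, $L(\A)$ is recognisable. The bridge between $L(\A)$ and $TL(\A)$ is the surjective monoid homomorphism $tr_{\A}\colon M \to M/\equiv_{\A}$, which satisfies $TL(\A) = tr_{\A}(L(\A))$ by definition and $L(\A) = tr_{\A}^{-1}(TL(\A))$ by Lemma \ref{LATLA}.

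For rationality I would invoke the elementary fact that the image of a rational set under a monoid homomorphism is rational. This follows by induction on the structure of a rational expression: $tr_{\A}$ sends finite sets to finite sets and, being a homomorphism, commutes with the three rational operations, i.e. $tr_{\A}(A\cup B) = tr_{\A}(A)\cup tr_{\A}(B)$, $tr_{\A}(A\cdot B) = tr_{\A}(A)\cdot tr_{\A}(B)$ and $tr_{\A}(A^{*}) = tr_{\A}(A)^{*}$. Since $L(\A)$ is rational in $M$, its image $TL(\A) = tr_{\A}(L(\A))$ is rational in $M/\equiv_{\A}$.

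For recognisability the same image argument fails, because recognisability is \emph{not} preserved under homomorphic images in general. Here I would instead exploit the preimage identity $L(\A) = tr_{\A}^{-1}(TL(\A))$. It shows that $L(\A)$ is saturated by $\equiv_{\A}$, and hence that the syntactic congruence $\sim_{L(\A)}$ of $L(\A)$ is coarser than $\equiv_{\A}$: if $a\equiv_{\A} b$ then $uav\equiv_{\A} ubv$ for all $u,v$, so $uav\in L(\A)\Leftrightarrow ubv\in L(\A)$, i.e. $a\sim_{L(\A)}b$. Recognisability of $L(\A)$ means that $\sim_{L(\A)}$ has finite index, so the syntactic morphism $\pi\colon M \to M/\sim_{L(\A)}$ has finite image and recognises $L(\A)$. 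Because $\equiv_{\A}$ refines $\sim_{L(\A)}$, the morphism $\pi$ factors as $\pi = \bar\pi\circ tr_{\A}$ through a homomorphism $\bar\pi\colon M/\equiv_{\A} \to M/\sim_{L(\A)}$ into a finite monoid, and using $L(\A) = tr_{\A}^{-1}(TL(\A))$ one checks that $TL(\A) = \bar\pi^{-1}(\pi(L(\A)))$. Letting the finite monoid $M/\sim_{L(\A)}$ act on itself on the right and transporting this to a right action of $M/\equiv_{\A}$ via $\bar\pi$, with base point the unit and accepting set $\pi(L(\A))$, then exhibits $TL(\A)$ as a recognisable subset of $M/\equiv_{\A}$ in the sense of subsection \ref{LA}.

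The delicate point is precisely this recognisability transfer: one must resist the temptation to argue that the image of a recognisable set is recognisable, and instead observe that $L(\A)$ being the \emph{full} preimage $tr_{\A}^{-1}(TL(\A))$ forces its syntactic congruence to be compatible with $\equiv_{\A}$, so that the finite recognising device for $L(\A)$ descends to the quotient. Everything else — the behaviour of rationality under homomorphisms and the identification of the recognising right action with a homomorphism into a finite monoid — is routine and uses only the facts recalled in subsection \ref{LA}.
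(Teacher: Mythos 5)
Your proposal is correct and follows essentially the same route as the paper: rationality of $TL(\A)$ via the image of the rational set $L(\A)$ under the homomorphism $tr_{\A}$, and recognisability via the preimage identity $L(\A) = tr_{\A}^{-1}(TL(\A))$ of Lemma \ref{LATLA} together with the fact that a set whose full preimage under a surjective monoid morphism is recognisable is itself recognisable. The only difference is that the paper cites these two transfer facts from Sakarovitch (prop.\ II.1.7 and cor.\ II.2.12), whereas you prove them inline (the latter by the standard syntactic-congruence argument), which is a matter of exposition rather than of approach.
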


\begin{proof}
It is well-known that the image of a rational set under a morphism of monoids is rational and that a subset of a monoid is recognisable if its preimage under a surjective morphism of monoids is recognisable \cite[prop. II.1.7, cor. II.2.12]{Sakarovitch}. The result now follows from \ref{LA} and \ref{LATLA}.
\end{proof}

\subsection{Stable HDAs}
In a general $M$-HDA it is possible that dependent actions are locally independent. In stable $M$-HDAs, which we define next, this is impossible.

\begin{defin}
We say that an $M$-HDA $\A = (P,I,F, \lambda)$ is \emph{stable} if for no two dependent elements $a,b \in M$ there exists an element $z \in P_2$ such that $\{\lambda (d_1^0z),\lambda (d_2^0z)\} = \{a,b\}$.
\end{defin}

\begin{prop} \label{stablelabel}
Let $\A = (P,I,F, \lambda)$ be a stable $M$-HDA. Then any two dihomotopic paths have congruent labels.
\end{prop}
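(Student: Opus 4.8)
The plan is to reduce the statement to a single elementary dihomotopy and then simply read off the two labels. Since $\equiv_{\A}$ is an equivalence relation and dihomotopy is \emph{by definition} the equivalence relation generated by elementary dihomotopy, it suffices to show that two \emph{elementarily} dihomotopic paths have $\equiv_{\A}$-congruent labels; the general case then follows by transitivity of $\equiv_{\A}$ along a finite chain of elementary dihomotopies. So first I would fix the reduction and work with a single elementary dihomotopy.

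Thus suppose $\omega$ and $\nu$ are elementarily dihomotopic, witnessed by paths $\alpha,\beta$ and an element $z\in P_2$ with $\{\omega,\nu\}=\{\alpha\cdot(d_1^0z)_{\sharp}\cdot(d_2^1z)_{\sharp}\cdot\beta,\ \alpha\cdot(d_2^0z)_{\sharp}\cdot(d_1^1z)_{\sharp}\cdot\beta\}$. Writing $u=\overline{\lambda}(\alpha)$ and $v=\overline{\lambda}(\beta)$ and using that $\overline{\lambda}$ turns concatenation into the monoid product, the two labels are $u\,\lambda(d_1^0z)\,\lambda(d_2^1z)\,v$ and $u\,\lambda(d_2^0z)\,\lambda(d_1^1z)\,v$. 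Here I would invoke the defining labelling condition of an $M$-HDA, namely $\lambda(d_i^0z)=\lambda(d_i^1z)$ for $i\in\{1,2\}$, which gives $\lambda(d_2^1z)=\lambda(d_2^0z)$ and $\lambda(d_1^1z)=\lambda(d_1^0z)$. Setting $a=\lambda(d_1^0z)$ and $b=\lambda(d_2^0z)$, the two labels become exactly $uabv$ and $ubav$.

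It then remains to prove $uabv\equiv_{\A}ubav$. If $a=b$ the two words coincide and there is nothing to do. If $a\neq b$, then $\{\lambda(d_1^0z),\lambda(d_2^0z)\}=\{a,b\}$ with $z\in P_2$, so the contrapositive of the stability hypothesis forbids $a$ and $b$ from being dependent; hence $a$ and $b$ are independent, and by the very definition of $\equiv_{\A}$ we have $ab\equiv_{\A}ba$. Since $\equiv_{\A}$ is a congruence, left multiplication by $u$ and right multiplication by $v$ yield $uabv\equiv_{\A}ubav$, i.e. $\overline{\lambda}(\omega)\equiv_{\A}\overline{\lambda}(\nu)$, as desired.

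The argument is largely bookkeeping, so I expect the only genuinely delicate step to be the identification in the second paragraph: correctly matching the two boundary walks of $z$ with the words $ab$ and $ba$ via the cubical identities and the opposite-edge labelling condition. Once that matching is pinned down, the clean use of stability, passing from \textquotedblleft a $2$-cube carries $a$ and $b$ on adjacent faces\textquotedblright\ to \textquotedblleft $a$ and $b$ are independent\textquotedblright, is immediate, and the remaining reduction to the elementary case together with the congruence manipulation is routine.
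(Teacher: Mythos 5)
Your proof is correct and follows essentially the same route as the paper's: reduce to a single elementary dihomotopy, use stability to conclude that the two face labels of the $2$-cube $z$ are independent, and then apply the congruence property of $\equiv_{\A}$ to the concatenated labels. The only (harmless) difference is your separate treatment of the case $a=b$; the paper instead reads stability as directly forbidding a $2$-cube whose adjacent faces carry dependent labels, which, since an element is never independent of itself, already rules out $\lambda(d_1^0z)=\lambda(d_2^0z)$.
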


\begin{proof} 
Consider elementarily dihomotopic paths $\omega = \alpha \cdot (d_1^0z)_{\sharp} \cdot (d_2^ 1z)_{\sharp} \cdot \beta$ and $\nu = \alpha \cdot (d_2^0z)_{\sharp} \cdot (d_1^ 1z)_{\sharp} \cdot \beta$. Since $\A$ is stable, the elements $\lambda(d_1^0z) = \lambda (d_1^1z)$ and  $\lambda(d_2^0z) = \lambda (d_2^1z)$ are independent. Therefore $\lambda(d_1^0z)\lambda(d_2^1z) \equiv_{\A} \lambda(d_2^0z)\lambda(d_1^1z)$. Hence $\overline{\lambda}(\omega ) = \overline{\lambda}(\alpha ) \lambda(d_1^0z)\lambda(d_2^1z) \overline{\lambda}(\beta )\equiv_{\A} \overline{\lambda}(\alpha ) \lambda(d_2^0z)\lambda(d_1^1z)\overline{\lambda}(\beta ) = \overline{\lambda}(\nu )$.
\end{proof}

\subsection{Deterministic HDAs} 

A converse of \ref{stablelabel} holds in deterministic HDAs, which are defined as follows:

\begin{defin}  
We say that an $M$-HDA is \emph{deterministic} if it has exactly one initial state and if any two paths with the same starting point and the same label are equal.
\end{defin}

\begin{prop} \label{weakmordet}
Let $\A = (P,I,F, \lambda)$ be a deterministic $M$-HDA. Then any two paths beginning at the same vertex and with  congruent labels are dihomotopic.
\end{prop}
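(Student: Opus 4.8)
The plan is to connect $\omega$ and $\nu$ through a single intermediate path obtained by \emph{realising} the congruence $\overline{\lambda}(\omega) \equiv_{\A} \overline{\lambda}(\nu)$ as a chain of elementary dihomotopies, and then to collapse that realisation against $\nu$ using determinism. The key observation is that Lemma \ref{labellem} already performs the delicate part: starting from $\omega$, it unfolds the defining chain $x_1, \dots, x_n$ of the relation $\equiv_{\A}$ (see Remark (i) in \ref{indep}) into successive applications of condition (3) of the definition of independence, each of which replaces the current path by a dihomotopic one whose label is the next word in the chain.

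First I would set $m = \overline{\lambda}(\nu)$ and invoke Lemma \ref{labellem}, which yields a path $\mu \in P^{\mathbb I}$ with $\omega \sim \mu$ and $\overline{\lambda}(\mu) = \overline{\lambda}(\nu)$. Next I would record that dihomotopic paths have the same end points (Remark (ii) following the definition of dihomotopy), so that $\mu$ and $\omega$ begin at the same vertex; since $\omega$ and $\nu$ begin at the same vertex by hypothesis, $\mu$ and $\nu$ begin at the same vertex as well.

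Finally, since $\A$ is deterministic and $\mu$ and $\nu$ are paths with the same starting point and the same label $\overline{\lambda}(\mu) = \overline{\lambda}(\nu)$, determinism forces $\mu = \nu$. Combining this with $\omega \sim \mu$ gives $\omega \sim \nu$, as required.

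I do not expect a genuine obstacle here: all the real work has been front-loaded into Lemma \ref{labellem}, whose proof is where condition (3) of independence is actually used to turn algebraic swaps $uabv \leftrightarrow ubav$ into elementary dihomotopies. The only point requiring a little care is the bookkeeping of start vertices --- one must use that dihomotopy preserves end points in order to transport the common starting vertex from $\omega$ to $\mu$ before determinism can be applied. Everything else is formal.
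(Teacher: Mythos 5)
Your proof is correct and is essentially identical to the paper's own argument: both apply Lemma \ref{labellem} to produce a path dihomotopic to $\omega$ with label $\overline{\lambda}(\nu)$, note that dihomotopy preserves end points so this path starts where $\nu$ starts, and then conclude by determinism that it equals $\nu$. No issues.
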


\begin{proof}
Let $\omega, \nu \in P^{\mathbb I}$ be two paths such that $\omega (0) = \nu (0)$ and $\overline{\lambda}(\omega) \equiv_{\A} \overline{\lambda}(\nu)$. By \ref{labellem}, there exists a path $\alpha \in P^{\mathbb I}$ such that $\omega \sim \alpha$ and $\overline{\lambda} (\alpha) = \overline{\lambda}(\nu)$. Since $\alpha(0) = \omega(0) = \nu (0)$, $\alpha = \nu$.
\end{proof}

\begin{cor}
Let $\A = (P,I,F, \lambda)$ be a deterministic and stable $M$-HDA. Then two paths with the same starting point are dihomotopic if and only if they have congruent labels.
\end{cor}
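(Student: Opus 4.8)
The plan is to observe that this corollary is nothing more than the conjunction of Propositions \ref{stablelabel} and \ref{weakmordet}, with one proposition supplying each direction of the biconditional. I would fix two paths $\omega, \nu \in P^{\mathbb I}$ with a common starting point $\omega(0) = \nu(0)$ and treat the two implications separately.

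For the forward direction I would assume that $\omega$ and $\nu$ are dihomotopic. Since $\A$ is stable, Proposition \ref{stablelabel} applies verbatim and yields $\overline{\lambda}(\omega) \equiv_{\A} \overline{\lambda}(\nu)$, i.e. the two paths have congruent labels. It is worth noting that for this direction the common-starting-point hypothesis is not actually required, since by Remark (ii) dihomotopic paths automatically share both end points.

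For the converse I would assume $\overline{\lambda}(\omega) \equiv_{\A} \overline{\lambda}(\nu)$. Here the hypothesis $\omega(0) = \nu(0)$ together with determinism of $\A$ lets me invoke Proposition \ref{weakmordet} directly, concluding that $\omega$ and $\nu$ are dihomotopic. I do not anticipate any genuine obstacle: both directions are immediate applications of the two preceding propositions, and the only subtlety worth flagging is that the equal-starting-point assumption is needed only in this converse direction, where determinism is what forces two equally labelled paths issuing from the same vertex to be dihomotopic.
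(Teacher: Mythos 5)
Your proposal is correct and is exactly the argument the paper intends: the corollary is stated without proof precisely because it is the immediate conjunction of Proposition \ref{stablelabel} (stability gives the forward direction) and Proposition \ref{weakmordet} (determinism gives the converse). Your observation that the common-starting-point hypothesis is only needed for the converse is also accurate and consistent with the paper's remark that dihomotopic paths share end points.
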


\subsection{Subdivided cubes} \label{detcube}
 
Any subdivided cube $\llbracket 0,{k_1} \rrbracket \otimes \cdots \otimes \llbracket 0,{k_n} \rrbracket$ $(n,k_1, \dots , k_n \geq 1)$ is the underlying precubical set of a stable and deterministic HDA $\A$ over the free monoid on the set $\Sigma = \{l_1,\dots, l_n\}$. The labelling function of $\A$ is given by $$\lambda(j_1, \dots ,j_{p-1},[j_p,j_p+1],j_{p+1}, \dots, j_n) = l_p,$$ 
and the sets of initial and final states are $I = \{(0, \dots, 0)\}$ and $F = \{(k_1, \dots, k_n)\}$. Any two distinct elements of $\Sigma$ are independent in $\A$, and for any element $z$ of degree $2$, $\lambda (d_1^0z)\not= \lambda (d_2^0z)$. Consequently, $\A$ is stable. It follows from the fact that any vertex of $\llbracket 0,{k_1} \rrbracket \otimes \cdots \otimes \llbracket 0,{k_n} \rrbracket$ is the starting point of at most one edge with a given label that $\A$ is deterministic. One easily sees that given a path $\omega \colon \llbracket 0,r \rrbracket \to \llbracket 0,{k_1} \rrbracket \otimes \cdots \otimes \llbracket 0,{k_n} \rrbracket$ from  $(a_1, \dots ,a_n)$ to $(b_1, \dots , b_n)$, one has $b_i \geq  a_i$ and $\overline{\lambda }(\omega) \equiv_{\A} l_1^{b_1-a_1}\cdots l_n^{b_n-a_n}$. Consequently, any two paths in $\A$ with the same end points have congruent labels and are dihomotopic.

\subsection{Trace languages and weak morphisms}

The existence of a weak morphism between two $M$-HDAs permits us to compare their trace languages. The main point is that weak morphisms preserve dihomotopy:

\begin{prop} \label{fadj}
Let $f\colon P \to Q$ be a weak morphism of precubical sets and $\omega$ and $\nu$ be dihomotopic paths in $P$. Then $f^{\mathbb I}(\omega)$ and $f^{\mathbb I}(\nu)$ are dihomotopic paths in $Q$.
\end{prop}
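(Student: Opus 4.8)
The plan is to reduce to the case of an elementary dihomotopy and then to analyse the image under $f^{\mathbb I}$ of the two boundary paths of a single $2$-cube. Since $\sim$ is by definition the equivalence relation generated by elementary dihomotopy and is transitive, it suffices to treat the case where $\omega$ and $\nu$ are elementarily dihomotopic, say $\omega = \alpha \cdot (d_1^0z)_{\sharp} \cdot (d_2^1z)_{\sharp} \cdot \beta$ and $\nu = \alpha \cdot (d_2^0z)_{\sharp} \cdot (d_1^1z)_{\sharp} \cdot \beta$ for paths $\alpha, \beta$ and an element $z \in P_2$. Applying \ref{fJconcat} repeatedly, $f^{\mathbb I}(\omega) = f^{\mathbb I}(\alpha) \cdot f^{\mathbb I}(c_1) \cdot f^{\mathbb I}(\beta)$ and $f^{\mathbb I}(\nu) = f^{\mathbb I}(\alpha) \cdot f^{\mathbb I}(c_2) \cdot f^{\mathbb I}(\beta)$, where $c_1 = (d_1^0z)_{\sharp} \cdot (d_2^1z)_{\sharp}$ and $c_2 = (d_2^0z)_{\sharp} \cdot (d_1^1z)_{\sharp}$ (the concatenations are legitimate because $f^{\mathbb I}$ preserves endpoints by \ref{fomega}). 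Since dihomotopy is a congruence for concatenation, the whole statement reduces to showing that $f^{\mathbb I}(c_1) \sim f^{\mathbb I}(c_2)$.

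Next I would recognise $c_1$ and $c_2$ as the two monotone boundary paths of the square $z$. Let $p, q \colon \llbracket 0,2 \rrbracket \to \llbracket 0,1 \rrbracket^{\otimes 2}$ be the upper-left and lower-right boundary paths of the precubical $2$-cube, both running from the vertex $(0,0)$ to the vertex $(1,1)$; then $c_1 = z_{\sharp} \circ p$ and $c_2 = z_{\sharp} \circ q$, since $z_{\sharp}(d_i^k\iota_2) = d_i^kz$. Because $|z_{\sharp}|^{\mathbb I}(p) = z_{\sharp} \circ p$ and $f^{\mathbb I} \circ |z_{\sharp}|^{\mathbb I} = (f \circ |z_{\sharp}|)^{\mathbb I}$ by \ref{fJfunctor}, I may set $g = f \circ |z_{\sharp}|$, which is a weak morphism by \ref{composition}, and compute $f^{\mathbb I}(c_1) = g^{\mathbb I}(p)$ and $f^{\mathbb I}(c_2) = g^{\mathbb I}(q)$. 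Applying the definition of weak morphism in \ref{defweakmor} to $g$ and the identity morphism of $\llbracket 0,1 \rrbracket^{\otimes 2}$ yields integers $k_1', k_2' \geq 1$, a morphism of precubical sets $\chi' \colon \llbracket 0,{k_1'} \rrbracket \otimes \llbracket 0,{k_2'} \rrbracket \to Q$ and a dihomeomorphism $\phi$ with $g = |\chi'| \circ \phi$, where $\phi$ is itself a weak morphism by \ref{unicity}. Using \ref{fJfunctor} once more, $g^{\mathbb I}(p) = \chi' \circ \phi^{\mathbb I}(p)$ and $g^{\mathbb I}(q) = \chi' \circ \phi^{\mathbb I}(q)$.

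It then remains to compare $\phi^{\mathbb I}(p)$ and $\phi^{\mathbb I}(q)$ inside the subdivided square $\llbracket 0,{k_1'} \rrbracket \otimes \llbracket 0,{k_2'} \rrbracket$. By \ref{fomega} both are paths from $\phi_0(0,0)$ to $\phi_0(1,1)$; since $\phi$ is a dihomeomorphism it carries the minimum and the maximum of $[0,1]^2$ to the minimum and the maximum of $[0,k_1'] \times [0,k_2']$, so both paths run from $(0,0)$ to $(k_1',k_2')$. By the analysis of subdivided cubes in \ref{detcube}, any two paths in $\llbracket 0,{k_1'} \rrbracket \otimes \llbracket 0,{k_2'} \rrbracket$ with the same end points are dihomotopic, hence $\phi^{\mathbb I}(p) \sim \phi^{\mathbb I}(q)$. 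Finally, since morphisms of precubical sets preserve dihomotopy, applying $\chi'$ gives $f^{\mathbb I}(c_1) = \chi' \circ \phi^{\mathbb I}(p) \sim \chi' \circ \phi^{\mathbb I}(q) = f^{\mathbb I}(c_2)$, as required. I expect the main obstacle to be the bookkeeping in the second paragraph: correctly identifying $c_1, c_2$ with $z_{\sharp} \circ p, z_{\sharp} \circ q$ and then routing the computation of $f^{\mathbb I}$ through the factorisation $g = |\chi'| \circ \phi$ via \ref{fJfunctor}, so that the corner-to-corner uniqueness statement of \ref{detcube} can be brought to bear.
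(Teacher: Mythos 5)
Your proof is correct and follows essentially the same route as the paper's: reduction to an elementary dihomotopy via \ref{fJconcat}, factorisation of $f\circ |z_\sharp|$ as $|\chi'|\circ\phi$ through the weak-morphism condition, dihomotopy of the two corner-to-corner paths in the subdivided square via \ref{detcube}, and transport along $\chi'$. The only cosmetic difference is that you apply the factorisation to $g = f\circ|z_\sharp|$ with the identity morphism, whereas the paper applies it directly to $f$ with $\chi = z_\sharp$; the resulting data are identical.
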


\begin{proof}
We may suppose that $\omega$ and $\nu$ are elementarily dihomotopic and that there exist an element $z \in P_2$ and paths $\alpha$ and $\beta$  such that $\omega = \alpha \cdot (d_1^0z)_{\sharp }\cdot (d_2^1z)_{\sharp }\cdot \beta$ and $\nu = \alpha \cdot (d_2^0z)_{\sharp }\cdot (d_1^1z)_{\sharp }\cdot \beta$. By \ref{fJconcat}, $f^{\mathbb I}(\omega) = f^{\mathbb I}(\alpha )\cdot f^{\mathbb I}((d_1^0z)_{\sharp }\cdot (d_2^1z)_{\sharp })\cdot f^{\mathbb I}(\beta)$ and $f^{\mathbb I}(\nu) = f^{\mathbb I}(\alpha )\cdot f^{\mathbb I}((d_2^0z)_{\sharp }\cdot (d_1^1z)_{\sharp })\cdot f^{\mathbb I}(\beta)$. It follows that $f^{\mathbb I}(\omega) \sim f^{\mathbb I}(\nu)$ if $f^{\mathbb I}((d_1^0z)_{\sharp }\cdot (d_2^1z)_{\sharp }) \sim f^{\mathbb I}((d_2^0z)_{\sharp }\cdot (d_1^1z)_{\sharp })$. We may thus suppose that  $\omega = (d_1^0z)_{\sharp }\cdot (d_2^1z)_{\sharp }$ and $\nu = (d_2^0z)_{\sharp }\cdot (d_1^1z)_{\sharp }$. Consider the morphism of precubical sets $z_{\sharp} \colon \llbracket 0,1 \rrbracket^{\otimes 2}  \to P$. Since $f$ is a weak morphism, there exist integers $k,l \geq 1$, a morphism of precubical sets $\zeta \colon \llbracket 0,k \rrbracket\otimes \llbracket 0,l \rrbracket \to Q$  and a dihomeomorphism $$\phi \colon |\llbracket 0,1 \rrbracket^{\otimes 2}| = [0,1]^2 \to |\llbracket 0,k \rrbracket\otimes \llbracket 0,l \rrbracket| = [0,k]\times [0,l]$$ such that $f\circ |z_{\sharp}| = |\zeta| \circ \phi$. Consider the paths $\sigma = (d_1^0\iota_2)_{\sharp }\cdot (d_2^1\iota_2)_{\sharp }$ and $\tau = (d_2^0\iota_2)_{\sharp }\cdot (d_1^1\iota_2)_{\sharp }$ in $\llbracket 0,1 \rrbracket^{\otimes 2}$. By \ref{fomega},  $\phi^{\mathbb I}(\sigma)$ and $\phi^{\mathbb I}(\tau)$ are paths in $\llbracket 0,k \rrbracket\otimes \llbracket 0,l \rrbracket$ from $\phi_0(0,0) = (0,0)$ to $\phi_0(1,1) = (k,l)$. Thus, $\phi^{\mathbb I}(\sigma) \sim \phi^{\mathbb I}(\tau)$ (see \ref{detcube}). It follows that $\zeta \circ \phi^{\mathbb I}(\sigma) \sim \zeta \circ \phi^{\mathbb I}(\tau)$. We have $z_{\sharp}\circ \sigma = \omega$ and $z_{\sharp}\circ \tau = \nu$. Hence  $f^{\mathbb I}(\omega) = f^{\mathbb I}(z_{\sharp}\circ \sigma) = f^{\mathbb I}(|z_{\sharp}|^{\mathbb I}( \sigma)) = (f\circ |z_{\sharp}|)^{\mathbb I}(\sigma) = (|\zeta|\circ \phi)^{\mathbb I}(\sigma) = |\zeta|^{\mathbb I}(\phi^{\mathbb I}(\sigma)) = \zeta \circ  \phi^{\mathbb I}(\sigma)$ and similarly $f^{\mathbb I}(\nu) = \zeta \circ  \phi^{\mathbb I}(\tau)$. The result follows.
\end{proof}

\begin{theor} \label{weakmorTL}
Let $\A = (P,I,F, \lambda)$ and $\B = (P',I',F', \lambda')$ be two $M$-HDAs such that $\B$ is stable and there exists a weak morphism $f\colon \A \to \B$. Then for any two labels $l, m \in L(\A)$, $l\equiv_{\A} m \Rightarrow l \equiv_{\B} m$ and a map $TL(\A) \to TL(\B)$ is given by $tr_{\A}(l) \mapsto tr_{\B}(l)$. Consequently, if $\A$ and $\B$ are stable $M$-HDAs and $\A \leftrightarrow \B$, then $TL(\A) = TL(\B)$.
\end{theor}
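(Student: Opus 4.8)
The plan is to establish first the core implication for labels in $L(\A)$ --- namely that $l \equiv_{\A} m$ forces $l \equiv_{\B} m$ --- since everything else follows formally from it. The idea is to realise the congruence $l \equiv_{\A} m$ by a pair of dihomotopic paths in $P$, push them forward via $f^{\mathbb I}$, and use the stability of $\B$ to read off a congruence of labels in $\B$.

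In detail, I would argue as follows. Given $l \in L(\A)$, choose a path $\omega \colon \llbracket 0,k \rrbracket \to P$ with $\omega(0) \in I$, $\omega(k) \in F$ and $\overline{\lambda}(\omega) = l$. Since $l \equiv_{\A} m$, we have $\overline{\lambda}(\omega) \equiv_{\A} m$, so Lemma \ref{labellem} supplies a path $\nu \in P^{\mathbb I}$ with $\omega \sim \nu$ and $\overline{\lambda}(\nu) = m$. Proposition \ref{fadj} then gives that $f^{\mathbb I}(\omega)$ and $f^{\mathbb I}(\nu)$ are dihomotopic paths in $P'$. Here the hypothesis that $\B$ is stable enters: by Proposition \ref{stablelabel}, dihomotopic paths in a stable HDA have congruent labels, whence $\overline{\lambda'}(f^{\mathbb I}(\omega)) \equiv_{\B} \overline{\lambda'}(f^{\mathbb I}(\nu))$. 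Finally, the defining property $\overline{\lambda'}\circ f^{\mathbb I} = \overline{\lambda}$ of the weak morphism $f$ identifies these two labels with $\overline{\lambda}(\omega) = l$ and $\overline{\lambda}(\nu) = m$, so $l \equiv_{\B} m$, as required.

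The assignment $tr_{\A}(l) \mapsto tr_{\B}(l)$ is then well defined on $TL(\A)$: every element of $TL(\A) = tr_{\A}(L(\A))$ has the form $tr_{\A}(l)$ with $l \in L(\A)$, and if $tr_{\A}(l) = tr_{\A}(m)$ for $l,m \in L(\A)$ --- that is, $l \equiv_{\A} m$ --- the implication just proved yields $tr_{\B}(l) = tr_{\B}(m)$. The values lie in $TL(\B)$ because $\A \to \B$ forces $L(\A) \subseteq L(\B)$ by Proposition \ref{lalb}, so $tr_{\B}(l) \in tr_{\B}(L(\B)) = TL(\B)$ for each $l \in L(\A)$.

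For the final assertion, I would apply the construction symmetrically. When $\A$ and $\B$ are both stable and $\A \leftrightarrow \B$, the weak morphisms $\A \to \B$ and $\B \to \A$ produce maps $tr_{\A}(l)\mapsto tr_{\B}(l)$ and $tr_{\B}(l)\mapsto tr_{\A}(l)$ between $TL(\A)$ and $TL(\B)$, which are visibly mutually inverse; since $L(\A) = L(\B)$ here (again by Proposition \ref{lalb}), both are defined on the whole of the respective trace languages and are thus inverse bijections identifying $TL(\A)$ with $TL(\B)$. I expect the only genuinely delicate step to be the single use of stability of $\B$: without it, dihomotopic image paths need not carry congruent labels, and the implication --- hence the whole statement --- can fail. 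The other point I would state explicitly is that the concluding equality must be read as this canonical identification, since the ambient trace monoids $M/\equiv_{\A}$ and $M/\equiv_{\B}$ are a priori distinct.
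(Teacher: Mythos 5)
Your proof is correct and follows essentially the same route as the paper: realise $l \equiv_{\A} m$ by dihomotopic accepting paths via Lemma \ref{labellem}, push forward with $f^{\mathbb I}$ using Proposition \ref{fadj}, and invoke stability of $\B$ through Proposition \ref{stablelabel} together with $\overline{\lambda'}\circ f^{\mathbb I} = \overline{\lambda}$. The paper compresses the well-definedness of the map and the final bijection into ``the result follows,'' so your explicit treatment of those points (including the remark that the concluding equality is a canonical identification between subsets of a priori different trace monoids) simply fills in what the paper leaves implicit.
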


\begin{proof}
Let $l,m \in L(\A)$ such that $l\equiv_{\A} m$. By \ref{labellem}, there exist dihomotopic paths $\omega, \nu \in P^{\mathbb I}$ beginning in $I$ and ending in $F$ such that $\overline{\lambda} (\omega) = l$ and $\overline{\lambda} (\nu) = m$. By \ref{fadj},  $f^{\mathbb I}(\omega) \sim f^{\mathbb I}(\nu)$. Since $\B$ is stable, $\overline{\lambda'} (f^{\mathbb I}(\omega)) \equiv_{\B} \overline{\lambda'} (f^{\mathbb I}(\nu))$. The result follows.
\end{proof}

\subsection{The trace category of an HDA}
The \emph{fundamental category} of a precubical set $P$ is the category $\vec \pi_1(P)$ whose objects are the vertices of $P$ and whose morphisms are the dihomotopy classes of paths in $P$ (cp. \cite{Goubault, GrandisBook}). Given a weak morphism of precubical sets $f\colon |P| \to |Q|$, the functor $f_*\colon \vec \pi_1(P) \to \vec \pi_1(Q)$ is defined  by $f_*(v) = f_0(v)$ $(v\in P_0)$ and  $f_*([\omega]) = [f^{\mathbb I}(\omega)]$ $(\omega \in P^{\mathbb I})$. It is clear that the fundamental category construction is functorial. 

A vertex $v$ of a precubical set $P$ is said to be \emph{maximal (minimal)} if there is no element $x \in P_1$ such that $d_1^0x = v$ $(d_1^1x = v)$. The sets of maximal and minimal elements of $P$ are denoted by $M(P)$ and $m(P)$ respectively. The \emph{trace category} of an $M$-HDA $\A = (P,I,F,\lambda)$, $TC(\A)$, is the full subcategory of $\vec \pi_1(P)$ generated by $I\cup F\cup m(P) \cup M(P)$. Our definition of the trace category of an HDA is a variant  of Bubenik's definition of the fundamental bipartite graph of a d-space \cite{BubenikExtremal}.

Note that if $f$ is a weak morphism from an $M$-HDA $\A = (P,I, F,\lambda)$ to an $M\textrm{-}$HDA $\B = (P',I', F',\lambda')$ such that $f_0(m(P) \cup M(P)) \subseteq I'\cup F'\cup m(P') \cup M(P')$, then the functor $f_*\colon \vec \pi_1(P) \to \vec \pi_1(Q)$ restricts to a functor $f_*\colon TC(\A) \to TC(\B)$.

\subsection{Accessible HDAs} \label{accessibleHDAs} An $M$-HDA $\A = (P,I,F,\lambda)$ is said to be \emph{accessible} if for every vertex $v\in P_0$ there exists a path in $P$ from an initial state to $v$ (cp. \cite[def. I.1.10]{Sakarovitch}). If $\A$ and $\B$ are accessible, deterministic and stable $M$-HDAs and $\A \leftrightarrow \B$, then $L(\A) = L(\B)$, $TL(\A) = TL(\B)$ and $TC(\A) \cong TC(\B)$. This follows from \ref{lalb}, \ref{weakmorTL} and the following proposition, which shows that in the case of accessible and deterministic HDAs, $\A \leftrightarrow \B$ is a very strong condition:

\begin{prop}
Let $\A = (P,I,F,\lambda)$ and $\B=(P',I',F',\lambda')$ be two accessible and deterministic $M$-HDAs and $f\colon \A \to \B$ and $g\colon \B \to \A$ be weak morphisms. Then 
\begin{enumerate}[(i)]
\item $f_0\colon P_0 \to P'_0$ and $g_0\colon P'_0 \to P_0$ are inverse bijections;
\item $f^{\mathbb I}\colon P^{\mathbb I} \to P'^{\mathbb I}$ and $g^{\mathbb I}\colon P'^{\mathbb I} \to P^{\mathbb I}$ are inverse bijections;
\item $f_0$ and $g_0$ preserve maximal and minimal elements;
\item $f_* \colon \vec \pi_1(P) \to \vec \pi_1(P')$ and $g_* \colon \vec \pi_1(P') \to\vec \pi_1(P)$ are inverse isomorphisms;
\item $f_*\colon TC(\A) \to TC(\B)$ and $g_*\colon TC(\B) \to TC(\A)$ are inverse isomorphisms.
\end{enumerate}
\end{prop}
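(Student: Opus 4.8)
The plan is to deduce everything from the single fact that the composites $g\circ f$ and $f\circ g$ act as the identity, first on paths and then, as a special case, on vertices. I would set $h = g\circ f$ and observe that, by \ref{composition} and \ref{fJfunctor}, $h$ is a weak morphism from $\A$ to itself with $h_0 = g_0\circ f_0$ and $h^{\mathbb I} = g^{\mathbb I}\circ f^{\mathbb I}$, and that it preserves the extended labelling, i.e. $\overline{\lambda}\circ h^{\mathbb I} = \overline{\lambda}$. Since $\A$ is deterministic it has a unique initial state $i_0$, and the conditions $f_0(I)\subseteq I'$ and $g_0(I')\subseteq I$ force $h_0(i_0) = i_0$.

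The core step, which I expect to be the main obstacle, is to prove $h^{\mathbb I} = \mathrm{id}_{P^{\mathbb I}}$. I would first treat a path $\omega$ issuing from $i_0$: by \ref{fomega}, $h^{\mathbb I}(\omega)$ again starts at $h_0(i_0) = i_0$, and it carries the same label as $\omega$; determinism of $\A$ then gives $h^{\mathbb I}(\omega) = \omega$. To pass to an arbitrary path $\omega$ starting at a vertex $v$, I would use accessibility to pick a path $\tau$ from $i_0$ to $v$; applying the previous case to $\tau$ and to $\tau\cdot\omega$, together with \ref{fJconcat}, yields $\tau\cdot\omega = h^{\mathbb I}(\tau\cdot\omega) = \tau\cdot h^{\mathbb I}(\omega)$, and left-cancellation of the concatenation gives $h^{\mathbb I}(\omega) = \omega$. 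Thus $(g\circ f)^{\mathbb I} = \mathrm{id}$, and symmetrically $(f\circ g)^{\mathbb I} = \mathrm{id}$ using accessibility and determinism of $\B$; this is (ii), and restricting to length-$0$ paths (vertices) immediately gives (i).

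For (iii) I would first note that, being mutually inverse, $f^{\mathbb I}$ and $g^{\mathbb I}$ send positive-length paths to positive-length paths, since a positive-length path mapping to a length-$0$ path would, under the inverse (which sends vertices to vertices), return a length-$0$ path. Then, if $v\in M(P)$ but $f_0(v)\notin M(P')$, there is an edge $x'$ with $d_1^0x' = f_0(v)$; applying $g^{\mathbb I}$ to the length-$1$ path $x'_{\sharp}$ produces, by \ref{fomega} and (i), a positive-length path starting at $g_0(f_0(v)) = v$, whose first edge contradicts the maximality of $v$. The minimal case is identical, using endpoints, and the same argument applied to $g$ shows that $f_0$ and $g_0$ restrict to inverse bijections between $M(P)$ and $M(P')$ and between $m(P)$ and $m(P')$.

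Finally, (iv) and (v) are formal. By functoriality of $\vec\pi_1$ (which rests on \ref{composition} and \ref{fJfunctor}) one has $g_*\circ f_* = (g\circ f)_*$, and by (i) and (ii) this is the identity functor on $\vec\pi_1(P)$: it is the identity on objects, and since $(g\circ f)^{\mathbb I} = \mathrm{id}$ it is the identity on morphisms; symmetrically $f_*\circ g_* = \mathrm{id}$, giving (iv). For (v), part (iii) together with $f_0(I)\subseteq I'$ and $f_0(F)\subseteq F'$ shows that $f_0$ maps the generating set $I\cup F\cup m(P)\cup M(P)$ into $I'\cup F'\cup m(P')\cup M(P')$, so, by the remark on trace categories above, $f_*$ restricts to a functor $TC(\A)\to TC(\B)$, and likewise $g_*$; the restricted functors are inverse isomorphisms because their composites are restrictions of the identity functors obtained in (iv).
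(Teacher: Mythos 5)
Your proof is correct; it uses the same ingredients as the paper's but organizes them differently. Both arguments begin identically: determinism and label preservation force $g^{\mathbb I}(f^{\mathbb I}(\omega)) = \omega$ for every path $\omega$ starting at the unique initial state. The paper then deduces (i) first --- by accessibility every vertex $v$ is the endpoint of a based path, and the endpoint formula of Proposition \ref{fomega} gives $g_0(f_0(v)) = v$ --- and afterwards obtains (ii) by a \emph{second} application of determinism, now to an arbitrary path, whose image under $g^{\mathbb I}\circ f^{\mathbb I}$ starts at the correct vertex thanks to (i). You instead prove the path-level identity $(g\circ f)^{\mathbb I} = \mathrm{id}$ outright, passing from based paths to arbitrary ones by prepending a based path and cancelling it on the left, and recover (i) as the length-zero special case. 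This inverts the paper's decomposition ((ii) before (i)) and replaces its second use of determinism by Proposition \ref{fJconcat} together with left-cancellation of concatenation (immediate from the definition of concatenation, though worth stating); what this buys is that determinism is invoked only once and the whole of (i)--(ii) is packaged as the single clean claim $h^{\mathbb I} = \mathrm{id}$ for $h = g\circ f$, while the paper's route avoids needing \ref{fJconcat} at all. Your treatment of (iii), (iv) and (v) coincides with the paper's: pull an edge back through $g^{\mathbb I}$ for (iii), then functoriality and the restriction remark on trace categories for (iv) and (v). One minor simplification: the fact that $g^{\mathbb I}$ sends positive-length paths to positive-length paths, which you deduce from (ii), is already built into the definition of the induced map on paths, since the dihomeomorphism $\phi$ appearing there has as target an interval $[0,k']$ with $k'\geq 1$.
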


\begin{proof} It is enough to show (i), (ii) and (iii). 

(i) Since $\A$ and $\B$ are deterministic, there exist $a \in P_0$ and $a' \in P'_0$ such that  $I = \{a\}$ and $I' = \{a'\}$. We have $f_0(a) = a'$ and $g_0(a') = a$. Consider a path $\omega\in P^{\mathbb I}$ such that $\omega (0) = a$. Then $g^{\mathbb I}(f^{\mathbb I}(\omega))(0) = g_0(f^{\mathbb I}(\omega) (0)) = g_0(f_0(\omega (0))) = g_0(f_0(a)) = a = \omega (0)$. Since $\overline{\lambda}(g^{\mathbb I}(f^{\mathbb I}(\omega))) = \overline{\lambda}(\omega)$ and $\A$ is deterministic, $g^{\mathbb I}(f^{\mathbb I}(\omega)) = \omega$. Analogously, $f^{\mathbb I}(g^{\mathbb I}(\omega')) = \omega'$ for any path $\omega ' \in P'^{\mathbb I}$ with $\omega'(0) = a'$.

Let $v \in P_0$ be any vertex. Since $\A$ is accessible, there exists a path $\omega \colon \llbracket 0,k \rrbracket \to P$ such that $\omega (0) = a$ and $\omega (k) = v$. As we have seen, $g^{\mathbb I}(f^{\mathbb I}(\omega)) = \omega$. Suppose that $\length (f^{\mathbb I}(\omega)) = k'$. We have $g_0(f_0(v)) = g_0(f_0(\omega (k))) = g_0(f^{\mathbb I}(\omega) (k')) = g^{\mathbb I}(f^{\mathbb I}(\omega)) (k) = \omega (k) = v$. Similarly, $f_0(g_0(v')) = v'$ for all vertices $v' \in P'_0$. This shows that $f_0$ and $g_0$ are inverse bijections.

(ii) Let $\omega \in P^{\mathbb I}$ be any path. Then $g^{\mathbb I}(f^{\mathbb I}(\omega))(0) = g_0(f_0(\omega (0)) = \omega (0)$. Since $\overline{\lambda}(g^{\mathbb I}(f^{\mathbb I}(\omega))) = \overline{\lambda}(\omega)$ and $\A$ is deterministic, $g^{\mathbb I}(f^{\mathbb I}(\omega)) = \omega$. Analogously, \linebreak $f^{\mathbb I}(g^{\mathbb I}(\omega')) = \omega'$ for any path $\omega ' \in P'^{\mathbb I}$. This shows that $f^{\mathbb I}$ and $g^{\mathbb I}$ are inverse bijections. 

(iii) It is enough to show that $f_0$ preserves maximal and minimal elements.. Let $v \in P_0$ be an element such that $f_0(v)$ is not maximal. Then there exists an element $x'\in P'_1$ such that $x'_{\sharp}(0) = d_1^0x' = f_0(v)$. Then $g^{\mathbb I}(x'_{\sharp})(0) = g_0(x'_{\sharp}(0)) = g_0(f_0(v)) = v$. Since $\length(g^{\mathbb I}(x'_{\sharp})) > 0$, $v$ is not maximal. This shows that $f_0$ preserves maximal elements. Similarly, $f_0$ preserves minimal elements. 
\end{proof}

\subsection{Trace equivalences} \label{simt} 

A \emph{trace equivalence} from an $M$-HDA $\A = (P,I,F,\lambda)$ to an $M$-HDA $\B=(P',I',F',\lambda')$ is a weak morphism $f$ from $\A$ to $\B$ such that $f_0(I) = I'$, $f_0(F) = F'$, $f_0(m(P)) = m(P')$, $f_0(M(P)) = M(P')$ and the functor  $f_*\colon TC(\A) \to TC(\B)$ is an isomorphism. We write $\A \stackrel{\sim_t}{\to} \B$ and say that $\A$ is a \emph{trace equivalent abstraction} of $\B$, or that $\B$ is a \emph{trace equivalent refinement} of $\A$, if there exists a trace equivalence from $\A$ to $\B$. It is clear that $\stackrel{\sim_t}{\to}$ is a preorder on the class of $M$-HDAs. We have seen in \ref{accessibleHDAs} that for accessible and deterministic $M$-HDAs $\A$ and $\B$, $\A \leftrightarrow \B \Rightarrow \A \stackrel{\sim_t}{\to} \B$. In the next section, we give another condition under which one has $\A \stackrel{\sim_t}{\to} \B$. Here, we show that in the stable and deterministic case, the trace language of an $M$-HDA is invariant under trace equivalent abstraction:

\begin{theor}\label{traceequivalence}
Let $\A$ be a stable $M$-HDA and $\B$ be a stable and deterministic $M$-HDA such that $\A \stackrel{\sim_t}{\to} \B$.  Then a bijection $TL(\A) \to TL(\B)$ is given by $tr_{\A}(l) \mapsto tr_{\B}(l)$. 
\end{theor}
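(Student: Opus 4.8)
The plan is to note first that the assignment $tr_{\A}(l) \mapsto tr_{\B}(l)$ is already a well-defined map $TL(\A) \to TL(\B)$: a trace equivalence $f$ is in particular a weak morphism from $\A$ to $\B$, so Theorem \ref{weakmorTL} applies (recall $\B$ is stable), and $L(\A) \subseteq L(\B)$ by \ref{lalb}, so $tr_{\B}(l)$ does land in $TL(\B)$ for $l \in L(\A)$. The entire content is therefore the bijectivity of this map, and I would obtain it by turning congruences of labels into equalities of morphisms in the trace categories and invoking that, by the definition of a trace equivalence (\ref{simt}), the induced functor $f_*\colon TC(\A) \to TC(\B)$ is an isomorphism, hence bijective on objects and on morphisms.

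For injectivity, I would take $l,m \in L(\A)$ with $tr_{\B}(l) = tr_{\B}(m)$, that is $l \equiv_{\B} m$, and choose paths $\omega$ and $\nu$ from $I$ to $F$ with $\overline{\lambda}(\omega) = l$ and $\overline{\lambda}(\nu) = m$. By \ref{fomega} the images $f^{\mathbb I}(\omega)$ and $f^{\mathbb I}(\nu)$ both begin at the unique initial state of $\B$ (here determinism of $\B$ enters), and the weak-morphism identity $\overline{\lambda'}\circ f^{\mathbb I} = \overline{\lambda}$ gives $\overline{\lambda'}(f^{\mathbb I}(\omega)) = l \equiv_{\B} m = \overline{\lambda'}(f^{\mathbb I}(\nu))$. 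Proposition \ref{weakmordet} then yields $f^{\mathbb I}(\omega) \sim f^{\mathbb I}(\nu)$, i.e. $f_*([\omega]) = f_*([\nu])$. Since $[\omega]$ and $[\nu]$ are morphisms of $TC(\A)$ (their endpoints lie in $I \cup F$) and $f_*$ is injective on morphisms, $\omega \sim \nu$. Stability of $\A$ and Proposition \ref{stablelabel} finally give $l = \overline{\lambda}(\omega) \equiv_{\A} \overline{\lambda}(\nu) = m$, so $tr_{\A}(l) = tr_{\A}(m)$.

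For surjectivity, I would start from an arbitrary $tr_{\B}(n) \in TL(\B)$ with $n \in L(\B)$, pick a path $\omega'$ in $\B$ from $I'$ to $F'$ with $\overline{\lambda'}(\omega') = n$, and use that $f_*$ is an isomorphism to lift $[\omega']$ to the unique morphism $[\omega]$ of $TC(\A)$ with $f_*([\omega]) = [\omega']$. Because $f_0(I) = I'$, $f_0(F) = F'$ and $f_0$ is injective on objects, the endpoints of $\omega$ must lie in $I$ and $F$, so $l := \overline{\lambda}(\omega) \in L(\A)$. From $f^{\mathbb I}(\omega) \sim \omega'$, stability of $\B$ together with Proposition \ref{stablelabel} gives $l = \overline{\lambda'}(f^{\mathbb I}(\omega)) \equiv_{\B} \overline{\lambda'}(\omega') = n$, whence $tr_{\B}(l) = tr_{\B}(n)$ and $tr_{\B}(n)$ lies in the image.

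I expect the main obstacle to be the bookkeeping of objects: to pass between paths in $\A$ (resp. $\B$) beginning in $I$ (resp. $I'$) and ending in $F$ (resp. $F'$) and morphisms of the trace categories, one must repeatedly use \ref{fomega} to locate the endpoints of image paths and the bijectivity of $f_0$ on the generating objects to pull these endpoints back into $I$ and $F$. Once this matching is set up correctly, injectivity and surjectivity reduce to the determinism and stability results (\ref{weakmordet}, \ref{stablelabel}) already available.
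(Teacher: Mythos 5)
Your proposal is correct and follows essentially the same route as the paper's own proof: well-definedness via Theorem \ref{weakmorTL}, injectivity by combining Propositions \ref{weakmordet} and \ref{stablelabel} with the fact that $f_*\colon TC(\A) \to TC(\B)$ is an isomorphism, and surjectivity by lifting $[\omega']$ through $f_*$ and using $f_0(I)=I'$, $f_0(F)=F'$ and injectivity of $f_0$ on objects to place the lifted path's endpoints in $I$ and $F$. The only difference is the order of the two halves, which is immaterial.
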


\begin{proof}
Write $\A  = (P,I,F,\lambda)$ and $\B=(P',I',F',\lambda')$, and let $f$ be a trace equivalence from $\A$ to $\B$. By theorem \ref{weakmorTL}, the map $\Psi\colon TL(\A) \to TL(\B)$, $tr_{\A}(l) \mapsto tr_{\B}(l)$ is well-defined. 

We show that $\Psi$ is surjective. Consider $l' \in L(\B)$. Let $\omega'$ be a path in $P'$ beginning in $I'$ and ending in $F'$ such that $\overline{ \lambda'} (\omega ') = l'$. Since $f_*\colon TC(\A) \to TC(\B)$ is an isomorphism, there exists a path $\omega$  in $P$ with end points in $I \cup F\cup m(P) \cup M(P)$ such that $f_*([\omega]) = [\omega ']$. It follows that $f^{\mathbb I}(\omega ) \sim \omega'$ and that $f^{\mathbb I}(\omega )$ begins in $I'$ and ends in $F'$. Since the functor $f_*\colon TC(\A) \to TC(\B)$ is an isomorphism, the map $f_0\colon I \cup F \cup m(P) \cup M(P) \to I'\cup F' \cup m(P') \cup M(P')$ is a bijection. Since $f_0(I) = I'$ and $f_0(F) = F'$, it follows that $\omega$ begins in $I$ and ends in $F$. Therefore $\overline{\lambda} (\omega ) \in L(\A)$. By \ref{stablelabel}, $l' = \overline{ \lambda'} (\omega ') \equiv_{\B} \overline{ \lambda'} (f^{\mathbb I}(\omega )) = \overline{ \lambda} (\omega)$. Thus, $tr_{\B}(l') = tr_{\B}(\overline{ \lambda} (\omega))  = \Psi (tr_{\A}(\overline{ \lambda} (\omega)))$. 

It remains to show that $\Psi$ is injective. Consider $l,m \in L(\A)$ such that $tr_{\B}(l) = tr_{B}(m)$, i.e. $l \equiv_{\B} m$. Let $\omega$ and $\nu$ be paths in $P$ beginning in $I$ and ending in $F$ such that $\overline{ \lambda} (\omega) = l$ and $\overline{ \lambda} (\nu) = m$. Then $f^{\mathbb I}(\omega )$ and $f^{\mathbb I}(\nu)$ start in the only initial state of $\B$ and $\overline{ \lambda'} (f^{\mathbb I}(\omega)) = l \equiv_{\B} m = \overline{ \lambda'} (f^{\mathbb I}(\nu))$. By \ref{weakmordet}, $f^{\mathbb I}(\omega) \sim f^{\mathbb I}(\nu)$. Thus, $f_*([\omega]) = f_*([\nu])$. Since $f_*$ is an isomorphism, $\omega \sim \nu$. By \ref{stablelabel}, $l \equiv_{\A} m$, i.e. $tr_{\A}(l) = tr_{\A}(m)$.  
\end{proof}

\section{Homeomorphic abstraction} \label{homeoabs}

In this last section, we introduce the preorder relation of homeomorphic abstraction and show that, under a mild condition, it is stronger than trace equivalent abstraction. The main point is to construct an inverse ``up to dihomotopy'' of the map induced on paths sets by a weak morphism that is a homeomorphism. The construction of this inverse relies on cube paths and carrier sequences.

\subsection{The preorder relation of homeomorphic abstraction} \label{homeoref} Consider two $M\textrm{-}$HDAs $\A = (P,I,F,\lambda)$ and $\B = (P',I',F',\lambda')$. We say that $\A$ is a \emph{homeomorphic abstraction} of $\B$, or that $\B$ is a \emph{homeomorphic refinement} of $\A$, if there exists a weak morphism $f$ from $\A$ to $\B$ that is a homeomorphism and satisfies $f_0(I) = I'$ and $f_0(F) = F'$. In particular, if $\B$ is a subdivision of $\A$, then $\A$ is a homeomorphic abstraction of $\B$. We use the notation $\A \stackrel{\approx}{\to} \B$ to indicate that $\A$ is a homeomorphic abstraction of $\B$. It is clear that the relation $\stackrel{\approx}{\to}$ is a preorder on the class of $M$-HDAs.

\subsection{Cube paths}
The cube paths we define next are essentially the same as the full cube paths considered in \cite{FahrenbergThesis} and the paths considered in \cite{vanGlabbeek}. A \emph{cube path} in a precubical set $P$ is a sequence of elements
$c = (c_1, \dots, c_m)$ such that for all $i \in \{1,\dots,m-1\}$ one of the following conditions holds:
\begin{itemize}
\item[(i)] $c_i = c_{i+1}$;
\item[(ii)] $\exists r\in \{1, \dots , \deg(c_i)\}\colon c_{i+1} = d_r^1c_i$;
\item[(iii)] $\exists r\in \{1, \dots , \deg(c_{i+1})\}\colon c_i = d_r^0c_{i+1}$.
\end{itemize}
The \emph{concatenation} of two cube paths $c = (c_1, \dots , c_m)$ and $d = (d_1, \dots , d_n)$ such that $d_1 = c_m$ is the cube path $c\cdot d = (c_1, \dots , c_m=d_1, \dots , d_n)$. If $\chi\colon  Q \to P$ is a morphism of precubical sets and $c = (c_1, \dots, c_m)$ is a cube path in $Q$, then we denote by $\chi(c)$ the cube path $(\chi(c_1), \dots, \chi(c_m))$ in $P$.  

With every cube path we can associate an ordinary path. In the definition of this path we use the following notation: Let $b \in P$ be an element of degree $n > 0$ and $r \in \{1, \dots, n\}$. We define the element $\hat d_rb \in P_1$ by 
$$\hat d_rb = \left\{\begin{array}{ll} b,& n = 1,\\d_1^0 \cdots d_{r-1}^ 0d_{r+1}^ 0\cdots d_n^ 0b, & n > 1. \end{array} \right.$$ Note that the element $\hat d_rb$ is an edge of $b$ leading from the initial vertex of $b$ to the initial vertex of the face $d_r^ 1b$, i.e. we have $d_1^ 0\hat d_rb = d_1^ 0\cdots d_1^0b$ and $d_1^1\hat d_rb = d_1^ 0\cdots d_1^0d_r^ 1b$.

\begin{defin}
Let $c=(c_1, \dots, c_m)$ be a cube path in $P$. The \emph{path associated with} $c$ is the path $\gamma(c) \in P^{\mathbb I}$ defined recursively as follows: If either $m = 1$ or $m=2$ and $\deg(c_1) \leq \deg(c_2)$, then $\gamma(c)$ is the path of length $0$ defined by $\gamma(c)(0) = c_{1\sharp}(0,\dots,0)$. If $m=2$ and $\deg(c_1) > \deg(c_2)$, then consider the least $r \in \{1, \dots, \deg(c_1)\}$ such that $c_2 = d_r^1c_1$ and set $\gamma(c) = (\hat d_rc_1)_{\sharp}$. For $m > 2$, $\gamma(c) = \gamma (c_1,c_2)\cdot \gamma(c_2,c_3) \cdots \gamma (c_{m-1},c_m)$.
\end{defin}

\begin{rems}
(i) We remark that $\gamma (c)$ is a path in $P$ from $c_{1\sharp}(0,\dots ,0)$ to $c_{m\sharp}(0,\dots ,0)$.

(ii) We have $\gamma(c\cdot d) = \gamma(c) \cdot \gamma(d)$.
\end{rems}

\subsection{Carrier sequences}
Let $f\colon |P| \to |P'|$ be a weak morphism of precubical sets that is a homeomorphism. We shall show that with every path in $P'$ we can associate a cube path, which we call its carrier sequence. The concept of carrier sequence is an adaptation of the one considered in \cite{Fajstrup}.

\begin{defin} 
The \emph{carrier} of an element $b \in P'$ with respect to $f$ is the unique element $c(b)\in P$ for which there exists an element $u \in ]0,1[^{\deg(c(b))}$ such that $f([c(b),u]) = [b,(\tfrac{1}{2},\dots,\tfrac{1}{2})]$.  The \emph{carrier sequence} of a path $\nu = y_{1\sharp}\cdots y_{n\sharp}$ in $P'$ $(n\geq 1)$ with respect to $f$ is the sequence
$$c(\nu) = \left(c(d_1^0y_1), c(y_1),c(d_1^1y_1),c(y_2),c(d_1^1y_2),\dots , c(y_n),c(d_1^1y_n)\right).$$ For a path $\nu \in P'^{\mathbb I}$ of length $0$ we set $c(\nu) = \left( c(\nu(0)) \right)$.
\end{defin}

\begin{rem}  \label{cf0}
For all $a \in P_0$, $c(f_0(a)) =a$. Indeed, $f([a,()]) = [f_0(a),()]$.
\end{rem}

\begin{prop}
The carrier sequence of a path in $P'$ is a cube path in $P$.
\end{prop}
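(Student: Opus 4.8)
The plan is to verify directly that every pair of consecutive entries of the sequence $c(\nu)$ satisfies one of the three defining conditions of a cube path. For a path of length $0$ the sequence has a single entry and there is nothing to check, so assume $\nu = y_{1\sharp}\cdots y_{n\sharp}$ with $n\geq 1$. Using the relation $d_1^1 y_i = d_1^0 y_{i+1}$, one checks that every consecutive pair in $c(\nu)$ has one of two forms: a pair $(c(d_1^0 y), c(y))$ (the carrier of the source vertex of an edge, followed by the carrier of the edge) or a pair $(c(y), c(d_1^1 y))$ (the carrier of an edge, followed by the carrier of its target vertex). It therefore suffices to analyse these two situations for a single edge $y \in P'_1$.

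First I would describe the carrier $b := c(y)$. Since $f([b,u]) = [y,(\tfrac{1}{2})]$ for some $u$ in the open cube and $[y,(\tfrac{1}{2})]$ lies in an open $1$-cell, $b$ cannot be a vertex, so $d := \deg(b) \geq 1$. Applying the weak morphism property in the form of Proposition \ref{unicity} to $b_\sharp \colon \llbracket 0,1\rrbracket^{\otimes d} \to P$, I obtain integers $k'_1,\dots,k'_d \geq 1$, a morphism $\chi' \colon \llbracket 0,k'_1\rrbracket \otimes \cdots \otimes \llbracket 0,k'_d\rrbracket \to P'$ and a dihomeomorphism $\phi$ with $f \circ |b_\sharp| = |\chi'| \circ \phi$. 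As $\phi$ is a homeomorphism of boxes it carries the interior to the interior, so $\phi(u)$ is an interior point and lies in the open cell of a unique cube $z$ of the subdivided cube; comparing open cells in $|\chi'|(\phi(u)) = [y,(\tfrac{1}{2})]$ (and using that morphisms preserve degree) forces $z$ to be an edge, say $z = (j_1,\dots,j_{p-1},[j_p,j_p+1],j_{p+1},\dots,j_d)$, with $\chi'(z) = y$, $j_l \in \{1,\dots,k'_l-1\}$ for $l\neq p$ and $j_p \in \{0,\dots,k'_p-1\}$. Then $\chi'(d_1^0 z) = d_1^0 y$ and $\chi'(d_1^1 z) = d_1^1 y$, which lets me locate the carriers of the two endpoints.

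The decisive step is this localisation. Writing $q_0 = (j_1,\dots,j_d)$ and $q_1 = (j_1,\dots,j_{p-1},j_p+1,j_{p+1},\dots,j_d)$ for the box-points of $d_1^0 z$ and $d_1^1 z$, one has $f([b,\phi^{-1}(q_0)]) = |\chi'|(q_0) = [d_1^0 y,()]$, and similarly for $q_1$; so the carrier of $d_1^0 y$ is read off by rewriting $[b,\phi^{-1}(q_0)]$ in its unique normal form via the relation $(d_r^0 b, w) \sim (b, \delta_r^0 w)$, which amounts to determining which faces of $[0,1]^d$ contain $\phi^{-1}(q_0)$. Here I would invoke Proposition \ref{orderhomeo} to get a permutation $\pi$ with $\phi(\{x_k = 0\}) = \{x_{\pi(k)} = 0\}$ and $\phi(\{x_k = 1\}) = \{x_{\pi(k)} = k'_{\pi(k)}\}$. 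Since every coordinate of $q_0$ is strictly interior except possibly the $p$-th, the preimage $\phi^{-1}(q_0)$ is interior unless the $p$-th coordinate is extremal: when $j_p = 0$ it lies on the single face $\{x_{\pi^{-1}(p)} = 0\}$ and $c(d_1^0 y) = d_{\pi^{-1}(p)}^0 c(y)$ (condition (iii)), whereas for $j_p > 0$ it is interior and $c(d_1^0 y) = c(y)$ (condition (i)). The symmetric analysis of $q_1$ gives $c(d_1^1 y) = d_{\pi^{-1}(p)}^1 c(y)$ when $j_p = k'_p - 1$ (condition (ii)) and $c(d_1^1 y) = c(y)$ otherwise (condition (i)). The main obstacle is precisely this bookkeeping: correctly tracking, through the order-preserving face correspondence of $\phi$, which extremal coordinate of the subdivided cube a given endpoint occupies, and translating that occupation into the appropriate boundary operator applied to $c(y)$.
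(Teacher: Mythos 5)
Your proposal is correct and follows essentially the same route as the paper: the paper reduces the proposition to Lemma \ref{cdb}, which treats exactly the two edge--endpoint pairs you isolate, and the proof of that lemma is precisely your analysis (apply the weak morphism property to the carrier of the edge, identify the edge as the image of a $1$-cube $z$ of the subdivided box with all transverse coordinates interior, then read off the carriers of the endpoints from whether the preimages of the endpoint vertices are interior or lie on a single lower/upper face). The only cosmetic difference is that for the localisation step you invoke the face correspondence of Proposition \ref{orderhomeo}, whereas the paper applies Proposition \ref{minmaxcoord} directly to count extremal coordinates of the preimage point; both yield the same normal-form computation $[b,\delta_r^k(v)]=[d_r^k b,v]$.
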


\begin{proof}
The proposition is an immediate consequence of lemma \ref{cdb} below.
\end{proof}

\begin{rems}
(i) Since the carrier sequence of a path $\omega \in P'^{\mathbb I}$ is a cube path in $P$, we may consider the path $\gamma (c (\omega))$ in $P$. If $v$ and $w$ are vertices in $P$ and $\omega$ is a path in $P'$ from $f_0(v)$ to $f_0(w)$, then $\gamma(c(\omega))$ is a path in $P$ from $v$ to $w$.
 
(ii) If $\omega$ and $\nu$ are paths in $P'$ such that $\omega(\length(\omega )) = \nu (0)$, then $c(\omega \cdot \nu) = c(\omega) \cdot c(\nu)$. 
\end{rems}

\begin{lem} \label{cdb}
Consider elements $b\in P'_1$ and $k \in \{0,1\}$. Then either $c(d_1^kb) =  c(b)$ or there exists an $r \in \{1, \dots, \deg(c(b))\}$ such that $c(d_1^kb) =  d_r^kc(b)$.
\end{lem}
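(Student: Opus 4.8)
The plan is to reduce the statement to the local structure of $f$ on the cell carrying $b$ and then to read off the positions of the endpoints of that cell by means of Proposition \ref{minmaxcoord}. Write $c = c(b)$ and $n = \deg(c)$. First I would note that $n \geq 1$: if $n$ were $0$ then $f([c,()]) = [f_0(c),()]$ would be a $0$-cell by condition (1) of the definition of weak morphisms, whereas $[b,\tfrac12]$ lies in the interior of a $1$-cell, contradicting the uniqueness of cell representatives. Applying condition (2) of the definition of weak morphisms to the morphism $c_\sharp\colon \llbracket 0,1\rrbracket^{\otimes n} \to P$ then yields integers $k'_1,\dots,k'_n \geq 1$, a morphism $\chi'\colon \llbracket 0,k'_1\rrbracket\otimes\cdots\otimes\llbracket 0,k'_n\rrbracket \to P'$ and a dihomeomorphism $\phi$ with $f\circ|c_\sharp| = |\chi'|\circ\phi$.

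Next I would locate the point $\bar u := \phi(u)$, where $u \in {]0,1[}^n$ is the interior point with $f([c,u]) = [b,\tfrac12]$. Since $\phi$ preserves interiors, $\bar u$ lies in the open cube $\prod_i {]0,k'_i[}$; and since $|\chi'|(\bar u) = [b,\tfrac12]$ lies in the interior of the $1$-cell $b$ while $\chi'$ preserves degrees, $\bar u$ must lie in the interior of a $1$-cell $e$ of the subdivided cube with $\chi'(e) = b$. Thus $\bar u$ has exactly one non-integral coordinate, in some position $p$, of value $j_p + \tfrac12$, while all other coordinates are integers $j_i \in \{1,\dots,k'_i-1\}$. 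The two endpoints of $e$ are the vertices $\bar v^0 := d_1^0 e$ and $\bar v^1 := d_1^1 e$, which differ from $\bar u$ only in the $p$-th coordinate (equal to $j_p$ and to $j_p+1$ respectively), and they satisfy $|\chi'|(\bar v^k) = [d_1^k b,()] = [b,k]$.

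From $f\circ|c_\sharp| = |\chi'|\circ\phi$ I would then deduce $f^{-1}([b,k]) = [c,\,\phi^{-1}(\bar v^k)]$, so that $c(d_1^k b)$ is precisely the cell of $|P|$ containing the point $[c,w^k]$, where $w^k := \phi^{-1}(\bar v^k) \in [0,1]^n$. The whole question is therefore which coordinates of $w^k$ are extremal, and this is exactly where Proposition \ref{minmaxcoord}, applied to the dihomeomorphism $\phi^{-1}$, does the work: $\bar v^k$ and $w^k$ have the same number of minimal and the same number of maximal coordinates. Since the coordinates of $\bar v^k$ at the positions $i \neq p$ are the interior integers $j_i$, the point $\bar v^k$ has no extremal coordinate at all unless its $p$-th coordinate is extremal, i.e. unless $k=0$ and $j_p = 0$ (a minimal coordinate) or $k=1$ and $j_p+1 = k'_p$ (a maximal coordinate).

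Consequently, if $\bar v^k$ has no extremal coordinate then neither does $w^k$, so $w^k \in {]0,1[}^n$ and $[c,w^k]$ lies in the interior of $c$, giving $c(d_1^k b) = c = c(b)$. Otherwise $\bar v^k$ has a single extremal coordinate, of type matching $k$ (minimal for $k=0$, maximal for $k=1$), so $w^k$ has exactly one coordinate equal to $k$, say in position $r$, with all others in $]0,1[$; the identification $(c,\delta_r^k(w')) \sim (d_r^k c, w')$ of the geometric realisation then shows that $[c,w^k]$ lies in the interior of the face $d_r^k c$, whence $c(d_1^k b) = d_r^k c(b)$ with $r \in \{1,\dots,\deg(c(b))\}$. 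I expect the main obstacle to be the bookkeeping in the middle steps — confirming that $\bar u$ sits on a $1$-cell of the subdivision and correctly matching the superscript $k$ of the resulting face to the chosen endpoint — rather than any single hard idea; the one genuinely essential ingredient is Proposition \ref{minmaxcoord}, which transports the count of extremal coordinates across $\phi^{-1}$.
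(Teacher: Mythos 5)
Your proof is correct and takes essentially the same route as the paper's: apply condition (2) of the weak-morphism definition to $c(b)_{\sharp}$, identify the $1$-cell of the subdivided hyperrectangle lying over $b$ through $\phi(u)$, and use Proposition \ref{minmaxcoord} to count extremal coordinates of the endpoint preimages under $\phi$, yielding either $c(d_1^kb) = c(b)$ (interior case) or $c(d_1^kb) = d_r^kc(b)$ (face case). The only differences are presentational: the paper starts from a $\beta$-preimage cube of $b$ and treats $k=0$ explicitly with $k=1$ declared analogous, whereas you push $u$ forward to locate the cell $e$ and handle both values of $k$ uniformly.
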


\begin{proof}
Set $n = \deg(c(b))$. Note that $n\geq 1$ because otherwise we would have $[f_0(c(b)),()] = [b,\frac{1}{2}]$. Let $u \in ]0,1[^n$ be the element such that $f([c(b),u]) = [b,\frac{1}{2}]$. Since $f$ is a weak morphism, there exist integers $k_1, \dots, k_n \geq 1$, a morphism of precubical sets $\beta\colon \llbracket 0,{k_1} \rrbracket \otimes \cdots \otimes \llbracket 0,{k_n} \rrbracket \to P'$  and a dihomeomorphism $$\phi\colon |\llbracket 0,1 \rrbracket^{\otimes n}| = [0,1]^n \to |\llbracket 0,{k_1} \rrbracket \otimes \cdots \otimes \llbracket 0,{k_n} \rrbracket| = [0,k_1]\times \cdots \times [0,k_n]$$ such that $f\circ |c(b)_{\sharp}| = |\beta| \circ \phi$. Since $$[b,\tfrac{1}{2}] \in f(|c(b)_{\sharp}|([0,1]^n)) = |\beta (\llbracket 0,{k_1} \rrbracket\otimes \cdots \otimes \llbracket 0,{k_n} \rrbracket)|,$$ we have $b \in  \beta (\llbracket 0,{k_1} \rrbracket\otimes \cdots \otimes \llbracket 0,{k_n} \rrbracket)$. Write $$b = \beta(i_1, \dots, i_{j-1}, [i_j,i_j+1],i_{j+1}, \dots, i_n).$$ 
Since \begin{eqnarray*}
\lefteqn{f([c(b),\phi^{-1}(i_1, \dots, i_{j-1}, i_j+\tfrac{1}{2},i_{j+1}, \dots, i_n)])}\\ &=& |\beta| (i_1, \dots, i_{j-1}, i_j+\tfrac{1}{2},i_{j+1}, \dots, i_n)\\ &=& |\beta| ([(i_1, \dots, i_{j-1}, [i_j,i_j+1],i_{j+1}, \dots, i_n),\tfrac{1}{2}])\\ &=&  [b,\tfrac{1}{2}]\\ &=& f([c(b),u]), 
\end{eqnarray*}
we have $[c(b),\phi^{-1}(i_1, \dots, i_{j-1}, i_j+\tfrac{1}{2},i_{j+1}, \dots, i_n)] = [c(b),u]$. Since $u \in ]0,1[^n$, it follows that $\phi^{-1}(i_1, \dots, i_{j-1}, i_j+\tfrac{1}{2},i_{j+1}, \dots, i_n)= u$ and that $i_r \notin \{0,k_r\}$ for $r\not= j$. 

We have $d_1^0b = \beta(i_1, \dots,  i_n)$ and $f([c(b),\phi^ {-1}(i_1, \dots, i_n)]) = |\beta|(i_1, \dots, i_n) = [d_1^ 0b,()]$. If $i_j > 0$, then $\phi^ {-1}(i_1, \dots, i_n) \in ]0,1[^ n$ and therefore  $c(d_1^0b) = c(b)$. If $i_j=0$, then, by proposition  \ref{minmaxcoord}, exactly one coordinate of $\phi^ {-1}(i_1, \dots, i_n)$ is equal to $0$ and none is equal to $1$. This implies that there exist elements $r\in \{1, \dots, n\}$ and $v\in ]0,1[^{n-1}$ such that $\phi^ {-1}(i_1, \dots, i_n) = \delta_r^0(v)$. Therefore $[d_1^0b,()] = f([c(b),\delta_r^0(v)]) = f([d_r^0c(b),v])$ and hence $c(d_1^0b) = d_r^0c(b)$. An analogous argument shows that either $c(d_1^1b) = c(b)$ or there exists an index $r\in \{1, \dots, n\}$ such that $c(d_1^1b) = d_r^1c(b)$.
\end{proof}

\subsection{The map $\gamma \circ c$}
Consider a weak morphism of precubical sets $f\colon |P| \to |P'|$ that is a homeomorphism. We show that the map $\gamma \circ c\colon P'^{\mathbb I} \to P^{\mathbb I}$ is a retraction of $f^{\mathbb I}$:

\begin{prop} \label{retraction}
For every path $\omega \in P^{\mathbb I}$, $\gamma \circ c(f^{\mathbb I}(\omega)) = \omega$.
\end{prop}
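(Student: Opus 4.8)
The plan is to reduce the identity to the case of a single edge and then verify that case by computing a carrier sequence explicitly. The reduction rests on three multiplicativity properties already at hand: $f^{\mathbb I}(\omega \cdot \nu) = f^{\mathbb I}(\omega) \cdot f^{\mathbb I}(\nu)$ (Proposition \ref{fJconcat}), the remark that $c(\omega' \cdot \nu') = c(\omega') \cdot c(\nu')$ for composable paths in $P'$, and the remark that $\gamma(c \cdot d) = \gamma(c) \cdot \gamma(d)$ for composable cube paths. Writing a path of positive length as $\omega = x_{1\sharp} \cdots x_{k\sharp}$ with $x_j \in P_1$ and chaining these three identities yields $\gamma(c(f^{\mathbb I}(\omega))) = \gamma(c(f^{\mathbb I}(x_{1\sharp}))) \cdots \gamma(c(f^{\mathbb I}(x_{k\sharp})))$, so it is enough to prove $\gamma(c(f^{\mathbb I}(x_\sharp))) = x_\sharp$ for a single $x \in P_1$. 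The case $\length(\omega) = 0$ is immediate: if $\omega(0) = v$, then $f^{\mathbb I}(\omega)$ is the constant path at $f_0(v)$, its carrier sequence is $(c(f_0(v))) = (v)$ since $c(f_0(a)) = a$ for all vertices $a$, and $\gamma$ of a one-term sequence is the length-$0$ path at $v$.

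For a single edge, fix $x \in P_1$ and put $v = d_1^0 x$, $w = d_1^1 x$. By the definition of $f^{\mathbb I}$ there are an integer $k' \geq 1$, the path $\omega' = f^{\mathbb I}(x_\sharp) = y_{1\sharp} \cdots y_{k'\sharp}$ with $y_j \in P'_1$, and a dihomeomorphism $\phi\colon [0,1] \to [0,k']$ satisfying $f \circ |x_\sharp| = |\omega'| \circ \phi$. I would then determine every entry of the carrier sequence
$$c(\omega') = (c(d_1^0 y_1), c(y_1), c(d_1^1 y_1), c(y_2), \dots, c(y_{k'}), c(d_1^1 y_{k'})).$$
The two endpoints are handled by Proposition \ref{fomega} together with the fact that $c(f_0(a)) = a$: since $\omega'$ runs from $f_0(v)$ to $f_0(w)$, we get $c(d_1^0 y_1) = v$ and $c(d_1^1 y_{k'}) = w$. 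Every remaining entry is computed from the relation $f([x,t]) = |\omega'|(\phi(t))$, which holds for $t \in (0,1)$ because $|x_\sharp|(t) = [x,t]$ there. Taking $t = \phi^{-1}(j - \tfrac12) \in (0,1)$ gives $f([x,t]) = |\omega'|(j - \tfrac12) = [y_j, \tfrac12]$ and hence $c(y_j) = x$; taking $t = \phi^{-1}(j) \in (0,1)$ for $1 \leq j < k'$ gives $f([x,t]) = [d_1^1 y_j, ()]$ and hence $c(d_1^1 y_j) = x$. Consequently $c(\omega') = (v, x, x, \dots, x, w)$, with $v$ and $w$ at the two ends and every interior entry equal to $x$.

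It then remains to apply $\gamma$. Splitting $c(\omega')$ into consecutive pairs, $\gamma(c(\omega')) = \gamma(v,x) \cdot \gamma(x,x) \cdots \gamma(x,x) \cdot \gamma(x,w)$. Each leading pair has non-decreasing degree, so $\gamma(v,x)$ and every $\gamma(x,x)$ is the length-$0$ path at $v$. The final pair satisfies $\deg(x) = 1 > 0 = \deg(w)$ and $w = d_1^1 x$, so by definition $\gamma(x,w) = (\hat d_1 x)_\sharp$, and since $\deg(x) = 1$ forces $\hat d_1 x = x$, this is exactly $x_\sharp$. The leading length-$0$ paths at $v$ are neutral for concatenation and $x_\sharp$ starts at $v$, so the whole product collapses to $x_\sharp$, completing the single-edge case and hence the proposition.

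I expect the heart of the argument to be the carrier computation in the second paragraph. The delicate point is that, because $f$ is a homeomorphism and $\phi$ is an order-preserving homeomorphism sending endpoints to endpoints, the $f$-preimage of each midpoint $[y_j, \tfrac12]$ and of each interior vertex $[d_1^1 y_j, ()]$ lands in the open $1$-cell of $x$, which forces the corresponding carrier to be $x$ itself. Once this is established, distinguishing the two endpoints from the interior entries and checking that the length-$0$ factors of $\gamma$ vanish are purely routine.
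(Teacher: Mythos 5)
Your proposal is correct and follows essentially the same route as the paper's own proof: reduce to a single edge via the multiplicativity of $f^{\mathbb I}$, $c$ and $\gamma$, compute the carrier sequence $(d_1^0x, x, \dots, x, d_1^1x)$ using the dihomeomorphism $\phi$ and the midpoint/vertex preimages, and observe that $\gamma$ collapses it to $x_\sharp$. The only difference is cosmetic (you state the reduction before the edge case, the paper after), so there is nothing to add.
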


\begin{proof}
Suppose first that $\length(\omega) =0$. Then $\length(f^{\mathbb I}(\omega)) =0$ and $f^{\mathbb I}(\omega)(0) = f_0(\omega(0))$. Therefore $c(f^{\mathbb I}(\omega)) = (c(f_0(\omega(0))) = (\omega(0))$. Thus, $\gamma (c(f^{\mathbb I}(\omega))) = \gamma (\omega (0))$. Since $\gamma (\omega (0))(0) = \omega(0)_{\sharp}(0) = \omega (0)$, we have $\gamma (c(f^{\mathbb I}(\omega))) = \omega$.

Suppose now that $\omega = x_{\sharp}$ for some $x \in P_1$. Suppose that $\length(f^{\mathbb I}(\omega)) = k$ and that $\phi\colon |\llbracket 0,1 \rrbracket| = [0,1] \to |\llbracket 0,k \rrbracket| = [0,k]$ is a dihomeomorphism such that $|f^{\mathbb I}(\omega)|\circ \phi = f \circ |\omega|$. Write $y_i = f^{\mathbb I}(\omega)([i-1,i])$. Then $f^{\mathbb I}(\omega) = y_{1\sharp} \cdots y_{k\sharp}$ and hence $c(f^{\mathbb I}(\omega)) = c(y_{1\sharp}) \cdots c(y_{k\sharp})$. For $i \in \{1, \dots, k\}$, $\phi^{-1}(i-\frac{1}{2}) \in ]0,1[$ and $f([x, \phi^{-1}(i-\frac{1}{2})]) = f\circ |\omega|(\phi^ {-1}(i-\frac{1}{2})) = |f^{\mathbb I}(\omega)|(i-\frac{1}{2}) = [f^{\mathbb I}(\omega)([i-1,i]),\frac{1}{2}]) = [y_i,\frac{1}{2}]$. Thus, $c(y_i) = x$. For $i \in \{1, \dots, k-1\}$, $\phi^{-1}(i) \in ]0,1[$ and $f([x, \phi^{-1}(i)]) = f\circ |\omega|(\phi^ {-1}(i)) = |f^{\mathbb I}(\omega)|(i) = [f^{\mathbb I}(\omega) (i),()] = [d_1^1y_i,()]$. Thus, $c(d_1^1y_i) = x$. We have $c(d_1^ 0y_1) = c(f^{\mathbb I}(\omega)(0)) = c(f_0(\omega(0))) = \omega(0) = d_1^ 0x$ and $c(d_1^ 1y_k) = c(f^{\mathbb I}(\omega)(k)) = c(f_0(\omega(1))) = \omega(1) = d_1^ 1x$. It follows that $c(f^{\mathbb I}(\omega)) = (d_1^0x, x, \dots , x, d_1^ 1x)$ and hence that $\gamma (c (f^{\mathbb I}(\omega)) = \gamma (d_1^ 0x,x) \cdot \gamma (x,x) \cdots \gamma(x,x) \cdot \gamma(x,d_1^ 1x) = (d_1^ 0x)_{\sharp} \cdots (d_1^ 0x)_{\sharp} \cdot x_{\sharp} = x_{\sharp} = \omega$.

Suppose finally that $\omega = x_{1\sharp} \cdots x_{k\sharp}$ where $(x_1, \dots , x_k)$ is a sequence of elements of $P_1$ such that $d_1^0x_{j+1} = d_1^1x_j$ for all $1\leq j < k$. Then $\gamma(c(f^{\mathbb I}(\omega))) = \gamma(c(f^{\mathbb I}(x_{1\sharp} \cdots x_{k\sharp}))) =  \gamma(c(f^{\mathbb I}(x_{1\sharp})))\cdots \gamma(c(f^{\mathbb I}(x_{k\sharp}))) = x_{1\sharp} \cdots x_{k\sharp} = \omega$.
\end{proof}

\subsection{Regular and weakly regular elements}

We say that an element $x$ of a precubical set is \emph{regular} if $x_{\sharp}$ is injective. We say that an element $x$ of degree $n$ is \emph{weakly regular} if the restrictions of $x_{\sharp}$ to the graded subsets $(\llbracket 0,1 \rrbracket\setminus \{0\})^ {\otimes n}$ and $(\llbracket 0,1 \rrbracket\setminus \{1\})^ {\otimes n}$ of $\llbracket 0,1 \rrbracket^ {\otimes n}$ are injective. A precubical set or an $M$-HDA is said to be \emph{(weakly) regular} if all of its elements are (weakly) regular. Every $0$- or $1$-dimensional precubical set is weakly regular. The \emph{directed circle}, which is the precubical set consisting of exactly one vertex and exactly one $1$-cube, is weakly regular but not regular. It can be shown that a precubical set is weakly regular if and only if all elements of degree $2$ are weakly regular. Since we do not need this result in this paper, we do not include a proof.

\subsection{Compatibility of $\gamma \circ c$ with dihomotopy} 

The purpose of this subsection is to show that the map $\gamma \circ c$ sends dihomotopic paths to dihomotopic paths, at least if we restrict ourselves to weakly regular precubical sets.

\begin{lem} \label{cnat}
Consider two weak morphisms of precubical sets $f\colon |P| \to |P'|$ and $g\colon |Q| \to |Q'|$ and two morphisms of precubical sets $\chi \colon Q \to P$ and $\chi'\colon Q'\to P'$ such that $f$ and $g$ are homeomorphisms and $f\circ |\chi| = |\chi'| \circ g$. Then for all $a \in Q'$, $c(\chi'(a)) = \chi (c(a))$.
\end{lem}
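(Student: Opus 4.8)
The plan is to deduce the identity directly from the uniqueness property built into the definition of the carrier. Since $f$ is a homeomorphism, the carrier $c(\chi'(a))$ is characterised as the unique element $x \in P$ for which there exists some $u \in \,]0,1[^{\deg(x)}$ with $f([x,u]) = [\chi'(a),(\tfrac{1}{2},\dots,\tfrac{1}{2})]$; this uniqueness rests on $f$ being a bijection together with the unique representation of points of a geometric realisation recorded in \ref{geomreal}. Hence it suffices to verify that the candidate $x = \chi(c(a))$ satisfies this defining equation for a suitable $u$, and the desired equality $c(\chi'(a)) = \chi(c(a))$ will follow immediately.

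To supply such a $u$, I would begin from the carrier $c(a)$ of $a$ with respect to the homeomorphism $g$. By definition there is a $w \in \,]0,1[^{\deg(c(a))}$ with $g([c(a),w]) = [a,(\tfrac{1}{2},\dots,\tfrac{1}{2})]$, and I propose to take $u = w$. Because $\chi$ is a morphism of precubical sets it preserves degree, so $\deg(\chi(c(a))) = \deg(c(a))$ and therefore $w \in \,]0,1[^{\deg(\chi(c(a)))}$, which is exactly the open-cube condition the carrier requires.

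The core of the argument is then a short diagram chase through the commuting square $f \circ |\chi| = |\chi'| \circ g$. Using $|\chi|([c(a),w]) = [\chi(c(a)),w]$, one computes
\begin{eqnarray*}
f([\chi(c(a)),w]) &=& f(|\chi|([c(a),w])) = |\chi'|(g([c(a),w]))\\
&=& |\chi'|([a,(\tfrac{1}{2},\dots,\tfrac{1}{2})]) = [\chi'(a),(\tfrac{1}{2},\dots,\tfrac{1}{2})].
\end{eqnarray*}
This is precisely the defining relation for the carrier of $\chi'(a)$ with respect to $f$, so uniqueness yields $c(\chi'(a)) = \chi(c(a))$.

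I do not anticipate a genuine difficulty; the only point that needs attention is the bookkeeping of degrees and of the open-cube condition on the coordinate $w$. One must confirm that pushing the class $[c(a),w]$ through $|\chi|$ returns a representative already in canonical form, which holds because $\chi$ preserves degree and $|\chi|$ leaves the cube coordinate untouched. With this checked, the uniqueness of carriers carries out the rest of the proof.
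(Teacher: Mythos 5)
Your proof is correct and is essentially identical to the paper's: both take the coordinate $w$ witnessing the carrier $c(a)$ with respect to $g$, push $[c(a),w]$ through the commuting square $f\circ|\chi| = |\chi'|\circ g$, and conclude $c(\chi'(a)) = \chi(c(a))$ from the defining (uniqueness) property of the carrier. Your additional remarks on degree preservation and the canonical-form representation of points in the geometric realisation are just explicit bookkeeping that the paper leaves implicit.
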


\begin{proof}
Consider $u \in ]0,1[^{\deg(c(a))}$ such that $g([c(a),u]) = [a,(\frac{1}{2}, \dots, \frac{1}{2})]$. We have $f([\chi(c(a)),u]) = f\circ |\chi|([c(a),u]) = |\chi'|\circ g([c(a),u]) = |\chi'|([a,(\frac{1}{2}, \dots, \frac{1}{2})]) = [\chi'(a),(\frac{1}{2}, \dots, \frac{1}{2})]$ and hence $c(\chi'(a)) = \chi(c(a))$. 
\end{proof}

\begin{lem} \label{weaklyreg}
Let $\chi\colon  Q \to P$ be a morphism of precubical sets such that $P$ is weakly regular. Then for any cube path $c$ in $Q$, $\gamma (\chi(c)) = \chi \circ \gamma(c)$.
\end{lem}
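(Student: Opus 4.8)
The plan is to reduce the statement to cube paths of length at most two and to isolate the single case in which the weak regularity hypothesis is actually used.

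First I would record two elementary observations. Since $\chi$ commutes with the boundary operators, it sends cube paths to cube paths, so $\chi(c)$ is indeed a cube path in $P$ and $\gamma(\chi(c))$ is defined; and since composition of a morphism with a concatenation is the concatenation of the composites, $\chi\circ(\omega\cdot\nu) = (\chi\circ\omega)\cdot(\chi\circ\nu)$ for composable paths. Because $\gamma$ of a cube path of length $m>2$ is by definition the concatenation $\gamma(c_1,c_2)\cdots\gamma(c_{m-1},c_m)$, these two observations reduce the lemma to the claim $\gamma(\chi(c_i),\chi(c_{i+1})) = \chi\circ\gamma(c_i,c_{i+1})$ for consecutive pairs, together with the trivial length-one case. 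Throughout I would use the identity $\chi\circ x_\sharp = \chi(x)_\sharp$ (from the uniqueness of the morphism corresponding to an element) and the fact that $\chi$ preserves degrees.

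For the length-one case and for a pair with $\deg(c_1)\leq\deg(c_2)$ --- which, by condition (iii), includes the situation $c_1 = d_r^0 c_2$ --- both $\gamma(\chi(c))$ and $\chi\circ\gamma(c)$ are paths of length $0$, and the identity $\chi\circ x_\sharp = \chi(x)_\sharp$ shows their common value is $\chi(c_1)_\sharp(0,\dots,0)$; here I would note that degrees are preserved by $\chi$, so the degree comparison transfers from $Q$ to $P$. The essential case is a pair with $\deg(c_1) > \deg(c_2)$, which forces condition (ii), i.e. $c_2 = d_r^1 c_1$ with $n := \deg(c_1) = \deg(c_2)+1$. Letting $s$ be the least index with $c_2 = d_s^1 c_1$ and $t$ the least index with $\chi(c_2) = d_t^1\chi(c_1)$, the definition gives $\gamma(c_1,c_2) = (\hat d_s c_1)_\sharp$ and $\gamma(\chi(c_1),\chi(c_2)) = (\hat d_t\chi(c_1))_\sharp$, while $\chi(\hat d_s c_1) = \hat d_s\chi(c_1)$ because $\chi$ is a morphism; so $\chi\circ\gamma(c_1,c_2) = (\hat d_s\chi(c_1))_\sharp$ and the whole case reduces to proving $s = t$.

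This last step is where I expect the only real difficulty, and it is exactly what weak regularity buys. Applying $\chi$ to $c_2 = d_s^1 c_1$ yields $\chi(c_2) = d_s^1\chi(c_1)$, so $t\leq s$. If $t$ were strictly smaller than $s$, then $d_t^1\chi(c_1) = \chi(c_2) = d_s^1\chi(c_1)$; but $d_s^1\iota_n$ and $d_t^1\iota_n$ are distinct elements of the graded subset $(\llbracket 0,1\rrbracket\setminus\{0\})^{\otimes n}$ of $\llbracket 0,1\rrbracket^{\otimes n}$, and $\chi(c_1)_\sharp$ carries them to $d_s^1\chi(c_1)$ and $d_t^1\chi(c_1)$. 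Since $P$ is weakly regular, $\chi(c_1)$ is weakly regular, meaning precisely that $\chi(c_1)_\sharp$ is injective on $(\llbracket 0,1\rrbracket\setminus\{0\})^{\otimes n}$; hence the two images are distinct, contradicting $d_t^1\chi(c_1) = d_s^1\chi(c_1)$. Therefore $s = t$, which finishes the pair case and, by the reduction above, the lemma.
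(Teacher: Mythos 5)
Your proof is correct and follows essentially the same route as the paper's: reduce to cube paths of length at most two, dispose of the cases where $\gamma$ gives a length-zero path, and use weak regularity to show that the least index $r$ with $c_2 = d_r^1 c_1$ in $Q$ agrees with the least index $t$ with $\chi(c_2) = d_t^1\chi(c_1)$ in $P$. You even spell out the one step the paper leaves implicit, namely that the distinct faces $d_s^1\iota_n \neq d_t^1\iota_n$ lie in $(\llbracket 0,1\rrbracket\setminus\{0\})^{\otimes n}$, so injectivity of $\chi(c_1)_\sharp$ there is exactly what forces uniqueness of that index.
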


\begin{proof}
Let $c = (c_1, \dots, c_m)$ be a cube path in $Q$. Recall that $\chi (c)$ is the cube path $(\chi(c_1), \dots, \chi(c_m))$ in $P$. In order to proof the lemma, it is enough to consider the case $m \leq 2$. If either $m = 1$ or $m=2$ and $\deg(c_1) \leq \deg(c_2)$, then $\gamma(c)$ is the path in $Q$ of length $0$ defined by $\gamma(c)(0) = c_{1\sharp}(0,\dots,0)$. Hence $\chi \circ \gamma(c)$ is the path in $P$ of length $0$ given by $(\chi \circ \gamma(c))(0) = \chi(c_{1\sharp}(0,\dots,0)) = \chi(c_1)_{\sharp}(0,\dots,0)$. Thus, $\chi \circ \gamma(c) = \gamma (\chi(c))$. Suppose that $m= 2$ and  $\deg(c_1) > \deg(c_2)$. Consider the least $r \in \{1, \dots, \deg(c_1)\}$ such that $c_2 = d_r^1c_1$. We have $\gamma (c) = (\hat d_rc_1)_{\sharp}$ and hence $\chi \circ \gamma (c) = \chi \circ (\hat d_rc_1)_{\sharp} = \chi (\hat d_rc_1)_{\sharp}$. Moreover, $\chi(c_2) = \chi(d_r^1c_1) = d_r^1(\chi(c_1))$. Since $P$ is weakly regular, this is the only such $r$ and $\gamma (\chi(c)) = (\hat d_r\chi(c_1))_{\sharp} = \chi (\hat d_rc_1)_{\sharp} = \chi \circ \gamma (c)$.
\end{proof}

\begin{prop} \label{gammac}
Let $P$ be a weakly regular precubical set and $f\colon |P| \to |P'|$ be a weak morphism that is a homeomorphism. Then for any two paths $\omega, \nu \in P'^{\mathbb I}$, $\omega \sim \nu \Rightarrow \gamma \circ c(\omega) \sim \gamma \circ c(\nu)$.
\end{prop}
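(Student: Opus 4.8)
The plan is to reduce, as in the proof of \ref{fadj}, to a single elementary dihomotopy and then to transport the question into a subdivided cube, where dihomotopy of paths with fixed endpoints is automatic by \ref{detcube}. Since $\sim$ is generated by elementary dihomotopy, it suffices to treat $\omega = \alpha \cdot (d_1^0z)_{\sharp} \cdot (d_2^1z)_{\sharp} \cdot \beta$ and $\nu = \alpha \cdot (d_2^0z)_{\sharp} \cdot (d_1^1z)_{\sharp} \cdot \beta$ for some $z \in P'_2$ and paths $\alpha,\beta$. Using that carrier sequences and $\gamma$ are both compatible with concatenation (the remarks following the definitions of carrier sequence and of $\gamma$), together with the fact that $\sim$ is a congruence, I would factor $\gamma\circ c(\omega)$ and $\gamma\circ c(\nu)$ through the common pieces $\gamma(c(\alpha))$ and $\gamma(c(\beta))$, thereby reducing the whole statement to proving $\gamma\circ c(\sigma_1) \sim \gamma\circ c(\sigma_2)$, where $\sigma_1 = (d_1^0z)_{\sharp} \cdot (d_2^1z)_{\sharp}$ and $\sigma_2 = (d_2^0z)_{\sharp} \cdot (d_1^1z)_{\sharp}$ are the two boundary paths of the single $2$-cube $z$.

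To handle this core case I would pass to the carrier $w = c(z) \in P$ of $z$, put $n = \deg(w)$, and apply the weak morphism property to $w_{\sharp}\colon \llbracket 0,1\rrbracket^{\otimes n} \to P$. By \ref{unicity} this yields integers $k_1, \dots, k_n \geq 1$, a morphism $\chi'\colon K := \llbracket 0,{k_1}\rrbracket \otimes \cdots \otimes \llbracket 0,{k_n}\rrbracket \to P'$ and a dihomeomorphism $\phi\colon |\llbracket 0,1\rrbracket^{\otimes n}| \to |K|$ with $f\circ |w_{\sharp}| = |\chi'|\circ \phi$, and moreover $\phi$ is itself a weak morphism. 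Let $u \in ]0,1[^n$ be the point defining the carrier, so that $|\chi'|(\phi(u)) = f([w,u]) = [z,(\tfrac12,\tfrac12)]$; since this lies in the open $2$-cell of $z$, the point $\phi(u)$ lies in an open $2$-cell of $K$, whose $2$-cube $\tilde z$ satisfies $\chi'(\tilde z) = z$. Its boundary paths $\tilde\sigma_1,\tilde\sigma_2$ in $K$ then satisfy $\chi'(\tilde\sigma_1) = \sigma_1$ and $\chi'(\tilde\sigma_2) = \sigma_2$, because $\chi'$ commutes with face operators, with $(\cdot)_{\sharp}$, and with concatenation.

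The key computation uses the naturality of carriers. Applying \ref{cnat} to the square $f\circ |w_{\sharp}| = |\chi'|\circ \phi$ (with $f$ and $\phi$ both homeomorphisms) gives $c(\chi'(a)) = w_{\sharp}(c_\phi(a))$ for every $a \in K$, where $c_\phi$ denotes the carrier with respect to $\phi$. Applying this edge by edge and vertex by vertex to $\sigma_i = \chi'(\tilde\sigma_i)$, I obtain the equality of cube paths $c(\sigma_i) = w_{\sharp}(c_\phi(\tilde\sigma_i))$. Since $P$ is weakly regular, \ref{weaklyreg} then gives $\gamma(c(\sigma_i)) = \gamma(w_{\sharp}(c_\phi(\tilde\sigma_i))) = w_{\sharp}\circ \gamma(c_\phi(\tilde\sigma_i))$. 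Hence it suffices to prove $\gamma(c_\phi(\tilde\sigma_1)) \sim \gamma(c_\phi(\tilde\sigma_2))$ in $\llbracket 0,1\rrbracket^{\otimes n}$ and then push forward along the precubical morphism $w_{\sharp}$, which preserves dihomotopy.

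Finally, $\tilde\sigma_1$ and $\tilde\sigma_2$ have the same endpoints (the two extreme corners of $\tilde z$), so by the remark on the endpoints of $\gamma\circ c$ the paths $\gamma(c_\phi(\tilde\sigma_1))$ and $\gamma(c_\phi(\tilde\sigma_2))$ have equal endpoints in $\llbracket 0,1\rrbracket^{\otimes n}$; as $\llbracket 0,1\rrbracket^{\otimes n}$ is a subdivided cube, \ref{detcube} makes them dihomotopic, which finishes the argument. I expect the main obstacle to be precisely this middle maneuver: recognising that one should take the carrier $w=c(z)$, invoke \ref{unicity} on $w_{\sharp}$, and then feed the resulting dihomeomorphism $\phi$ into \ref{cnat} so that the entire carrier-sequence computation is relocated to the standard cube, where \ref{detcube} makes dihomotopy free. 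The delicate bookkeeping will be verifying $c(\sigma_i) = w_{\sharp}(c_\phi(\tilde\sigma_i))$ as cube paths, keeping track of the vertices inserted between consecutive edges in the definition of the carrier sequence.
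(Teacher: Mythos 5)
Your proof is correct and follows essentially the same route as the paper's own argument: reduce to the two boundary paths of a single $2$-cube $z \in P'_2$, take its carrier $x = c(z)$, use the weak-morphism square $f \circ |x_\sharp| = |\beta|\circ\phi$ to lift $z$ to a $2$-cube of the subdivided cube, identify the carrier sequences componentwise via \ref{cnat}, commute $\gamma$ past $x_\sharp$ using weak regularity (\ref{weaklyreg}), and conclude with \ref{detcube}. The only differences are cosmetic (you apply \ref{cnat} before rather than after the dihomotopy argument in the standard cube), and your explicit appeals to \ref{unicity} for $\phi$ being a weak morphism and to the concatenation/congruence remarks for the reduction step just spell out points the paper leaves implicit.
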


\begin{proof}
We may suppose that $\omega$ and $\nu$ are elementarily dihomotopic paths of length $2$. We may further suppose that there exists an element $y \in P'_2$ such that $\omega = (d_1^0y)_{\sharp}\cdot (d_2^1y)_{\sharp}$ and $\nu = (d_2^0y)_{\sharp}\cdot (d_1^1y)_{\sharp}$. Set $x = c(y)$ and $n = \deg(x)$. Since $f$ is a weak morphism, there exist a morphism of precubical sets $\beta\colon \llbracket 0,{k_1} \rrbracket \otimes \cdots \otimes \llbracket 0,{k_n} \rrbracket \to P'$ and a dihomeomorphism $$\phi\colon |\llbracket 0,1 \rrbracket^{\otimes n}| = [0,1]^n \to |\llbracket 0,{k_1} \rrbracket \otimes \cdots \otimes \llbracket 0,{k_n} \rrbracket| = [0,k_1]\times \cdots \times [0,k_n]$$ such that $f\circ |x_{\sharp}| = |\beta| \circ \phi$. Since $$[y,(\tfrac{1}{2},\tfrac{1}{2})] \in f(|x_{\sharp}|([0,1]^n)) = |\beta (\llbracket 0,{k_1} \rrbracket\otimes \cdots \otimes \llbracket 0,{k_n} \rrbracket)|,$$ we have $y \in  \beta (\llbracket 0,{k_1} \rrbracket\otimes \cdots \otimes \llbracket 0,{k_n} \rrbracket)$. Consider $z\in \llbracket 0,{k_1} \rrbracket\otimes \cdots \otimes \llbracket 0,{k_n} \rrbracket$ such that $\beta (z) = y$. Consider the paths $\xi = (d_1^0z)_{\sharp}\cdot (d_2^1z)_{\sharp}$ and $\theta = (d_2^0z)_{\sharp}\cdot (d_1^1z)_{\sharp}$. Since $\xi$ and $\theta$ have the same end points, so do $\gamma (c(\xi))$ and $\gamma (c(\theta))$. Since any two paths in $\llbracket 0,1 \rrbracket^{\otimes n}$ with the same end points are dihomotopic (see \ref{detcube}), we have $\gamma (c(\xi))\sim \gamma (c(\theta))$ and hence $x_{\sharp} \circ \gamma (c(\xi))\sim x_{\sharp} \circ \gamma (c(\theta))$. Since $P$ is weakly regular, lemma \ref{weaklyreg} implies that $\gamma (x_{\sharp}(c(\xi)))\sim \gamma (x_{\sharp}(c(\theta)))$. By lemma \ref{cnat}, we have \begin{eqnarray*}
x_{\sharp}(c(\xi)) &=& x_{\sharp}(c((d_1^0z)_{\sharp}\cdot (d_2^1z)_{\sharp}))\\ &=& x_{\sharp}(c(d_1^0d_1^0z),c(d_1^0z),c(d_1^1d_1^0z),c(d_2^1z),c(d_1^1d_2^1z))\\ &=& (x_{\sharp}(c(d_1^0d_1^0z)),x_{\sharp}(c(d_1^0z)),x_{\sharp}(c(d_1^1d_1^0z)),x_{\sharp}(c(d_2^1z)),x_{\sharp}(c(d_1^1d_2^1z)))\\ &=& (c(\beta(d_1^0d_1^0z)),c(\beta(d_1^0z)),c(\beta(d_1^1d_1^0z)),c(\beta(d_2^1z)),c(\beta(d_1^1d_2^1z)))\\ &=&  (c(d_1^0d_1^0y),c(d_1^0y),c(d_1^1d_1^0y),c(d_2^1y),c(d_1^1d_2^1y))\\ &=& c(\omega).\end{eqnarray*}
Similarly, $x_{\sharp}(c(\theta)) = c(\nu)$. Thus, $\gamma (c(\omega)) \sim \gamma (c(\nu))$.
\end{proof}

\subsection{The composite $f^{\mathbb I} \circ \gamma \circ c$} 
Let $f\colon |P| \to |P'|$ be a weak morphism of precubical sets that is a homeomorphism. 
We have seen in \ref{retraction} that the composite $\gamma \circ c \circ f^{\mathbb I}$ is the identity. Here, we show that ``up to dihomotopy'', the  composite $f^{\mathbb I} \circ \gamma \circ c$ is the identity for paths with end points in $f_0(P_0)$. We need three lemmas.

\begin{lem} \label{bbu}
Consider elements $b \in \llbracket 0,{k_1} \rrbracket\otimes \cdots \otimes \llbracket 0,{k_m} \rrbracket$ $(m, k_1, \dots, k_m \geq 1)$ and $u \in ]0,1[^ {\deg(b)}$. Then $b \in  (\llbracket 0,{k_1} \rrbracket\setminus \{k_1\}) \otimes \cdots \otimes (\llbracket 0,{k_m} \rrbracket\setminus \{k_m\})$ if and only if $[b,u] \in  [0,k_1[ \times \cdots \times [0,k_m[$.
\end{lem}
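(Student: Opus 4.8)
The plan is to reduce the statement to a coordinatewise verification using the explicit identification of $|\llbracket 0,k_1\rrbracket \otimes \cdots \otimes \llbracket 0,k_m\rrbracket|$ with the product $[0,k_1]\times \cdots \times [0,k_m]$ recorded in section \ref{geomreal}. The whole argument is local in each tensor factor, so no induction or global topology is needed.

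First I would write $b=(b_1,\dots,b_m)$ with $b_j \in \llbracket 0,k_j\rrbracket$ for each $j$, where each $b_j$ is either a vertex $i_j \in \{0,\dots,k_j\}$ or an edge $[i_j,i_j+1]$ with $i_j \in \{0,\dots,k_j-1\}$; the degree of $b$ is the number of indices $j$ at which $b_j$ is an edge. By the definition of the tensor product of graded sets, the condition $b \in (\llbracket 0,k_1\rrbracket\setminus\{k_1\})\otimes \cdots \otimes (\llbracket 0,k_m\rrbracket\setminus\{k_m\})$ is equivalent to the conjunction over all $j$ of the conditions $b_j \neq k_j$. Here one must parse $\llbracket 0,k_j\rrbracket\setminus\{k_j\}$ as a graded subset, namely the one containing all edges of $\llbracket 0,k_j\rrbracket$ together with the vertices $0,\dots,k_j-1$; in particular every edge automatically satisfies $b_j\neq k_j$, so the condition only constrains the vertex factors.

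Next I would describe the $j$-th coordinate of the point $[b,u]\in [0,k_1]\times\cdots\times[0,k_m]$. Under the identification of section \ref{geomreal}, if $b_j=i_j$ is a vertex then this coordinate equals $i_j$, while if $b_j=[i_j,i_j+1]$ is the $p$-th edge factor then this coordinate equals $i_j+u_p$ for the corresponding $u_p\in\,]0,1[$ (the comonoidal map $\psi$ distributes the $n=\deg(b)$ entries of $u$ among the $n$ edge factors). I would then check factor by factor that the $j$-th coordinate is $<k_j$ if and only if $b_j\neq k_j$: for a vertex $b_j=i_j$ the coordinate $i_j$ is $<k_j$ precisely when $i_j\in\{0,\dots,k_j-1\}$, i.e.\ $b_j\neq k_j$; for an edge $b_j=[i_j,i_j+1]$ one has $i_j\le k_j-1$, hence $i_j+u_p<i_j+1\le k_j$, so the coordinate is always $<k_j$, consistent with the fact that an edge is never the vertex $k_j$.

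Finally I would assemble these equivalences: $[b,u]\in[0,k_1[\,\times\cdots\times[0,k_m[$ holds iff every coordinate of $[b,u]$ is $<k_j$, iff $b_j\neq k_j$ for every $j$, iff $b$ lies in the stated tensor product of graded subsets. There is no genuine obstacle in this lemma; it is a direct unwinding of definitions. The only points requiring care are the correct reading of the graded subset $\llbracket 0,k_j\rrbracket\setminus\{k_j\}$ and keeping track of which coordinate of $u$ is attached to which edge factor under the identification of the realisation of a tensor product with a product of cubes.
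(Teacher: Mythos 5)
Your proof is correct and follows essentially the same route as the paper: write $b=(b_1,\dots,b_m)$, identify the coordinates of $[b,u]$ in $[0,k_1]\times\cdots\times[0,k_m]$ (the vertex factors contribute their values, the edge factors $[i_j,i_j+1]$ contribute $i_j+u_p$), and compare coordinatewise with the condition $b_j\neq k_j$. The paper's proof is terser (it ends with ``the result follows'' after computing the coordinates), whereas you spell out the coordinatewise equivalence and the reading of the graded subsets $\llbracket 0,k_j\rrbracket\setminus\{k_j\}$ explicitly, which is a fair elaboration rather than a different argument.
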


\begin{proof}
Write $b = (b_1, \dots, b_m)$ and suppose $\deg(b) = p$. Then there exist indices $1 \leq i_1 < \dots < i_p \leq m$ such that $\deg(b_{i_q}) = 1$ for $q \in \{1, \dots ,p\}$ and $b_i \in \{0, \dots, k_i\}$ for $i \notin \{i_1, \dots ,i_p\}$. For each $q \in \{1, \dots ,p\}$ there exists an element $j_q \in \{0,\dots, k_q-1\}$ such that $b_{i_q} = [j_q,j_q+1]$.  We have $[b,u] = [b,(u_1,\dots,u_p)] = (t_1, \dots, t_m) \in [0,k_1] \times \cdots \times [0,k_m]$ where $$t_i = \left\{ \begin{array}{ll} b_i,& i \notin \{i_1, \dots ,i_p\},\\ j_q+u_q, & q \in \{1, \dots ,p\}.\end{array}\right.$$
The result follows.
\end{proof}

\begin{lem} \label{bcb}
Let $\phi \colon |\llbracket 0,{l_1} \rrbracket\otimes \cdots \otimes \llbracket 0,{l_m} \rrbracket| \to |\llbracket 0,{k_1} \rrbracket\otimes \cdots \otimes \llbracket 0,{k_m} \rrbracket|$ $(k_i,l_i \geq 1)$ be a weak morphism of precubical sets that is a dihomeomorphism. Then for any element $b \in \llbracket 0,{k_1} \rrbracket\otimes \cdots \otimes \llbracket 0,{k_m} \rrbracket  $, $b \in  (\llbracket 0,{k_1} \rrbracket\setminus \{k_1\}) \otimes \cdots \otimes (\llbracket 0,{k_m} \rrbracket\setminus \{k_m\})$ if and only if $c(b) \in  (\llbracket 0,{l_1} \rrbracket\setminus \{l_1\}) \otimes \cdots \otimes (\llbracket 0,{l_m} \rrbracket\setminus \{l_m\})$.
\end{lem}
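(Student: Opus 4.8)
The plan is to reduce both sides of the claimed equivalence to a statement about the number of maximal coordinates of a point in a hyperrectangle, and then to invoke Proposition \ref{minmaxcoord}. First I would unwind the definition of the carrier: there is an element $v \in ]0,1[^{\deg(c(b))}$ with $\phi([c(b),v]) = [b,(\tfrac12,\dots,\tfrac12)]$, where I identify $|\llbracket 0,{l_1} \rrbracket\otimes \cdots \otimes \llbracket 0,{l_m} \rrbracket|$ with $[0,l_1]\times \cdots \times [0,l_m]$ and $|\llbracket 0,{k_1} \rrbracket\otimes \cdots \otimes \llbracket 0,{k_m} \rrbracket|$ with $[0,k_1]\times \cdots \times [0,k_m]$ as in \ref{geomreal}.

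Next I would observe that the half-open box $[0,k_1[ \times \cdots \times [0,k_m[$ is precisely the set of points of $[0,k_1]\times \cdots \times [0,k_m]$ having no maximal coordinate (a coordinate being maximal meaning that it attains the value $k_i$), and similarly for the source box. Applying Lemma \ref{bbu} with $u = (\tfrac12,\dots,\tfrac12) \in ]0,1[^{\deg(b)}$ then translates the condition $b \in (\llbracket 0,{k_1} \rrbracket\setminus \{k_1\}) \otimes \cdots \otimes (\llbracket 0,{k_m} \rrbracket\setminus \{k_m\})$ into $[b,(\tfrac12,\dots,\tfrac12)] \in [0,k_1[ \times \cdots \times [0,k_m[$, i.e. into the statement that $[b,(\tfrac12,\dots,\tfrac12)]$ has no maximal coordinate. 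Applying Lemma \ref{bbu} once more, now with the element $v$, turns the condition $c(b) \in (\llbracket 0,{l_1} \rrbracket\setminus \{l_1\}) \otimes \cdots \otimes (\llbracket 0,{l_m} \rrbracket\setminus \{l_m\})$ into the statement that $[c(b),v]$ has no maximal coordinate.

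Finally, since $\phi$ is a dihomeomorphism and $\phi([c(b),v]) = [b,(\tfrac12,\dots,\tfrac12)]$, Proposition \ref{minmaxcoord} guarantees that $[c(b),v]$ and $[b,(\tfrac12,\dots,\tfrac12)]$ have the same number of maximal coordinates; in particular, one of them has no maximal coordinate if and only if the other does. Chaining the three equivalences yields the lemma. I expect the only point requiring genuine care to be the bookkeeping in the middle step, namely verifying that membership in the half-open box corresponds exactly to the absence of a maximal coordinate, and that this orientation (maximal rather than minimal) is the one matching the faces $\{k_i\}$ and $\{l_i\}$ deleted on each side; once this correspondence is set up correctly, Proposition \ref{minmaxcoord} closes the argument with no further computation.
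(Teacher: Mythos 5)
Your proof is correct and follows essentially the same route as the paper's: the paper likewise takes the element $u$ from the definition of the carrier, applies Proposition \ref{minmaxcoord} to the dihomeomorphism $\phi$ to relate membership of $[b,(\tfrac12,\dots,\tfrac12)]$ and $[c(b),u]$ in the respective half-open boxes, and then concludes with Lemma \ref{bbu}. Your version merely spells out the intermediate observation (membership in the half-open box equals absence of maximal coordinates) that the paper leaves implicit.
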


\begin{proof}
Consider an element $b \in \llbracket 0,{k_1} \rrbracket\otimes \cdots \otimes \llbracket 0,{k_m} \rrbracket  $, and let $u \in ]0,1[^{\deg(c(b))}$ be the element such that  $\phi([c(b),u]) = [b,(\frac{1}{2}, \dots, \frac{1}{2})]$. By proposition \ref{minmaxcoord}, $[b,(\frac{1}{2}, \dots, \frac{1}{2})] \in [0,k_1[ \times \cdots \times [0,k_m[$ if and only if $[c(b),u] \in [0,l_1[ \times \cdots \times [0,l_m[$. The preceding lemma implies the result.
\end{proof}

\begin{lem} \label{betainj}
Suppose that $P$ is weakly regular. Let $x \in P_n$ $(n \geq 1)$ be an element, $\beta\colon \llbracket 0,{k_1} \rrbracket\otimes \cdots \otimes \llbracket 0,{k_n} \rrbracket \to P'$ $(k_1, \dots, k_n \geq 1)$ be a morphism of precubical sets and $\phi\colon |\llbracket 0,1 \rrbracket^{\otimes n}| \to |\llbracket 0,{k_1} \rrbracket\otimes \cdots \otimes \llbracket 0,{k_n} \rrbracket|$ be a dihomeomorphism such that $f\circ |x_{\sharp}| = |\beta| \circ \phi$. Then the restriction of $\beta$ to  $(\llbracket 0,{k_1} \rrbracket\setminus \{k_1\})\otimes \cdots \otimes (\llbracket 0,{k_n} \rrbracket\setminus \{k_n\})$ is injective.
\end{lem}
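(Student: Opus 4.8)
The plan is to reduce the combinatorial injectivity claim for $\beta$ to a topological injectivity statement for $|x_{\sharp}|$ on the half-open cube $[0,1[^{\,n}$, transported back across the dihomeomorphism $\phi$ by means of the identity $f\circ |x_{\sharp}| = |\beta|\circ \phi$, i.e. $|\beta| = f\circ |x_{\sharp}|\circ \phi^{-1}$.

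First I would turn the hypothesis that $P$ (hence $x$) is weakly regular into the assertion that $|x_{\sharp}|$ is injective on $[0,1[^{\,n} \subseteq |\llbracket 0,1 \rrbracket^{\otimes n}| = [0,1]^n$. By weak regularity, the discrete map $x_{\sharp}$ is injective on the graded subset $(\llbracket 0,1 \rrbracket\setminus\{1\})^{\otimes n}$. By lemma \ref{bbu} with all $k_i = 1$, a point $[z,u]$ with $u\in\,]0,1[^{\deg(z)}$ lies in $[0,1[^{\,n}$ precisely when $z\in(\llbracket 0,1 \rrbracket\setminus\{1\})^{\otimes n}$. Combining this with the uniqueness of the representation of a point of a geometric realisation, an equality $|x_{\sharp}|([z,u]) = |x_{\sharp}|([z',u'])$ yields $[x_{\sharp}(z),u] = [x_{\sharp}(z'),u']$, hence $x_{\sharp}(z) = x_{\sharp}(z')$ and $u = u'$; weak regularity then forces $z = z'$, so $|x_{\sharp}|$ is injective on $[0,1[^{\,n}$.

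Next I would record that $\phi$ restricts to a bijection $[0,1[^{\,n} \to [0,k_1[\times\cdots\times[0,k_n[$. This is immediate from proposition \ref{minmaxcoord}: a point of either hyperrectangle lies in the corresponding half-open cube exactly when it has no maximal coordinate, and $\phi$ preserves the number of maximal coordinates. With these two facts the conclusion is formal. Given $a,b$ in $(\llbracket 0,{k_1} \rrbracket\setminus\{k_1\})\otimes\cdots\otimes(\llbracket 0,{k_n} \rrbracket\setminus\{k_n\})$ with $\beta(a) = \beta(b)$, we have $\deg(a) = \deg(b) =: p$ since $\beta$ is graded. Fixing $u\in\,]0,1[^{\,p}$, lemma \ref{bbu} places $[a,u]$ and $[b,u]$ in $[0,k_1[\times\cdots\times[0,k_n[$, and $|\beta|([a,u]) = [\beta(a),u] = [\beta(b),u] = |\beta|([b,u])$. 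Rewriting $|\beta| = f\circ |x_{\sharp}|\circ\phi^{-1}$ and using that $f$ is injective gives $|x_{\sharp}|(\phi^{-1}[a,u]) = |x_{\sharp}|(\phi^{-1}[b,u])$. By the previous paragraph the arguments $\phi^{-1}[a,u]$ and $\phi^{-1}[b,u]$ both lie in $[0,1[^{\,n}$, so injectivity of $|x_{\sharp}|$ there forces $\phi^{-1}[a,u] = \phi^{-1}[b,u]$; as $\phi$ is a bijection, $[a,u] = [b,u]$, and uniqueness of representation gives $a = b$.

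The main obstacle is the first step. Weak regularity is phrased as injectivity of the discrete map $x_{\sharp}$ on a graded \emph{subset} (which is not a precubical subset), so one must carefully match the open cells indexed by that graded subset with the points of the half-open cube $[0,1[^{\,n}$ — this is exactly what lemma \ref{bbu} together with the uniqueness of cell representation accomplishes — in order to upgrade the combinatorial injectivity to topological injectivity of $|x_{\sharp}|$. Once this bridge is in place, the remainder is a direct manipulation of $f\circ |x_{\sharp}| = |\beta|\circ\phi$ together with the order-theoretic fact (proposition \ref{minmaxcoord}) that a dihomeomorphism respects half-open cubes.
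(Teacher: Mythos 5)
Your proof is correct and follows essentially the same route as the paper's: both arguments transport the problem across $\phi$ using proposition \ref{minmaxcoord} and lemma \ref{bbu}, and then combine weak regularity with uniqueness of canonical representations and injectivity of $f$. The only difference is organisational: the paper runs the same computation through the carrier map $c$ taken with respect to $\phi$ (via lemma \ref{bcb}, whose proof is exactly your \ref{bbu}-plus-\ref{minmaxcoord} step) and argues by contrapositive with a case analysis on whether $c(a) = c(b)$, whereas you isolate the injectivity of $|x_{\sharp}|$ on $[0,1[^{\,n}$ as an explicit intermediate claim and conclude directly.
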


\begin{proof}
Let $a,b \in (\llbracket 0,{k_1} \rrbracket\setminus \{k_1\})\otimes \cdots \otimes (\llbracket 0,{k_n} \rrbracket\setminus \{k_n\})$ be distinct elements of the same degree. Then $[a,(\tfrac{1}{2}, \dots, \tfrac{1}{2})]\not=  [b,(\tfrac{1}{2}, \dots, \tfrac{1}{2})] \in [0,k_1[\times \cdots \times [0,k_n[$ and therefore, by \ref{minmaxcoord}, 
$\phi^{-1}([a,(\tfrac{1}{2}, \dots, \tfrac{1}{2})])\not= \phi^ {-1}([b,(\tfrac{1}{2}, \dots, \tfrac{1}{2})]) \in [0,1[^n.$ Let $u \in ]0,1[^{\deg(c(a))}$ and $v \in ]0,1[^{\deg(c(b))}$ be the uniquely determined elements such that $\phi ([c(a),u]) = [a,(\tfrac{1}{2}, \dots, \tfrac{1}{2})]$ and $\phi ([c(b),v]) = [b,(\tfrac{1}{2}, \dots, \tfrac{1}{2})]$. 
By \ref{bcb}, $c(a), c(b) \in (\llbracket 0,1 \rrbracket\setminus\{1\})^ {\otimes n}$. If $c(a) = c(b)$, then $u \not=v$ and hence $|x_{\sharp}|([c(a),u])\not= |x_{\sharp}|([c(b),v])$. If $c(a) \not= c(b)$, then this holds because $x$ is weakly regular. Since $f$ is injective, it follows that $|\beta|([a,(\tfrac{1}{2}, \dots, \tfrac{1}{2})])\not= |\beta|([b,(\tfrac{1}{2}, \dots, \tfrac{1}{2})])$ and hence that $\beta(a) \not= \beta(b)$.
\end{proof}

\begin{prop} \label{fJgammac}
Suppose that $P$ is weakly regular. Let $v,w \in P_0$ be vertices and $\omega$ be a path in $P'$ from $f_0(v)$ to $f_0(w)$. Then $f^{\mathbb I}(\gamma \circ c(\omega)) \sim \omega$. 
\end{prop}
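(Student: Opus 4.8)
The plan is to reduce to the case where $\omega$ lies in the image of a single cube of $P$, to settle that case through the directed contractibility of a subdivided cube, and to glue the pieces by an induction in which mismatched crossing points are corrected by auxiliary paths running along shared faces. Throughout I will use that $c$, $\gamma$ and $f^{\mathbb{I}}$ respect concatenation (the remarks on carrier sequences and on $\gamma$, together with \ref{fJconcat}) and that $\sim$ is a congruence (Remark (iii) on dihomotopy), so that dihomotopies constructed on consecutive subpaths assemble into a dihomotopy of $\omega$.

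The heart of the argument is the single-cube case: if $x\in P_n$, then any two directed paths lying in $f(|x_{\sharp}(\llbracket 0,1\rrbracket^{\otimes n})|)$ with the same end points are dihomotopic in $P'$. To prove this I would take the weak-morphism data $\beta\colon\llbracket 0,{k_1}\rrbracket\otimes\cdots\otimes\llbracket 0,{k_n}\rrbracket\to P'$ and the dihomeomorphism $\phi$ of \ref{betainj} with $f\circ|x_{\sharp}|=|\beta|\circ\phi$, so that $f(|x_{\sharp}(\llbracket 0,1\rrbracket^{\otimes n})|)$ is the image under $|\beta|$ of a subdivided cube. By \ref{detcube} any two directed paths with equal end points in $\llbracket 0,{k_1}\rrbracket\otimes\cdots\otimes\llbracket 0,{k_n}\rrbracket$ are dihomotopic; Lemma \ref{betainj} (whose hypothesis is exactly the weak regularity of $P$) shows that $\beta$ is injective on the sub-maximal part, so the $2$-cubes witnessing such a dihomotopy are carried injectively into $P'$ and the dihomotopy descends to $P'$. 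Lemmas \ref{bbu} and \ref{bcb} match cells and their carriers relative to the maximal faces, while \ref{cnat} and \ref{weaklyreg} identify $\gamma\circ c$ computed in $P'$ with $\beta$ applied to $\gamma\circ c$ computed in the abstract subdivided cube, so that $f^{\mathbb{I}}(\gamma\circ c(\omega))$ is recognised as one of the directed paths to which \ref{detcube} applies.

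Next I would reduce the general statement to the single-cube case by induction on the number $p$ of maximal cubes that $\omega$ traverses. Lemma \ref{cdb} shows that along the carrier sequence the dimension changes only by passing to a face, so $\omega$ decomposes canonically as $\omega=\omega_1\cdot\omega_2$, where $\omega_1$ is its maximal initial part lying in one cube image $f(|x_{1\sharp}|)$ and $\omega_2$ begins at the crossing point $b$ into the next cube. The base case $p\le 1$ is the single-cube case, the case of length $0$ being trivial. For the inductive step I would write $u$ for the initial vertex of the shared face $F$ carrying $b$, let $\delta$ be a directed path in $f(|F_{\sharp}|)$ from $f_0(u)$ to $b$, and note by \ref{fomega} that $f^{\mathbb{I}}(\gamma\circ c(\omega_1))$ ends at $f_0(u)$; the single-cube case applied to the two paths $\omega_1$ and $f^{\mathbb{I}}(\gamma\circ c(\omega_1))\cdot\delta$ (both from $f_0(v)$ to $b$ inside $f(|x_{1\sharp}|)$) then gives $\omega_1\sim f^{\mathbb{I}}(\gamma\circ c(\omega_1))\cdot\delta$. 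Since $\delta\cdot\omega_2$ begins at the vertex image $f_0(u)$ and traverses only $p-1$ cubes, the induction hypothesis yields $\delta\cdot\omega_2\sim f^{\mathbb{I}}(\gamma\circ c(\delta))\cdot f^{\mathbb{I}}(\gamma\circ c(\omega_2))$. Finally $\gamma\circ c(\delta)$ is a directed path in $F$ from $u$ to $u$, hence constant, so its contribution disappears and concatenating the two dihomotopies gives $\omega\sim f^{\mathbb{I}}(\gamma\circ c(\omega_1))\cdot f^{\mathbb{I}}(\gamma\circ c(\omega_2))=f^{\mathbb{I}}(\gamma\circ c(\omega))$.

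The step I expect to be the main obstacle is precisely this gluing. The difficulty is that $\gamma\circ c$ snaps the point $b$ at which $\omega$ crosses from one cube to the next — an interior point of the shared face image — to the initial vertex $f_0(u)$ of that face, so $\omega_1$ and $f^{\mathbb{I}}(\gamma\circ c(\omega_1))$ genuinely have different end points and cannot be compared by an endpoint-fixing dihomotopy; the correction path $\delta$ is what absorbs this discrepancy, and the point making it harmless is that its canonical image $\gamma\circ c(\delta)$ collapses to a constant, since a directed path from a vertex to itself inside a cube is constant. Verifying that the decomposition $\omega=\omega_1\cdot\omega_2$ is well defined and compatible with carrier sequences, and that $\delta$ can be chosen so that $\delta\cdot\omega_2$ really has one fewer cube while still starting at a vertex image, is the technical core of the argument; it is here that weak regularity (through \ref{betainj}) and the directed contractibility of a single cube (\ref{detcube}) are indispensable.
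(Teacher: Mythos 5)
Your overall strategy (cube-by-cube analysis via the weak-morphism data $\beta,\phi$, the dihomotopy statement \ref{detcube} for subdivided cubes, injectivity from \ref{betainj}, and an induction that glues pieces) uses the same toolbox as the paper, but there is a genuine gap at what you call the heart of the argument: your single-cube lemma is false as stated. Take $P$ to be the directed circle (one vertex $v$, one edge $e$), which is weakly regular, and $P'$ its subdivision into two edges $y_1\colon a\to b$, $y_2\colon b\to a$, with $f$ the subdivision homeomorphism. Then $f(|e_\sharp(\llbracket 0,1\rrbracket)|)=|P'|$, and the constant path at $a$ and the loop $y_{1\sharp}\cdot y_{2\sharp}$ are two directed paths lying in this image with the same end points; they are not dihomotopic, since dihomotopic paths have the same length. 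The reason the descent argument cannot work is exactly what weak (as opposed to genuine) regularity permits: $\beta$ is injective only on $(\llbracket 0,{k_1}\rrbracket\setminus\{k_1\})\otimes\cdots\otimes(\llbracket 0,{k_n}\rrbracket\setminus\{k_n\})$, so two paths with the same end points in $P'$ may only admit lifts with \emph{different} end points in the subdivided cube (here the constant path lifts to the constant at $0$ or at $2$, while the loop lifts to a path from $0$ to $2$), and then \ref{detcube} says nothing. Equal end points downstairs do not give equal end points upstairs. The same counterexample invalidates the justification in your gluing step: ``a directed path from a vertex to itself inside a cube is constant'' fails for weakly regular precubical sets ($e_\sharp$ is a non-constant path from $v$ to $v$ inside the cube $e$); to make $\gamma\circ c(\delta)$ constant you would have to choose $\delta$ carefully (say, a staircase path entering the interior of the subdivided face) and prove its carrier sequence has no dimension drops --- a verification you explicitly leave open, and which is where the real work lies.

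The paper's proof is organised precisely so as to avoid both problems. First, it makes a reduction you are missing: it suffices to produce \emph{any} path $\alpha\in P^{\mathbb I}$ with $f^{\mathbb I}(\alpha)\sim\omega$, because then \ref{retraction} and \ref{gammac} give $\alpha=\gamma\circ c(f^{\mathbb I}(\alpha))\sim\gamma\circ c(\omega)$, and \ref{fadj} converts this into $f^{\mathbb I}(\gamma\circ c(\omega))\sim f^{\mathbb I}(\alpha)\sim\omega$. With this reduction one never compares $f^{\mathbb I}(\gamma\circ c(\omega_1))$ with $\omega_1$ directly, so no correction path $\delta$ and no collapsing argument are needed. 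Second, the induction is on the length of $\omega$, and the segment that gets lifted is not ``the maximal initial part lying in one cube image'' but the initial part up to the first index $l$ at which the carrier dimension drops, i.e.\ $\deg(c(d_1^1y_l))<\deg(c(y_l))$; the carrier lemmas \ref{cdb}, \ref{bbu}, \ref{bcb} and \ref{cnat} show that every cell of this segment has its carrier in the sub-maximal part of the cube, where \ref{betainj} does give injectivity, so the lift $\theta$ exists, starts at $(0,\dots,0)$, and \ref{detcube} together with \ref{orderhomeo} applies legitimately before peeling off the edge $f^{\mathbb I}((\hat d_rx)_\sharp)$ and invoking the inductive hypothesis. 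If you want to salvage your version, the single-cube lemma must be reformulated as a lifting-with-controlled-end-points statement; under mere weak regularity it cannot be phrased purely in terms of end points in $P'$.
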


\begin{proof}
It is enough to show that there exists a path $\alpha \in P^{\mathbb I}$  such that  $f^{\mathbb I}(\alpha) \sim \omega$. Indeed, given such a path $\alpha$, we have, by \ref{retraction} and \ref{gammac}, $\alpha = \gamma\circ c (f^{\mathbb I}(\alpha)) \sim \gamma \circ c(\omega)$ and hence, by \ref{fadj}, $f^{\mathbb I}(\gamma \circ c(\omega)) \sim f^{\mathbb I}(\alpha) \sim \omega$.

In order to construct $\alpha$, we proceed by induction on the length of $\omega$. Suppose first that $\length(\omega) = 0$. Consider the path $\alpha \colon \llbracket 0,0 \rrbracket \to P$ defined by $\alpha (0) = v$. Then $f^{\mathbb I}(\alpha) $ is the path of length $0$ given by $f^{\mathbb I}(\alpha) (0) = f_0(\alpha(0)) = f_0(v) = \omega(0)$. Hence $f^{\mathbb I}(\alpha) = \omega$. It follows that the assertion holds for paths of length $0$. 

Suppose that $\length(\omega ) = n >0$ and that the assertion holds for every path in $P'$ that has length $< n$ and end points in $f_0(P_0)$. Write $\omega = y_{1\sharp}\cdots y_{n\sharp}$. Since $d_1^1y_n = \omega (n) = f_0(w)$, we have $c(d_1^1y_n) = w$. It follows that $\deg(c(d_1^1y_n)) = 0 < 1 = \deg(c(y_n))$. Let $l$ be the lowest index such that $\deg (c(d_1^1y_l)) < \deg (c(y_l))$. Consider the path $\nu = y_{1\sharp}\cdots y_{l\sharp}$, and let $\sigma$ be the unique path in $P'$ such that $\omega = \nu\cdot \sigma$. Set $x = c(y_l)$ and $m = \deg(x)$.  We shall show that there exist an index $r \in \{1, \dots, m\}$ and a path $\xi$ in $P'$ such that  $\nu \sim f^{\mathbb I}((\hat d_rx)_{\sharp})\cdot \xi$. This permits us to terminate the induction as follows: Since $\xi\cdot \sigma$ is a path of length $< n$ from $f_0(d_1^0\cdots d_1^0d_r^1x)$ to $f_0(w)$, the inductive hypothesis implies that there exists a path $\tau \in P^{\mathbb I}$ such that $f^{\mathbb I}(\tau) \sim \xi \cdot \sigma$. Since $f_0(\tau(0)) = f^{\mathbb I}(\tau)(0) = f_0(d_1^0\cdots d_1^0d_r^1x)$ and $f$ is a homeomorphism, $\tau (0) = d_1^0\cdots d_1^0d_r^1x$ and we may set $\alpha = (\hat d_rx)_{\sharp}\cdot \tau$. We then have $f^{\mathbb I}(\alpha) = f^{\mathbb I}((\hat d_rx)_{\sharp}) \cdot f^{\mathbb I}(\tau) \sim  f^{\mathbb I}((\hat d_rx)_{\sharp}) \cdot \xi\cdot \sigma \sim \nu \cdot \sigma = \omega$.

It remains to determine $r$ and $\xi$. Let $\beta\colon \llbracket 0,{k_1} \rrbracket\otimes \cdots \otimes \llbracket 0,{k_m} \rrbracket \to P'$ be the unique morphism of precubical sets and $\phi\colon |\llbracket 0,1 \rrbracket^{\otimes m}| \to |\llbracket 0,{k_1} \rrbracket\otimes \cdots \otimes \llbracket 0,{k_m} \rrbracket|$ be the unique dihomeomorphism such that $f\circ |x_{\sharp}| = |\beta| \circ \phi$.
We construct a path $\theta\colon \llbracket 0,l \rrbracket \to \llbracket 0,{k_1} \rrbracket\otimes \cdots \otimes \llbracket 0,{k_m} \rrbracket$ such that $\theta (0) = (0, \dots,0)$ and $|\beta|^{\mathbb I}(\theta) = \beta \circ \theta =  \nu$. For 
$i \in \{1,\dots, l\}$, by \ref{cdb}, we have either $c(d_1^0y_i) = c(y_i)$ or $c(d_1^0y_i) = d_q^0c(y_i)$ for some $q \in \{1, \dots, \deg(c(y_i))\}$. For $i < l$, by the definition of $l$, $c(y_i) = c(d_1^1y_i) = c(d_1^0y_{i+1})$. Hence for $i<l$, $c(y_i) = c(y_{i+1})$ or $c(y_i) = d_q^0c(y_{i+1})$ for some $q \in \{1, \dots, \deg(c(y_{i+1}))\}$. It follows that 
$c(y_i) \in x_{\sharp}((\llbracket 0,1 \rrbracket\setminus \{1\})^{\otimes m})$ for all $i \in \{1,\dots,l\}$. For $i \in \{1,\dots,l\}$ let $u_i \in ]0,1[^{\deg(c(y_i))}$ be the unique element such that $[y_i,\frac{1}{2}] = f([c(y_i),u_i])$. Since $c(y_i)\in  x_{\sharp}((\llbracket 0,1 \rrbracket\setminus \{1\})^{\otimes m})$, there exists an element $v_i \in [0,1[^m$ such that $[c(y_i),u_i] = [x,v_i]$ and hence $[y_i,\frac{1}{2}] = f([x,v_i]) = f\circ |x_{\sharp}|(v_i) = |\beta|\circ \phi (v_i)$. By \ref{minmaxcoord}, $\phi(v_i) \in [0,k_1[ \times \cdots \times [0,k_m[$. By \ref{bbu}, it follows that there exist elements $z_i \in (\llbracket 0,{k_1} \rrbracket \setminus \{k_1\})\otimes \cdots \otimes (\llbracket 0,{k_m} \rrbracket \setminus \{k_m\})$ and $w_i \in ]0,1[^{\deg(z_i)}$ such that $\phi(v_i) = [z_i,w_i]$. We have $[\beta(z_i),w_i]= |\beta|[z_i,w_i] = |\beta |\circ \phi (v_i) = [y_i, \frac{1}{2}]$ and hence $\deg(z_i) = 1$, $\beta (z_i) = y_i$ and $w_i = \frac{1}{2}$. Since $z_i \in (\llbracket 0,{k_1} \rrbracket \setminus \{k_1\})\otimes \cdots \otimes (\llbracket 0,{k_m} \rrbracket \setminus \{k_m\})$, also $d_1^0z_i \in (\llbracket 0,{k_1} \rrbracket \setminus \{k_1\})\otimes \cdots \otimes (\llbracket 0,{k_m} \rrbracket \setminus \{k_m\})$. For $i < l$, $c(d_1^ 1z_i) = c(z_i)$ because otherwise there would exist an index $q$ such that $c(d_1^ 1z_i) = d_q^ 1c(z_i)$. Lemma \ref{cnat} would then imply that $c(d_1^1y_i) = c(\beta(d_1^1z_i)) =  x_{\sharp}(c(d_1^1z_i)) = x_{\sharp}(d_q^ 1c(z_i)) = d_q^ 1x_{\sharp}(c(z_i)) = d_q^ 1c(\beta(z_i)) = d_q^1c(y_i)$, but $c(d_1^1y_i) = c(y_i)$. By lemma \ref{bcb}, since $z_i \in (\llbracket 0,{k_1} \rrbracket \setminus \{k_1\})\otimes \cdots \otimes (\llbracket 0,{k_m} \rrbracket \setminus \{k_m\})$,  $c(z_i) \in (\llbracket 0,1 \rrbracket\setminus \{1\})^ {\otimes m}$. Hence for $i<l$, $c(d_1^ 1z_i) \in (\llbracket 0,1 \rrbracket\setminus \{1\})^ {\otimes m}$. This implies, again by lemma \ref{bcb}, that, for $i < l$, $d_1^1z_i \in (\llbracket 0,{k_1} \rrbracket \setminus \{k_1\})\otimes \cdots \otimes (\llbracket 0,{k_m} \rrbracket \setminus \{k_m\})$. By lemma \ref{betainj},  the restriction of $\beta$ to the set $(\llbracket 0,{k_1} \rrbracket\setminus \{k_1\})\otimes \cdots \otimes (\llbracket 0,{k_m} \rrbracket\setminus \{k_m\})$ is injective. For $i < l$, $\beta (d_1^1z_i) = d_1^1y_i = d_1^ 0y_{i+1} = \beta(d_1^0z_{i+1})$ and therefore $d_1^1z_i = d_1^0z_{i+1}$. Thus, $\theta = z_{1\sharp}\cdots z_{l\sharp}$ is a well-defined path in $\llbracket 0,{k_1} \rrbracket \otimes \cdots \otimes \llbracket 0,{k_m} \rrbracket$. By construction, $|\beta|^{\mathbb I} (\theta) = \beta \circ \theta = \nu$. We have $d_1^0y_1 = \nu(0) = f_0(v)$ and hence $c(d_1^0y_1) = v$. It follows that $v \in x_{\sharp}((\llbracket 0,1 \rrbracket\setminus \{1\})^{\otimes m})$ and hence that $v = d_1^0\cdots d_1^0x$. We have  $\beta (0,\dots,0) = |\beta|_0\circ \phi_0 (0,\dots ,0) =f_0\circ |x_{\sharp}|_0 (0, \dots ,0) = f_0(d_1^0 \cdots d_1^0x) =   f_0(v) = \nu(0) = \beta (\theta (0))$ and therefore $\theta (0) = (0,\dots ,0)$.

By \ref{cnat}, $x_{\sharp}(c(z_l)) = c(\beta (z_l)) = c (y_l) = x$. Hence $c(z_l) = \iota_m$. We have $c(d_1^1z_l) = d_r^1c(z_l) = d_r^1\iota_m$ for some $r \in \{1,\dots ,m\}$. Otherwise we would have $c(d_1^1z_l) = c(z_l) = \iota_m$ and hence $c(d_1^1y_l) = c(\beta(d_1^ 1z_l)) = x_{\sharp}(c(d_1^ 1z_l)) = x_{\sharp}(\iota_m) = x$, which is impossible because $\deg(c(d_1^1y_l)) < \deg(x)$. By proposition  \ref{orderhomeo}, there exists an index $s \in \{1, \dots ,m\}$ such that $$\phi(\delta_r^1([0,1]^{m-1})) = [0,k_1]\times \cdots \times [0,k_{s-1}]\times \{k_s\} \times [0,k_{s+1}]\times \cdots \times [0,k_m].$$ Since $\phi^{-1}([d_1^1z_l,()]) \in \delta_r^1([0,1]^{m-1})$, we have $$d_1^1z_l \in \llbracket 0,{k_1} \rrbracket\otimes \cdots \otimes \llbracket 0,{k_{s-1}} \rrbracket\otimes  \{k_s\}\otimes \llbracket 0,{k_{s+1}} \rrbracket \otimes \cdots \otimes \llbracket 0,{k_m} \rrbracket .$$
Consider the path $\phi^{\mathbb I}((\hat d_r\iota_m)_{\sharp})$ in $\llbracket 0,{k_1} \rrbracket\otimes \cdots \otimes \llbracket 0,{k_m} \rrbracket$. Set $p = \length (\phi^{\mathbb I}((\hat d_r\iota_m)_{\sharp}))$. We have $\phi^{\mathbb I}((\hat d_r\iota_m)_{\sharp})(0) = \phi_0((\hat d_r\iota_m)_{\sharp}(0)) = \phi_0(0,\dots, 0) = (0,\dots ,0)$. Moreover,  $\phi^{\mathbb I}((\hat d_r\iota_m)_{\sharp})(p) = \phi_0((\hat d_r\iota_m)_{\sharp}(1)) =  \phi_0(d_1^0 \cdots d_1^0d_r^1\iota_m) = (0,\dots, 0,k_s,0, \dots ,0)$. Let $\rho$ be any path in $\llbracket 0,{k_1} \rrbracket\otimes \cdots \otimes \llbracket 0,{k_m} \rrbracket$ such that $\rho (0) = (0,\dots, 0,k_s,0, \dots ,0)$ and $\rho (\length(\rho)) = d_1^1z_l$. By \ref{detcube}, $\theta \sim \phi^{\mathbb I}((\hat d_r\iota_m)_{\sharp})\cdot \rho$. Set $\xi = |\beta|^{\mathbb I}(\rho) = \beta \circ \rho$. We  have $\nu =  |\beta|^{\mathbb I}(\theta) \sim |\beta|^{\mathbb I}(\phi^{\mathbb I}((\hat d_r\iota_m)_{\sharp})\cdot \nolinebreak \rho) = (|\beta|\circ \phi)^{\mathbb I}((\hat d_r\iota_m)_{\sharp}) \cdot \xi = (f \circ |x_{\sharp}|)^{\mathbb I}((\hat d_r\iota_m)_{\sharp})\cdot \xi = f^{\mathbb I}(|x_{\sharp}|^{\mathbb I}((\hat d_r\iota_m)_{\sharp}))\cdot \xi =  f^{\mathbb I}((\hat d_rx)_{\sharp})\cdot \xi$.  
\end{proof}

\subsection{Maximal and minimal elements}
Consider again a weak morphism of precubical sets $f\colon |P| \to |P'|$ that is a homeomorphism.

\begin{prop} \label{fminmax}
We have $f_0(M(P)) = M(P')$ and $f_0(m(P)) = m(P')$.
\end{prop}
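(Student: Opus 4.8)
The plan is to prove the equality $f_0(M(P)) = M(P')$ by two inclusions and then deduce $f_0(m(P)) = m(P')$ by the dual argument, replacing the operators $d_1^0$, $d_r^0$ by $d_1^1$, $d_r^1$ and ``minimal coordinate'' by ``maximal coordinate'' throughout. The whole proof rests on the carrier map $c$ of the preceding subsections, on the identity $c(f_0(a)) = a$ from \ref{cf0}, and on Lemma \ref{cdb}; I would not need the composite $\gamma \circ c$ here, only the carrier itself.

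For the inclusion $f_0(M(P)) \subseteq M(P')$ I would argue by contradiction. Take $v \in M(P)$ and suppose there is an edge $y \in P'_1$ with $d_1^0 y = f_0(v)$. By \ref{cf0} we have $c(d_1^0 y) = c(f_0(v)) = v$, and Lemma \ref{cdb} gives either $c(d_1^0 y) = c(y)$ or $c(d_1^0 y) = d_r^0 c(y)$ for some $r \in \{1, \dots, \deg(c(y))\}$. Since the carrier of an edge cannot be a vertex, $\deg(c(y)) \geq 1$ while $\deg(v) = 0$, so the first alternative is impossible; the second forces $\deg(c(y)) = 1$ and $d_1^0 c(y) = v$, exhibiting an edge of $P$ out of $v$ and contradicting $v \in M(P)$. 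Hence $f_0(v) \in M(P')$.

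The reverse inclusion $M(P') \subseteq f_0(M(P))$ is where the work lies, and I expect the main obstacle to be showing that the carrier of a maximal vertex is again a vertex, i.e.\ that $\deg(c(w')) = 0$. Given $w' \in M(P')$, I would rule out $n := \deg(c(w')) \geq 1$ as follows. Applying condition (2) of the definition of weak morphisms (see \ref{defweakmor}) to $c(w')_{\sharp}$ yields a morphism $\beta \colon \llbracket 0,k_1 \rrbracket \otimes \cdots \otimes \llbracket 0,k_n \rrbracket \to P'$ and a dihomeomorphism $\phi$ with $f \circ |c(w')_{\sharp}| = |\beta| \circ \phi$. Writing $f([c(w'),u]) = [w',()]$ with $u \in ]0,1[^n$, we get $[w',()] = |\beta|(\phi(u))$, and by Proposition \ref{minmaxcoord} the point $\phi(u)$ has no minimal and no maximal coordinate; as $[w',()]$ is a vertex, $\phi(u)$ is an interior vertex $b = (i_1, \dots, i_n)$ with $1 \leq i_j \leq k_j - 1$ for all $j$, and $w' = \beta(b)$. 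Replacing the first coordinate of $b$ by $[i_1, i_1+1]$ produces an edge $e$ of the subdivided cube with $d_1^0 e = b$, whence $\beta(e) \in P'_1$ satisfies $d_1^0 \beta(e) = \beta(b) = w'$, contradicting maximality of $w'$. Therefore $\deg(c(w')) = 0$, so $v := c(w')$ is a vertex and $f([v,()]) = [w',()]$ gives $f_0(v) = w'$.

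Finally I would verify $v \in M(P)$: if some $x \in P_1$ had $d_1^0 x = v$, then by Proposition \ref{fomega} the path $f^{\mathbb I}(x_{\sharp})$ would start at $f_0(v) = w'$ and have positive length, so its first edge would lie out of $w'$, again contradicting $w' \in M(P')$. Hence $v \in M(P)$ and $w' = f_0(v) \in f_0(M(P))$, which completes the double inclusion and the proof. The only genuinely delicate point is the degree computation for $c(w')$, where the weak-morphism structure (the subdivided-cube condition together with Proposition \ref{minmaxcoord}) is used to prevent the homeomorphism $f$ from folding the interior of a higher cell onto a vertex; every other step amounts to transporting a single outgoing edge across $f$ or across the carrier construction.
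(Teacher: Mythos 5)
Your proposal is correct and follows essentially the same route as the paper's own proof: the forward inclusion via the carrier map, \ref{cf0} and Lemma \ref{cdb}, the reverse inclusion by showing that the carrier of a maximal vertex is itself a vertex (using condition (2) of \ref{defweakmor} together with Proposition \ref{minmaxcoord} to place $\phi(u)$ in the open interior of the subdivided cube, then producing an outgoing edge through $\beta$ to contradict maximality), maximality of that vertex via $f^{\mathbb I}$, and the dual argument for minimal elements. The only cosmetic differences are that you name the specific outgoing edge $e$ in the first coordinate direction and spell out why the carrier of an edge cannot be a vertex, both of which the paper handles implicitly.
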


\begin{proof}
Consider an element $v \in P_0$ such that $f_0(v)$ is not maximal in $P'$. Then there exists an element $y \in P'_1$ such that $d_1^0y = f_0(v)$. This implies that $c(y) \in P_1$ and that $d_1^0c(y) = c(d_1^0y) = v$. Thus, $v$ is not maximal in $P$. It follows that $f_0(M(P)) \subseteq M(P')$. In order to establish the reverse inclusion, consider a maximal element $v' \in P'_0$. We show first that $c(v')$ is a vertex in $P$. Suppose that this is not the case and that $\deg(c(v')) = m >0$. Let $\beta\colon \llbracket 0,{k_1} \rrbracket\otimes \cdots \otimes \llbracket 0,{k_m} \rrbracket \to P'$ be the unique morphism of precubical sets and $\phi\colon |\llbracket 0,1 \rrbracket^{\otimes m}| \to |\llbracket 0,{k_1} \rrbracket\otimes \cdots \otimes \llbracket 0,{k_m} \rrbracket|$ be the unique dihomeomorphism such that $f\circ |c(v')_{\sharp}| = |\beta| \circ \phi$. Let $u \in ]0,1[^m$ be the uniquely determined  element such that $f([c(v'),u]) = [v',()]$. Consider elements $z \in \llbracket 0,{k_1} \rrbracket\otimes \cdots \otimes \llbracket 0,{k_m} \rrbracket$ and $w \in ]0,1[^{\deg(z)}$ such that $\phi(u) = [z,w]$. Since $[\beta (z), w] = |\beta|\circ \phi (u) = f\circ |c(v')_{\sharp}| (u) = f([c(v'),u]) = [v',()]$, we have $\deg (z) = 0$, $w = ()$ and $\beta (z) = v'$. Since $u \in ]0,1[^m$, $[z,()] = \phi(u) \in ]0,k_1[ \times \cdots \times ]0,k_m[$. Therefore $z \not= (k_1, \dots, k_m)$ and there exists a $1$-cube $y \in \llbracket 0,{k_1} \rrbracket\otimes \cdots \otimes \llbracket 0,{k_m} \rrbracket$ such that $d_1^0y = z$. It follows that $d_1^0\beta (y) = v'$, which is impossible. Therefore $c(v')$ is a vertex in $P$. We have $f_0(c(v')) = v'$. We show that $c(v') \in M(P)$. Suppose that this is not the case. Then there exists an element $x \in P_1$ such that $c(v') = d_1^0x$. Since $f^{\mathbb I}(x_{\sharp}) (0) = f_0(c(v')) = v'$, there exists a path in $P'$ of non-zero length beginning in $v'$. This contradicts the hypothesis that $v'$ is maximal in $P'$. Thus, $c(v') \in M(P)$. It follows that $M(P') \subseteq f_0(M(P))$ and hence that $f_0(M(P)) = M(P')$.

An analogous argument shows that $f_0(m(P)) = m(P')$.
\end{proof}

\subsection{Homeomorphic abstraction and trace equivalence}

Let $\A = (P,I,F,\lambda)$ and $\B = (P',I',F',\lambda')$ be two $M$-HDAs.

\begin{theor}
If $\A$ is weakly regular and $\A \stackrel{\approx}{\to} \B$, then $\A\stackrel{\sim_t}{\to} \B$.
\end{theor}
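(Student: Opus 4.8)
The plan is to take the weak morphism $f\colon |P| \to |P'|$ realizing $\A \stackrel{\approx}{\to} \B$, namely a homeomorphism with $f_0(I) = I'$ and $f_0(F) = F'$, and to verify directly that it satisfies the definition of a trace equivalence from \ref{simt}. Of the five required conditions, two are given outright, and two more are immediate from the fact that $f$ is a homeomorphism: Proposition \ref{fminmax} yields $f_0(M(P)) = M(P')$ and $f_0(m(P)) = m(P')$. In particular $f_0(m(P) \cup M(P)) \subseteq I' \cup F' \cup m(P') \cup M(P')$, so by the observation following the definition of the trace category the functor $f_*$ restricts to a functor $f_*\colon TC(\A) \to TC(\B)$. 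Thus the whole burden reduces to the fifth condition, that this functor be an isomorphism.

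First I would check bijectivity on objects. Since $f$ is a homeomorphism, $f_0$ is injective (distinct vertices $v\neq w$ give $[v,()]\neq [w,()]$, hence distinct images, hence $f_0(v)\neq f_0(w)$), and the four boundary equalities above show $f_0$ carries the generating set $I \cup F \cup m(P) \cup M(P)$ of $TC(\A)$ onto $I' \cup F' \cup m(P') \cup M(P')$, the generating set of $TC(\B)$. Hence $f_*$ is a bijection on objects, and it remains to handle the hom-sets. For injectivity, suppose $\omega,\nu$ are paths between objects of $TC(\A)$ with $f_*[\omega]=f_*[\nu]$, i.e. $f^{\mathbb I}(\omega)\sim f^{\mathbb I}(\nu)$. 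Here the weak regularity of $\A$ first enters: Proposition \ref{gammac} gives $\gamma\circ c(f^{\mathbb I}(\omega))\sim \gamma\circ c(f^{\mathbb I}(\nu))$, and the retraction Proposition \ref{retraction} identifies these with $\omega$ and $\nu$, so $\omega\sim\nu$ and $[\omega]=[\nu]$. For surjectivity, let $\omega'$ be a path between objects of $TC(\B)$; since $f_0$ is an object bijection, its endpoints are $f_0(v),f_0(w)$ for objects $v,w$ of $TC(\A)$. By the remark following the carrier-sequence proposition, $\gamma\circ c(\omega')$ runs from $v$ to $w$ and so represents a morphism of $TC(\A)$, and Proposition \ref{fJgammac}, again using weak regularity of $\A$, gives $f^{\mathbb I}(\gamma\circ c(\omega'))\sim\omega'$, whence $f_*[\gamma\circ c(\omega')]=[\omega']$. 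Therefore $f_*$ is an isomorphism of categories, $f$ is a trace equivalence, and $\A\stackrel{\sim_t}{\to}\B$.

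The routine parts are the object bijection and the functoriality bookkeeping; the substance of the argument is carried entirely by the three propositions \ref{retraction}, \ref{gammac} and \ref{fJgammac} concerning the map $\gamma\circ c$. The main obstacle, already surmounted in those results, is that $\gamma\circ c$ is a one-sided inverse of $f^{\mathbb I}$ on the nose (Proposition \ref{retraction}) but only a two-sided inverse \emph{up to dihomotopy}, and only under the weak regularity hypothesis (Proposition \ref{fJgammac}). This is exactly why weak regularity of $\A$ is assumed and where the real work of the section resides; the present theorem is essentially the packaging of those propositions into the language of the trace category.
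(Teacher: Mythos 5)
Your proposal is correct and follows essentially the same route as the paper: it invokes Proposition \ref{fminmax} for the equalities on maximal and minimal elements, and then uses Propositions \ref{retraction}, \ref{gammac} and \ref{fJgammac} to show that $[\omega] \mapsto [\gamma\circ c(\omega)]$ inverts $f_*$ on hom-sets (you phrase this as injectivity plus surjectivity, the paper as exhibiting the inverse map, but the substance is identical). No gaps.
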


\begin{proof}
Let $f\colon |P| \to |P'|$ be a weak morphism from $\A$ to $\B$ that is a homeomorphism and that satisfies $f_0(I) = I'$ and $f_0(F) = F'$. By \ref{fminmax}, $f_0(M(P)) = M(P')$ and $f_0(m(P)) = m(P')$. It follows that the functor $f_*\colon \vec \pi_1(P) \to \vec \pi_1(P')$ restricts to a functor $f_*\colon TC(\A) \to TC(\B)$ that is a bijection on objects. Consider vertices $v, w \in I \cup F \cup m(P) \cup M(P)$. By \ref{retraction}, \ref{gammac} and \ref{fJgammac}, the map $TC(\B)(f_0(v),f_0(w)) \to TC(\A)(v,w), [\omega] \mapsto [\gamma\circ c(\omega)]$ is a well-defined inverse of the map $f_*\colon TC(\A)(v,w) \to TC(\B)(f_0(v),f_0(w))$. Therefore the functor $f_*\colon TC(\A) \to TC(\B)$ is an isomorphism. Thus, $f$ is a trace equivalence from $\A$ to $\B$.
\end{proof}

\end{document}